\documentclass[a4paper,onecolumn,11pt,unpublished]{quantumarticle}
\pdfoutput=1

\usepackage[T1]{fontenc}
\usepackage[utf8]{inputenc}
\usepackage[numbers,sort&compress]{natbib}
\usepackage{hyperref}

\usepackage[table]{xcolor}
\usepackage{colortbl}

\usepackage{booktabs}
\usepackage{enumerate}
\usepackage{enumitem}
\usepackage{caption}
\usepackage{adjustbox}
\usepackage{longtable}
\usepackage{tikz}
\usepackage{braket}
\usetikzlibrary{arrows.meta,positioning,calc,fit,backgrounds}
\usepackage{pgfplots}
\pgfplotsset{compat=1.17}
\usepackage[numbers,sort&compress]{natbib}
\usepackage{hyperref}

\usepackage{pgfplots}
\pgfplotsset{compat=1.17}
\usepgfplotslibrary{groupplots}
\usepackage[table]{xcolor}
\usepackage{listings}
\usepackage{siunitx}
\usepackage{array}
\usepackage{multicol}
\usepackage{pifont}  
\usepackage{newunicodechar}
\newunicodechar{✖}{\ding{55}}
\usepackage[margin=1in]{geometry}   
\usepackage{setspace}              
\usepackage{bm}
\usepackage{amsmath,amssymb,amsthm}             
\usepackage{tikz}            
\usepackage{quantikz} 
\usepackage{enumitem}
\usepackage{graphicx} 
\usepackage{algpseudocode}
\usepackage{algorithm}
\usepackage{algorithmicx}
\usepackage[most]{tcolorbox}
\usepackage{pgfplots}
\pgfplotsset{compat=1.17}
\newcommand{\poly}[1]{\mathrm{poly}\bigl(#1\bigr)}

\usepackage{booktabs}
\usepackage[nameinlink,noabbrev]{cleveref}
\usetikzlibrary{
    arrows.meta,
    positioning,
    calc,
    fit
}
\usepackage{xcolor}                
\usepackage{lscape}                
\usepackage{caption}
\usepackage{subcaption}
\AtBeginEnvironment{abstract}{\raggedright}

\newtheorem{theorem}{Theorem}
\newtheorem{lemma}[theorem]{Lemma}
\newtheorem{proposition}[theorem]{Proposition}
\newtheorem{corollary}[theorem]{Corollary}
\newtheorem{definition}[theorem]{Definition}

\theoremstyle{remark}
\newtheorem{remark}[theorem]{Remark}

\newcommand{\OH}{\mathcal{H}_{\mathrm{OH}}}
\DeclareUnicodeCharacter{0301}{}
\newcommand{\spec}{\mathrm{spec}}

\usepackage[normalem]{ulem}

\makeatletter
\g@addto@macro\normalsize{%
    \abovedisplayskip 3pt plus 1pt minus 1pt%
    \abovedisplayshortskip 3pt plus 1pt minus 1pt%
    \belowdisplayskip 3pt plus 1pt minus 1pt%
    \belowdisplayshortskip 3pt plus 1pt minus 1pt%
}
\makeatother

\def\BibTeX{{\rm B\kern-.05em{\sc i\kern-.025em b}\kern-.08em
    T\kern-.1667em\lower.7ex\hbox{E}\kern-.125emX}}

\author{Chinonso Onah}
\affiliation{Volkswagen AG, Berliner Ring 2, 38440 Wolfsburg, Germany}
\affiliation{Department of Physics, RWTH Aachen University, 52056 Aachen, Germany}

\author{Stuart Hadfield}
\affiliation{USRA Research Institute for Advanced Computer Science (RIACS), CA, USA}

\author{Kristel Michielsen}
\affiliation{Forschungszentrum Jülich, Germany}
\affiliation{Universit\"at zu K\"oln, 50923 K\"oln, Germany}

\title{Separating Geometry From Interference in Constrained Quantum Optimization}

\begin{document}


\maketitle
\begin{abstract}

We study the separation of geometric effects from quantum interference in quantum optimization algorithms. Constrained optimization problems such as routing, assignment, matching, and scheduling are often encoded as product spaces of local variables, together with global feasibility penalties. The central algorithmic question we address is how a constraint-preserving mixing operator transports quantum amplitude across an exponential search space in the presence of local and global constraints. To that end, we develop a framework that separates three effects that are usually intermixed: transport geometry on the search space, coherent interference among transported amplitudes, and problem-dependent classical postprocessing.

We show that the mixing operator alone does not have a target-seeking ability. Concretely, the normalized distribution induced by its amplitude transport moves toward the typical bulk of the search space, corresponding to the distance profile of a uniformly random configuration, rather than concentrating toward a target configuration. Thus, quantum sampling advantage may only arise when the phases of the many computational paths reaching a target configuration are sufficiently aligned for their amplitudes to reinforce. We show that, when the cost phases are engineered so that these paths add coherently, a number of circuit alternations growing only logarithmically with problem size suffices to convert the sum of their absolute contributions into a lower bound on the target amplitude, yielding a certified success probability independent of the ambient Hilbert-space dimension, the search-space size, or the feasible-set cardinality.

We develop applications to problem-specific transpilation diagnostics, scalable hardware probes, constraint-induced classical maps of quantum-generated samples, and the attribution of solution quality between the quantum distribution and classical post-processing in hybrid quantum-classical workflows. Although motivated by constrained quantum optimization, the mechanism applies whenever the one-step transfer kernel factorizes across coordinates or blocks. The resulting formalism connects constrained quantum optimization directly to lumped Markov chains, Krawtchouk polynomials, and the association-scheme viewpoint for distance-partitioned product spaces from classical coding theory.
\end{abstract}

\section{Introduction}
\label{subsec:motivation-radialization}

Constrained  optimization is characterized by problems whose difficulty comes from both objective optimization and combinatorial constraint structure.
Routing, assignment, matching, scheduling, packing, and covering are classical examples. In their natural forms, these problems are defined over discrete configuration spaces restricted by combinatorial constraints~~\cite{KorteVygen,Schrijver2003}. In quantum
optimization~\cite{abbas2024challenges,Lucas2014,Farhi2014QAOA,Lucas2014,Hadfield2019AOA,onahce}, the same structure appears through an encoded search space. One first identifies a logical configuration space of admissible local choices, and then represents that space in qubits or qudits through an encoding. The central question is then how the quantum dynamics transports amplitude across this structured search space \cite{Farhi2014QAOA,Hadfield2019AOA,Wang2020XYMixers,Hadfield_2022,Fuchs2022ConstrainedMixers,sawaya2023encoding,onahfund,onahce}.

A simple example is a product space of \(m\) local registers,
\begin{equation}
\label{eq:product-space}
\mathcal X=[n]^m .
\end{equation}
Here \(n\) is the number of local choices available in each register, and \(m\) is the number of registers. In a qudit implementation, one may represent each local choice directly by a qudit basis state. In a qubit implementation, the same logical alphabet can be represented, for instance, by a one-hot block or more generally by a fixed-Hamming-weight \(k\)-hot block~\cite{Hadfield2019AOA}. In the one-hot case, the encoded Hilbert space is \[ \OH=(\mathcal H_1)^{\otimes m}, \qquad \mathcal H_1 = \operatorname{span}\{\ket{e_1},\dots,\ket{e_n}\}. \] We identify the computational basis states of \(\OH\) with words \(x=(x_1,\dots,x_m)\in[n]^m\), where \[ \ket{x} = \ket{e_{x_1}}\otimes\cdots\otimes\ket{e_{x_m}}. \] Thus \(\mathcal \mathcal X\) denotes the classical product search space, while \(\OH\) denotes its Hilbert-space realization. This product form appears naturally in routing, scheduling, assignment, coloring, matching, and allocation problems \cite{Hadfield2019AOA,onahce}. 

In routing, the blocks may represent tour positions and the symbols may represent customers or cities. In scheduling, the blocks may represent jobs, machines, or time slots. In assignment and allocation, the blocks may represent decision locations and the symbols may represent resources or selected items. Global constraints then appear as penalties, feasibility projectors, or decoding rules on the product space. Classical formulations of these problems often expose the same structure through assignment tables, fixed-margin constraints, contingency-table fibers, or transportation representations
\cite{MillerTuckerZemlin1960,DiaconisSturmfels1998,DeLoeraOnn2004,
DeLoeraOnn2006,SlavkovicZhuPetrovic2015}.

Accordingly, we consider constrained optimization problems in which fixed-cardinality constraints are encoded directly into local blocks, while the remaining constraints impose global relations among those blocks. Quantum algorithms exploiting this problem structure can be realized on gate-based universal quantum processors through the 
quantum alternating operator ansatz (QAOA) framework~\cite{Hadfield2019AOA}, and in quantum annealing through the cardinality-constrained driver Hamiltonians introduced in
Refs.~\cite{HenSpedalieri2016,HenSarandy2016}. In the gate-based setting, the mixer unitary redistributes amplitude across the encoded search space; in quantum annealing, the driver Hamiltonian generates the corresponding constraint-preserving evolution. In both settings, the mixer or driver must preserve the encoded subspace and provide transport across regions of the search space relevant to feasible or high-quality configurations.

The shell formalism developed below provides a compact description of this transport on
the resulting encoded product space.  The basic object used throughout our analysis is the generalized Hamming distance which we define presently.  Fix a target configuration \(y\in \mathcal X\).  The Hamming distance from
\(y\) is
\begin{equation}
\label{eq:hamming-distance}
d(x,y)
=
\bigl|\{b\in\{1,\dots,m\}:x_b\neq y_b\}\bigr|.
\end{equation}
The resulting Hamming shells are
\[
S_r(y)=\{x\in \mathcal X:\ d(x,y)=r\},
\qquad r=0,\dots,m.
\]
Thus \(S_0(y)=\{y\}\), while \(S_r(y)\) contains the configurations that
differ from \(y\) in exactly \(r\) local registers. The corresponding unitary dynamics acts on the Hilbert space \(\mathcal H_{\mathrm{OH}}\) while the product set \(\mathcal X=[n]^m\) labels the computational basis of this space.  After this basis identification, the mixer matrix elements
\[
\langle z|U_M(\beta)|x\rangle,
\qquad x,z\in \mathcal X,
\]
define a basis-indexed amplitude-transfer kernel on \( \mathcal X\).  The shell reduction we propose pushes this transfer kernel through the distance map
\[
R_y:\mathcal X\to\{0,1,\dots,m\},
\qquad
R_y(x)=d(x,y).
\]
As a result, the coherent Hilbert-space transport is represented by a shell-resolved
transfer law on the quotient variable \(R_y\). This converts a high-dimensional transport problem into a shell-resolved one. Instead of tracking transfer on the Hilbert space \(\mathcal H_{\mathrm{OH}}\), which is combinatorially expensive, one tracks how mass moves between distance shells
around the target configuration \(y\). This gives a
reduced language for studying constrained mixers through radial drift, shell-hitting behavior, transfer concentration, dependence on depth and mixer angle, and compatibility with later phase-coherent layers. In end-to-end
hybrid quantum--classical workflows, the same shell description tracks how much amplitude is transported across neighborhoods of a target configuration.

A key observation is that the shell reduction obtained by pushing the mixer
transfer kernel through the distance map \(R_y\) isolates the phase-blind
geometric transport problem from the subsequent phase-coherent amplitude
analysis. This reflects the layered structure of the quantum approximate
optimization algorithm, the quantum alternating operator ansatz, and related
variants~\cite{Farhi2014QAOA,Hadfield2019AOA}. It first computes the mixer-induced transfer structure on the encoded manifold without requiring a phase-coherent analysis of the full path sum. Phase control then enters as a separate question of whether the path phases generated by the diagonal phase operators and mixer matrix elements align sufficiently for the geometric shell-transfer mass to survive in the target amplitude. Under the lattice-normalized phase-control regime developed in Section~\ref{subsec:shell-transfer-to-amplitude}, this separation converts radial transfer mass into a certified amplitude lower bound, and hence into a success-probability guarantee whose scale is independent of the ambient Hilbert-space dimension, the encoded-manifold size, or the feasible-set cardinality.

We analyze this mechanism in the setting of the recently introduced Constraint-Enhanced Quantum Approximate Optimization Algorithm (CE--QAOA)
\cite{onahce}. CE--QAOA provides the concrete constrained-optimization setting because it is built on a kernel with a product \(k\)-hot encoding, a compatible constraint-preserving mixer, and a diagonal cost structure. However, the resulting formalism applies more broadly whenever the configuration space has a persistent product structure and the mixer transfer factorizes across blocks. As a consequence, the resulting framework connects constrained quantum optimization to several applications where Hamming shells, product alphabets, distance-partitioned state spaces, lumped Markov chains, and the coding-theoretic language of association schemes \cite{MacWilliamsSloane1977,DelsarteLevenshtein1998} are relevant tools.

The viewpoint is consistent with earlier distance-resolved analyses of quantum walks on highly symmetric graphs, where the full dynamics collapses to a smaller invariant structure \cite{MooreRussell2002,Childs2004SpatialSearch,Childs2008WalkCorrespondence,KroviBrun2007Quotient}. Here the analogous reduction is developed for constrained quantum optimization on product spaces. The complete symmetric block-\(XY\) 
mixer case gives the clean exact model,
while other product-space mixer geometries suggest refined distance statistics, such as path distance, cyclic distance, edge classes, or association-scheme data.

\begin{figure}[t]
\centering
\resizebox{\textwidth}{!}{%
\begin{tikzpicture}[
    font=\small,
    main/.style={
        draw,
        rounded corners=2pt,
        minimum height=1.7cm,
        text width=2.95cm,
        align=center,
        inner sep=6pt,
        fill=gray!8
    },
    sub/.style={
        draw,
        rounded corners=2pt,
        minimum height=1.25cm,
        text width=3.15cm,
        align=center,
        inner sep=5pt,
        fill=gray!3,
        font=\footnotesize
    },
    flow/.style={
        -{Latex[length=3mm,width=2mm]},
        thick
    },
    link/.style={
        -{Latex[length=2.5mm,width=1.8mm]},
        dashed,
        thick
    }
]

\node[main] (constraints) {
    \textbf{Problem structure and constraints}\\[2pt]
    Local cardinality constraints\\
    Global relations among blocks
};

\node[main, right=0.5cm of constraints] (encoding) {
    \textbf{Encoded product space}\\[2pt]
    One-hot, \(k\)-hot, or related blocks\\
    \(\mathcal X=\prod_{b=1}^{m}X_b\)
};

\node[main, right=0.5cm of encoding] (transport) {
    \textbf{Mixer-induced transport}\\[2pt]
    Shell-transfer coefficients\\
    Absolute path mass\\
    Radial dynamics
};

\node[main, right=0.5cm of transport] (phase) {
    \textbf{Phase-sensitive interference}\\[2pt]
    Exact path expansion\\
    Phase alignment\\
    Target amplitude
};

\node[main, right=0.5cm of phase] (samples) {
    \textbf{Measured samples}\\[2pt]
    Born distribution on \(\mathcal \mathcal X\)\\
    Feasible and infeasible outputs
};

\node[main, right=0.5cm of samples] (classical) {
    \textbf{Problem-specific classical maps}\\[2pt]
    Hardware probes\\
    Feasibility repair\\
    Application-level interpretation
};

\draw[flow] (constraints) -- (encoding);
\draw[flow] (encoding) -- (transport);
\draw[flow] (transport) -- 
node[above, font=\scriptsize, align=center]
{}
(phase);
\draw[flow] (phase) -- (samples);
\draw[flow] (samples) -- (classical);

\node[sub, below=1.55cm of encoding, xshift=-1.9cm] (coding) {
    \textbf{Coding-theoretic tools}\\[2pt]
    Distance distributions\\
    Krawtchouk structure\\
    Weight enumerators\\
    Orthogonal polynomials and\\
    association schemes
};

\node[sub, below=1.25cm of transport] (mixers) {
    \textbf{Logical mixer diagnostics}\\[2pt]
    Complete, path, ring, star, and sparse
    \(XY\)-mixer geometries\\
    Transport range, bottlenecks, and reduced statistics
};

\node[sub, below=1.25cm of phase, xshift=0.75cm] (trotter) {
    \textbf{Transport focused mixer diagnostics}\\[2pt]
    Shell-kernel error\\
    Lumpability defect\\
    Transport-preserving logical depth\\
    Deviation from shell-transport guarantees
};

\node[sub, below=1.25cm of samples, xshift=0.9cm] (stats) {
    \textbf{Constraint statistics and fibres}\\[2pt]
    Pushforward statistics\\
    Penalty fibres\\
    Structured classical maps\\
    Contingency-tables
};

\node[sub, below=1.25cm of classical, xshift=0.79cm] (repair) {
    \textbf{Hybrid workflows}\\[2pt]
    Violation profiles and repairability\\
    Raw versus repaired success\\
    Attribution of classical post-processing gains
};

\draw[link] (encoding.south west) to[out=-120,in=20] (coding.north);
\draw[link] (transport.south) to[out=-115,in=0] (coding.north);

\draw[link] (transport.south) -- (mixers.north);
\draw[link] (phase.south) to[out=-90,in=90] (trotter.north);

\draw[link] (classical.south) to[out=-115,in=20] (stats.north);

\draw[link] (classical.south) -- (repair.north);

\node[
    draw,
    rounded corners=3pt,
    dashed,
    inner sep=8pt,
    fit=(constraints)(encoding)(transport),
    label={[font=\footnotesize]above:\textbf{Geometric transport layer}}
] {};

\node[
    draw,
    rounded corners=3pt,
    dashed,
    inner sep=8pt,
    fit=(phase),
    label={[font=\footnotesize]above:\textbf{Coherent layer}}
] {};

\node[
    draw,
    rounded corners=3pt,
    dashed,
    inner sep=8pt,
    fit=(samples)(classical),
    label={[font=\footnotesize]above:\textbf{Post-measurement layer}}
] {};

\end{tikzpicture}%
}
\caption{
Layered structure of the framework. Problem constraints determine the encoded
product space and the transport requirements imposed on the mixer. The mixer
generates shell-resolved absolute path mass, while the diagonal phase layers
determine whether this mass contributes constructively to the target
amplitude. Measurement produces a classical distribution on the product
space, after which problem-specific classical maps expose application
structure, provide hardware probes, and guide feasibility-restoring repair.
The lower boxes summarize the principal analytical and application
consequences developed in this work.
}
\label{fig:framework-synthesis}
\end{figure}

\subsection{Related Work}

Several analytical approaches have been developed to understand QAOA and related variational quantum algorithms beyond numerical optimization. One line of work studies the locality structure of shallow QAOA circuits and shows that low-depth QAOA only probes bounded neighborhoods, leading to graph-visibility limitations and worst-case indistinguishability results~\cite{FarhiGamarnikGutmannTypical,FarhiGamarnikGutmannWorstCase}. Related analyses evaluate QAOA performance on locally tree-like graph families through recursive formulas and local weak-limit methods, including large-girth regular MaxCut instances, the Sherrington--Kirkpatrick model, and sparse random hypergraph models~\cite{BassoFarhiMarwahaVillalongaZhou2022}. Other work proves concentration and limitation results for QAOA output states using polynomial approximation methods, including concentration of Hamming-weight-type observables and limitations under overlap-gap assumptions~\cite{AnshuMetger2023}. Complementary analytical and data-driven approaches study parameter concentration, fixed-angle performance, and variational landscape structure through gradient statistics and information content~\cite{AkshayRabinovichCamposBiamonte2021,WurtzLove2021,PerezSalinasWangBonetMonroig2024}.

Within the quantum alternating-operator ansatz paradigm, a closely related line of work derives exact expressions for expectation values of layered quantum optimization circuits in terms of algorithm parameters, cost-gradient operators, and cost-difference functions~\cite{Hadfield_2022}. Closest to the present setting are prior analytical CE--QAOA studies of
feasibility limitations and finite-shot success guarantees in globally constrained problems~\cite{onahfund,onahfinite}. The present work identifies a complementary mechanism at the level of the sampled distribution generated by constrained product-space dynamics and exhibits an explicit mechanism by which product-space transport and coherent phase control can raise the target-sampling probability from the \(n^{-m}\) scale of uniform sampling to a finite scale.

\subsection{Guide to main results}
On the product space \(\mathcal X=[n]^m\) defined in Eq.~\eqref{eq:product-space}, we use the Hamming distance from the target configuration \(y\) defined in Eq.~\eqref{eq:hamming-distance} to resolve the
mixer transport into distance shells. For the complete-graph block-\(XY\)
mixer, the one-block transition amplitude depends only on whether two local
symbols agree or differ as established in Lemma \ref{lem:single-block-xy-amplitudes}. Consequently, the total amplitude transfer
between two shells depends only on their shell indices. This gives exact
shell-transfer coefficients and a finite radial recursion, as established in
Proposition~\ref{prop:shell-transfer-generating} and
Theorem~\ref{thm:exact-shell-transfer-recursion}.

After normalization, the shell-transfer coefficients define a finite Markov process on the distance shells and provide a probabilistic description of the mixer dynamics. The formal results are collected in Appendix~\ref{app:measure-shell-reduction}, ranging from the finite Fubini principle in Lemma~\ref{lem:fubini-radial-shell} to the time-inhomogeneous Markov recursion in Theorem~\ref{thm:depth-p-radial-markov}. The resulting probabilistic interpretation shows that the mixer dynamics alone has no target-seeking bias, since the normalized shell process relaxes toward the ordinary product-space bulk rather than toward any target configuration. In this setting, quantum sampling advantage therefore requires a phase structure that coherently converts the available path mass into Born probability on useful configurations. 

To show how such a phase structure can be engineered, we introduce an ideal lattice normalization of the diagonal cost spectrum and impose the common-arc phase-alignment condition of Theorem~\ref{thm:phase-aligned-path-sum-lower-bound}. Under these conditions, the radial transfer mass yields a lower bound on the amplitude of the target configuration \(y\). Together with the constructive mixer-angle and depth conditions of Proposition~\ref{prop:constructive-radial-mass-lower-bound}, this gives a dimension free lower bound on the probability of sampling \(y\) despite the exponentially large product space \(|\mathcal X|=n^m\). The analysis therefore exhibits an explicit mechanism by which product-space transport and coherent phase control can raise the target-sampling probability from the \(n^{-m}\) scale of uniform sampling to a finite scale.

In the spirit of problem-algorithm codesign ~\cite{Li2021CoDesign}, this separation provides a concrete analytical design principle where the mixer should be chosen to generate sufficient transport mass across the relevant regions of the product space, while the phase operator should be designed to preserve and coherently harvest that mass on useful configurations. Separating these roles makes the source of a quantum sampling advantage explicit and allows each component of the algorithm to be analyzed and engineered independently for a broad class of constrained optimization problems. Figure~\ref{fig:framework-synthesis} summarizes the layered structure of the framework and shows how geometric transport, coherent interference, probabilistic interpretation, transpilation diagnostics, and problem-specific classical maps fit together within a single architecture.

\paragraph{Guide to the paper.} The rest of the paper is organized as follows. Section~\ref{subsec:subspace-constrained-dynamics} introduces \(XY\)-mixer geometries and encoded quantum product spaces. Section~\ref{sec:results} develops the shell-transfer recursion. Section~\ref{sec:main2} then separates the phase-sensitive part of the analysis from the shell algebra by showing how phase alignment converts radial transfer mass into amplitude lower bounds, with the probabilistic shell-kernel formalism collected in Appendix~\ref{app:measure-shell-reduction}. Section~\ref{sec:applications} develops the main consequences of the framework, including the normalized radial process, finite-size mixer diagnostics, shell-transport preservation under transpilation, and
problem-dependent classical maps for hardware probing and feasibility repair. Section~\ref{sec:discussion} discusses the broader product-space interpretation and summarizes the main conclusions and extensions.

\section{Constrained product-spaces in quantum optimization}
\label{subsec:subspace-constrained-dynamics}

\subsection{Block encodings and product configuration spaces}
\label{subsec:block-encodings-product spaces}

Many constrained optimization problems are naturally described by discrete decision variables associated with positions, vertices, jobs, vehicles, time slots, or resources. Each variable represents one local choice from a finite set of admissible values. A standard way to encode such variables on qubits is to use fixed-Hamming-weight registers, especially one-hot and \(k\)-hot encodings \cite{Hadfield2019AOA,Fuchs2022ConstrainedMixers, sawaya2023encoding,tsvelikhovskiy2024symmetries,onahce}. The
simplest case is a one-hot block of local size \(n\),
where only one of \(n\) alternatives is selected. For \(m\) decision variables, each taking values in \([n]\), the classical configuration space is the product alphabet space (\cref{eq:product-space}) 
\[
\mathcal X=[n]^m .
\]
Here \(n\) is the local alphabet size of each block, and \(m\) is the number of
product blocks. A configuration \(x=(x_1,\dots,x_m)\in \mathcal X\) records the value \(x_b\in[n]\) of each classical decision variable. In a one-hot qubit realization, the value \(x_b=r\) is encoded by the basis vector \[ \ket{e_r} = \ket{0\cdots 010\cdots 0}, \] whose single excitation occupies the \(r\)-th position of an \(n\)-qubit block. We use \emph{variable} for the classical coordinate \(x_b\), \emph{local register} for its logical quantum representation, and \emph{one-hot block} for this \(n\)-qubit realization. With \(m\) one-hot blocks, the corresponding encoded Hilbert space is \[ 
\mathcal H_{\mathrm{OH}} = \mathcal H_1^{\otimes m}, \qquad \mathcal H_1 = \operatorname{span}\{\ket{e_1},\dots,\ket{e_n}\}, 
\] 

and the classical word \(x\in \mathcal X\) labels the computational basis state 
\[ 
\ket{x} = \ket{e_{x_1}}\otimes\cdots\otimes\ket{e_{x_m}}. 
\] 

Thus \(\mathcal X=[n]^m\) is the classical product set, while \(\mathcal H_{\mathrm{OH}}\) is its Hilbert-space realization. \footnote{Here and throughout, the subscript \(k\) in
\(\mathcal H_k\) denotes the fixed Hamming weight of the local block, rather
than a register index.} 

This product form appears throughout constrained quantum-optimization
encodings in various forms. For example, in a position-based encoding of the  well-known 
Traveling Salesman 
Problem (TSP)~\cite{MillerTuckerZemlin1960}, \(m\) can represent tour positions and \(n\) the set of cities \cite{Lucas2014}; in graph coloring, \(m\) can represent vertices and \(n\) the available colors
\cite{Wang2020XYMixers,Lucas2014}; in vehicle-routing or scheduling variants, blocks can represent customer positions, time slots, vehicles, or assignment locations \cite{onah2026cvrp,TothVigo2014VRP}. The problem constraints then appear as additional diagonal penalty terms, feasibility projectors, or decoding rules on the product space \cite{Hadfield2019AOA,onahce}.

More generally, a local register may be realized by an \(n\)-qubit \(k\)-hot block. Its logical state space is the fixed-Hamming-weight subspace
\begin{equation}
\label{eq:multi-ecit}
\mathcal H_k
=
\mathrm{span}\{\ket{x}\in\{0,1\}^n:\|x\|_1=k\},   
\end{equation}
and the local basis can be identified with the family of
\(k\)-subsets of \([n]\). Thus one block has effective alphabet
\(\binom{[n]}{k}\), with local dimension \(\binom nk\). With \(m\) such blocks,
the classical product configuration space is

\begin{equation}
\label{eq:multi-ecit-cardinality}
\mathcal X_k
=
\binom{[n]}{k}^{\,m}. 
\end{equation}
The corresponding encoded Hilbert space is 
$\mathcal H_k^{\otimes m}$.

This form is useful for cardinality-constrained problems, multiple-choice
allocation problems, matching variants, portfolio or resource-allocation
models, and related settings in which each local variable selects \(k\) items rather than one \cite{Hadfield2019AOA,Cook2019MaxKVertexCover,B_rtschi_2019}.

The role of a constrained mixer is to preserve the encoded Hilbert space while transporting amplitudes between admissible basis configurations \cite{Hadfield2019AOA}. Thus, in an alternating quantum optimization circuit, mixer layers redistribute amplitude across the encoded space, while diagonal phase operators attach problem-dependent phases to the basis states. Measurement in the computational basis then produces samples from the resulting distribution, which may be evaluated directly or passed through a problem-specific decoding map. The analytical task is therefore to understand how much mixer-induced transfer reaches useful regions of the product space and how much of that transfer survives the subsequent phase interference. For the complete-graph block-\(XY\) mixer considered below, Lemma~\ref{lem:single-block-xy-amplitudes} reduces the one-block transition kernel to two amplitudes. This leads to the exact shell-transfer generating function in
Proposition~\ref{prop:shell-transfer-generating}, the radial recursion in
Theorem~\ref{thm:exact-shell-transfer-recursion}, and the phase-sensitive
amplitude bound in Theorem~\ref{thm:phase-aligned-path-sum-lower-bound}.


\subsection{XY mixers as constrained product-space transport geometries}
\label{subsec:xy-mixers-product spaces}

Number-conserving \(XY\)-type mixers provide a large family of constrained product space dynamics in quantum optimization~\cite{Hadfield2019AOA,Wang2020XYMixers}. The construction is based on the fact that the two-qubit exchange generator
\[
X_iX_j+Y_iY_j
=
2(\sigma_i^+\sigma_j^-+\sigma_i^-\sigma_j^+)
\]
commutes with the total excitation-number operator and therefore preserves Hamming weight. Hence, given a \emph{mixer interaction graph} \(G_{\mathrm{mix}}=(V,E_{\mathrm{mix}})\), distinct in general from any graph appearing in the optimization instance, we define \[ H_{XY}(G_{\mathrm{mix}}) = \sum_{(i,j)\in E_{\mathrm{mix}}} \bigl(X_iX_j+Y_iY_j\bigr). \] Because every exchange term commutes with the excitation-number operator on its connected component, \(H_{XY}(G_{\mathrm{mix}})\) preserves the Hamming weight of the qubits in each connected component of \(G_{\mathrm{mix}}\). The choice of \(G_{\mathrm{mix}}\) therefore specifies which encoded basis states are directly connected by the mixer dynamics \cite{Hadfield2019AOA,Wang2020XYMixers,KordonowyLeipold2026XYLie,leipold2026imposing}. The initial state must lie in the invariant sector preserved by the chosen mixer~\cite{Hadfield2019AOA,tsvelikhovskiy2024symmetries}. For a one-hot block, common choices include a one-hot computational basis state or the uniform one-excitation state
\[ \ket{W_n} = \frac{1}{\sqrt n}\sum_{r=1}^{n}\ket{e_r}. \] For \(m\) one-hot registers, the corresponding product initialization is 

\[ \ket{s_0} = \ket{W_n}^{\otimes m}.
\] 

More generally, a \(k\)-hot register may be initialized in a fixed-weight basis state or a Dicke state. The mixer then transports amplitude entirely within the fixed-Hamming-weight sector selected by the initialization \cite{Hadfield2019AOA,B_rtschi_2019,onahce}.

Different variants of \(XY\) mixers arise from the need to match the mixer dynamics to the constraint structure of the optimization problem. Local cardinality constraints determine the admissible state space of each decision variable, such as a fixed-cardinality \(k\)-hot sector, and therefore determine the class of block-\(XY\) dynamics that preserves that sector~\cite{Hadfield2019AOA}. The remaining constraints couple the blocks and determine the geometry of feasible configurations within the full product space. Routing, assignment, scheduling, and capacity constraints impose different global relations among otherwise local choices, thereby creating problem-dependent transport bottlenecks. The mixer topology must therefore be chosen to provide transport compatible with this local--global constraint geometry.

For example, in a position-based TSP encoding, the one-hot blocks represent local city choices, while the permutation constraints couple all tour positions globally. Path- or ring-\(XY\) mixers generate only short-range transport among the local symbols and cannot efficiently overcome the probability bottlenecks induced by these global permutation constraints. A complete-graph or other problem-aligned mixer instead supplies the nonlocal
transport required to connect the relevant regions of the product space. Partitioned or Trotterized realizations can reduce the simultaneous interaction requirements of the selected logical mixer, while multi-angle and extended mixers introduce additional variational control or excitation-preserving generators. These constructions preserve related constraint sectors while
producing different transition kernels, spectra, reachability properties, and implementation costs
\cite{Hadfield2019AOA,Wang2020XYMixers,He2023A,
KordonowyLeipold2026XYLie,
awasthi2026constraintpreservingxymixerstrotterized,Ruan2025XYCD}. A broader catalogue of these constructions and their product-space consequences is provided in Appendix~\ref{app:xy-mixer-catalogue}, Table~\ref{tab:xy-mixer-variants-product spaces}. The present analysis focuses on the complete-graph block-\(XY\) mixer because its permutation symmetry produces an exact Hamming-shell reduction. Other mixer geometries generally require refined quotient variables, including path distance, cyclic distance, graph distance, edge classes, or orbit data. Their finite-size transport properties are compared in Section~\ref{subsec:numerical-transport-diagnostics}.

\subsection{The OFM kernel and CE--QAOA dynamics}
\label{subsec:ofm-kernel-ceqaoa-dynamics}

The quantum dynamics studied here is motivated by the
\emph{quantum alternating-operator ansatz} paradigm and symmetry considerations, which acts invariantly on constraint projectors and thereby confines evolution to structured subspaces
\cite{Hadfield2019AOA,Hadfield_2022,Fuchs2022ConstrainedMixers,sawaya2023encoding,tsvelikhovskiy2024symmetries}.
The \emph{Constraint-Enhanced QAOA} (CE--QAOA) introduced in
Ref.~\cite{onahce} follows this tradition but makes problem--algorithm alignment explicit. It fixes a product k-hot encoding, a compatible mixer family, and a matching initial state, and then exploits additional symmetries coming from the hard
constraints of the optimization problem as a codesign choice \cite{Li2021CoDesign}. We recall the following definition.

\begin{definition}[The Onah--Firt--Michielsen (OFM) kernel\cite{onahce}]
\label{def:kernel-requirement}
An optimization instance belongs to the \emph{ Onah--Firt--Michielsen kernel} if there
exist integers \(n,m\in\mathbb N\) and the one-hot encoder
\(\mathsf E_{\mathrm{1hot}}\) such that the dynamics is initialized in the
fixed-Hamming-weight space
\[
\OH=(\mathcal H_1)^{\otimes m},
\qquad
\mathcal H_1=\mathrm{span}\{\ket{e_1},\dots,\ket{e_n}\},
\]
with one excitation per block. We identify the computational basis states of \(\OH\) with words \(x=(x_1,\dots,x_m)\in[n]^m\), where \[ \ket{x} = \ket{e_{x_1}}\otimes\cdots\otimes\ket{e_{x_m}}. \] In this encoded basis, the problem Hamiltonian decomposes as
\[
H_C=H_{\mathrm{pen}}+H_{\mathrm{obj}} + H^{'}_{\mathrm{pen}},
\]
where \(H_{\mathrm{obj}}\) is diagonal in the computational basis and encodes
the objective, while \(H_{\mathrm{pen}}\) is diagonal and satisfies:
\begin{enumerate}[label=\textup{(\alph*)}, leftmargin=2.2em]
\item \emph{Penalty structure.}
\(H_{\mathrm{pen}}\) is a sum of squared affine one-hot/degree/capacity
penalties, optionally together with linear forbids, with integer coefficients
bounded by \(\mathrm{poly}(n)\). Consequently,
\[
\spec(H_{\mathrm{pen}})\subseteq\{0,1,\dots,t_{\max}\},
\qquad
t_{\max}=\mathrm{poly}(n).
\]

\item \emph{Pattern symmetry.}
\(H_{\mathrm{pen}}\) is invariant under block permutations \(S_m\) and global
symbol relabelings \(S_n\). Hence the configuration space decomposes into level
sets
\[
L_t=\{x:\,\langle x|H_{\mathrm{pen}}|x\rangle=t\}
\]
that are preserved setwise. $H^{'}_{\mathrm{pen}}$ is diagonal and captures penalty terms that break the above symmetry.

\item \emph{Mixer and initial state.}
The block-local normalized XY mixer is
\begin{equation}
\label{eq:mixer}
\widetilde H_{XY}^{(b)}
=
\frac{1}{n-1}\sum_{1\le u<v\le n}
\bigl(X_u^{(b)}X_v^{(b)}+Y_u^{(b)}Y_v^{(b)}\bigr),
\end{equation}
with \(\|\widetilde H_{XY}^{(b)}\|=O(1)\) on each block. The initial state is
the uniform one-hot product
\begin{equation}
\label{eq:initial-state}
\ket{s_0}=\ket{s_{\mathrm{blk}}}^{\otimes m},
\qquad
\ket{s_{\mathrm{blk}}}=\frac{1}{\sqrt n}\sum_{r=1}^{n}\ket{e_r}.
\end{equation}
\end{enumerate}
\end{definition}

We use the name Onah--Firt--Michielsen kernel to distinguish the encoded structural class from the CE--QAOA ansatz acting on it. The OFM kernel refers to the product encoded constraint structure introduced in Ref.~\cite{onahce}, whereas CE--QAOA denotes the variational circuit built from a compatible mixer, diagonal phase operators, and an initial state on that encoded space. The definition above is stated for one-hot blocks. Many constrained
optimization problems naturally use \(k\)-hot blocks, where each local register has fixed Hamming weight \(k\). In that case one replaces \(\mathcal H_1\) by \(\mathcal H_k\) as in Eq. \ref{eq:multi-ecit} and chooses a number-conserving block-local mixer such as a suitable \(XY\)-type Hamiltonian, normalized to constant operator norm on \(\mathcal H_k\). The corresponding initial state is a tensor product of Dicke
states~\cite{B_rtschi_2019,yu2026efficient},
\[
\ket{s_{\mathrm{blk}}^{(k)}}
=
\ket{D^{(n)}_{k}}
=
\binom{n}{k}^{-1/2}
\sum_{\substack{x\in\{0,1\}^n\\ \|x\|_1=k}}
\ket{x},
\qquad
\ket{s_0}
=
\bigl(\ket{D^{(n)}_{k}}\bigr)^{\otimes m}.
\]
This extension enlarges the OFM-type setting to include constrained
combinatorial optimization problems such as TSP \cite{Lucas2014,GareyJohnson1979}, QAP
\cite{Loiola2007QAPSurvey}, CVRP \cite{TothVigo2014VRP}, shared
transportation problems \cite{onah2025waas}, generalized assignment and multiple-knapsack problems \cite{SahniGonzalez1976GAP}, and \(k\)-dimensional matching for \(k\ge3\) \cite{Karp1972}. See Sec. \ref{subsec:multidimensional-jump-statistics} for more discussion on various application classes.

\section{Main Results I: Amplitude Transport Analyses}
\label{sec:results}

A depth-\(p\) CE--QAOA circuit is given as
\begin{equation}
\ket{\psi_p(\vec\gamma,\vec\beta)}
=
\Bigl(\prod_{\ell=1}^{p}U_M(\beta_\ell)e^{-i\gamma_\ell H_C}\Bigr)\ket{s_0},
\qquad
\vec\gamma=(\gamma_1,\dots,\gamma_p),\;
\vec\beta=(\beta_1,\dots,\beta_p).
\end{equation}

Because \(U_M\) preserves the one-hot sector and \(H_C\) is diagonal, every
layer maps the encoded manifold \(\OH=(\mathcal H_1)^{\otimes m}\) to itself. This persistent  product-space structure is what makes the following shell reduction possible. From this point onward, the shell-transfer analysis is formulated on the abstract product space \(\mathcal X=[n]^m\), with CE--QAOA serving as a primary motivating instance.

\subsection{Shell transfer on the product one-hot manifold}
\label{subsec:shell-transfer-product-space}

Fix integers \(n\ge2\) and \(m\ge1\), and let
\[
\mathcal X=[n]^m
\]
be the product alphabet space. For \(x=(x_1,\dots,x_m)\in \mathcal X\), we write
\[
d(x,y):=\bigl|\{b\in\{1,\dots,m\}:x_b\neq y_b\}\bigr|
\]
for the Hamming distance on \(\mathcal \mathcal X\). For a fixed reference configuration
\(y\in \mathcal X\), define the shells
\begin{equation}
\label{eq:shells}
S_r(y):=\{x\in \mathcal X:\ d(x,y)=r\},
\qquad r=0,\dots,m.
\end{equation}

Throughout the shell-transfer analysis we use \(\beta\) as the effective
one-hot mixing angle. With the normalization in
\cref{eq:mixer}, the block mixer is therefore
\begin{equation}
\label{eq:mixer-unitary}
U_M^{(b)}(\beta)
:=
\exp\!\left(
-\frac{i\beta}{2}\widetilde H_{XY}^{(b)}
\right).
\end{equation}
Equivalently, the factor of \(2\) coming from
\(X_iX_j+Y_iY_j=2(\sigma_i^+\sigma_j^-+\sigma_i^-\sigma_j^+)\) is absorbed into the definition of the mixer angle. Observe, then, that on a single one-hot block of local size \(n\), the normalized block-XY mixer unitary has two distinct matrix element values
\begin{align}
\label{eq:block}
a_0(\beta)
=
\frac{1}{n}e^{-i\beta}
+
\left(1-\frac{1}{n}\right)e^{i\beta/(n-1)},
\qquad
a_1(\beta)
=
\frac{1}{n}\left(e^{-i\beta}-e^{i\beta/(n-1)}\right).
\end{align}

\begin{lemma}[Single-block transition amplitudes for the complete-graph XY mixer]
\label{lem:single-block-xy-amplitudes}
Let
\begin{equation}
\label{eq:one-hot-space}
\mathcal H_1=\mathrm{span}\{\ket{e_1},\dots,\ket{e_n}\}
\end{equation}
be the one-hot subspace of a block of size \(n\), and write
\[
\ket r:=\ket{e_r},\qquad r\in[n].
\]
Consider the normalized complete-graph block-XY Hamiltonian
\(
\widetilde H_{XY}^{(b)}\) as in \cref{eq:mixer} and the associated mixer unitary \(
U_M^{(b)}(\beta)\) in \cref{eq:mixer-unitary}. Then, on \(\mathcal H_1\), the matrix elements of \(U_M^{(b)}(\beta)\)
take  two values:
\[
\bra r U_M^{(b)}(\beta)\ket q
=
\begin{cases}
a_0(\beta), & r=q,\\[2mm]
a_1(\beta), & r\neq q,
\end{cases}
\]
where
\begin{align}
\label{eq:a0-single-block}
a_0(\beta)
&=
\frac{1}{n}e^{-i\beta}
+
\left(1-\frac{1}{n}\right)e^{i\beta/(n-1)},\\[1mm]
\label{eq:a1-single-block}
a_1(\beta)
&=
\frac{1}{n}
\left(
e^{-i\beta}
-
e^{i\beta/(n-1)}
\right).
\end{align}
\end{lemma}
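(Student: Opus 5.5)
The plan is to restrict \(\widetilde H_{XY}^{(b)}\) to the one-hot subspace \(\mathcal H_1\), identify it there with a rescaled adjacency matrix of the complete graph \(K_n\), and exponentiate using its spectral decomposition. For \(u\neq v\), the exchange term \(X_uX_v+Y_uY_v=2(\sigma_u^+\sigma_v^-+\sigma_u^-\sigma_v^+)\) maps \(\ket{e_v}\mapsto 2\ket{e_u}\) and \(\ket{e_u}\mapsto 2\ket{e_v}\). It annihilates every other one-hot state \(\ket{e_w}\) with \(w\notin\{u,v\}\), because neither qubit \(u\) nor qubit \(v\) is excited there, so no hop can occur. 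Summing over pairs, on \(\mathcal H_1\) we get
\[
\widetilde H_{XY}^{(b)}\big|_{\mathcal H_1}=\frac{2}{n-1}\,(J-I),
\]
where \(J\) is the all-ones \(n\times n\) matrix. Thus \(\frac12\widetilde H_{XY}^{(b)}\) restricts to \(A:=\frac{1}{n-1}(J-I)\). This absorbs the factor \(2\) into the angle, exactly as in \cref{eq:mixer-unitary}.

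Next, diagonalize \(A\) with the projector \(P:=J/n\) onto \(\ket{s_{\mathrm{blk}}}\). Since \(J=nP\), we have
\[
A=\frac{1}{n-1}\bigl((n-1)P-(I-P)\bigr)=P-\frac{1}{n-1}(I-P).
\]
So \(A\) has eigenvalue \(1\) on the uniform vector and eigenvalue \(-1/(n-1)\) on its orthogonal complement. Because \(P\) and \(I-P\) are complementary orthogonal projectors, the exponential is
\[
U_M^{(b)}(\beta)\big|_{\mathcal H_1}=e^{-i\beta A}=e^{-i\beta}P+e^{i\beta/(n-1)}(I-P).
\]

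Finally, read off matrix elements using \(\bra rP\ket q=1/n\) for all \(r,q\) and \(\bra r(I-P)\ket q=\delta_{rq}-1/n\). For \(r=q\) this gives
\[
\frac1n e^{-i\beta}+\Bigl(1-\frac1n\Bigr)e^{i\beta/(n-1)}=a_0(\beta).
\]
For \(r\neq q\) it gives
\[
\frac1n\bigl(e^{-i\beta}-e^{i\beta/(n-1)}\bigr)=a_1(\beta).
\]
These agree with \cref{eq:a0-single-block,eq:a1-single-block}.

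The only delicate step is the first one: checking that each exchange term acts as exactly \(2\) times a transposition of the two one-hot states, with no diagonal contribution on \(\mathcal H_1\), and that \(\mathcal H_1\) is invariant so that restricting before exponentiating is legitimate. Invariance holds because \(H_{XY}\) conserves Hamming weight, as noted in Section~\ref{subsec:xy-mixers-product spaces}. The rest is routine projector algebra.
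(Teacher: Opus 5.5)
Your proposal is correct and follows essentially the same route as the paper's proof. Both restrict the exchange terms to \(\mathcal H_1\) to get \(\frac{2}{n-1}A_{K_n}\), write \(A_{K_n}=(n-1)P_s-P_s^\perp\) with \(P_s\) the projector onto \(\ket{s_{\mathrm{blk}}}\), exponentiate the two spectral projectors, and read off the diagonal and off-diagonal entries as \(a_0(\beta)\) and \(a_1(\beta)\).
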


\noindent
This result is derived from the spectral decomposition of \(U_M^{(b)}(\beta)\). Detailed proof appears in App. \ref{app:proofs}. By Lemma~\ref{lem:single-block-xy-amplitudes}, for a product configuration
\[
x=(x_1,\dots,x_m),\qquad z=(z_1,\dots,z_m)\in \mathcal X=[n]^m,
\]
and the product mixer
\[
U_M(\beta)=\bigotimes_{b=1}^m U_M^{(b)}(\beta),
\]
the transition amplitude factorizes as
\[
\bra z U_M(\beta)\ket x
=
\prod_{b=1}^m
\bra {z_b}U_M^{(b)}(\beta)\ket{x_b}.
\]

Using the convention \(\binom ab=0\) whenever \(b<0\) or \(b>a\), each coordinate contributes \(a_0(\beta)\) if \(z_b=x_b\), and \(a_1(\beta)\) if \(z_b\neq x_b\). Hence,
\begin{equation}
\label{eq:product-kernel-from-single-block}
\bra z U_M(\beta)\ket x
=
a_0(\beta)^{m-d(x,z)}
a_1(\beta)^{d(x,z)}.
\end{equation}
Next, we show that the complete-graph block-XY mixer gives a well-defined shell transfer mechanism and ultimately compresses the multilayer path expansion into a transfer generating function. To that end, for \(x\in S_t(y)\), define the one-layer shell-transfer quantity
\begin{equation}
\label{eq:Trt-def-general}
T_{r,t}^{(y)}(x;\beta)
:=
\sum_{z\in S_r(y)}
\bigl|\langle z\mid U_M(\beta)\mid x\rangle\bigr|.
\end{equation}



\begin{lemma}[Well-defined shell transfer]
\label{lem:shell-transfer-well-defined}
For fixed \(r,t\in\{0,\dots,m\}\), the quantity
\(T_{r,t}^{(y)}(x;\beta)\) depends only on \(r,t,\beta\), and not on the
particular choice of \(y\in \mathcal X\) or \(x\in S_t(y)\). We therefore write
\[
T_{r,t}(\beta):=T_{r,t}^{(y)}(x;\beta)
\qquad
\text{for any }y\in \mathcal X,\ x\in S_t(y).
\]
\end{lemma}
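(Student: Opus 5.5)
By \eqref{eq:product-kernel-from-single-block}, each summand in \eqref{eq:Trt-def-general} depends only on $d(x,z)$. Concretely,
\[
\bigl|\langle z|U_M(\beta)|x\rangle\bigr| = |a_0(\beta)|^{m-d(x,z)}|a_1(\beta)|^{d(x,z)}.
\]
So $T^{(y)}_{r,t}(x;\beta)=\sum_{s=0}^m N_{r,s}^{(y)}(x)\,|a_0|^{m-s}|a_1|^s$, where
\[
N_{r,s}^{(y)}(x):=\bigl|\{z\in S_r(y):d(x,z)=s\}\bigr|.
\]
The lemma therefore reduces to a purely combinatorial claim: $N_{r,s}^{(y)}(x)$ depends only on $(r,s,t)$ for $x\in S_t(y)$. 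These are the intersection numbers of the Hamming association scheme.

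\textbf{Symmetry argument.} I would prove the claim through the symmetry group of $\mathcal X$. Let $G$ be generated by permutations of the $m$ coordinates together with independent symbol permutations $\sigma_b\in S_n$ on each coordinate. Every $g\in G$ is a Hamming isometry.

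Given pairs $(y,x)$ and $(y',x')$ with $d(x,y)=d(x',y')=t$, I would build $g$ with $g y=y'$ and $g x=x'$ in two steps.

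First, permute coordinates so that the $t$ disagreement positions of $(y,x)$ are sent to those of $(y',x')$.

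Second, in each coordinate $b$, choose $\sigma_b\in S_n$ with $\sigma_b(y_b)=y'_b$, and additionally $\sigma_b(x_b)=x'_b$ when $x_b\neq y_b$. This is possible because any injective partial map on two distinct symbols extends to a permutation of $[n]$.

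The map $z\mapsto gz$ is then a bijection from $\{z\in S_r(y):d(x,z)=s\}$ onto the corresponding set for $(y',x')$. Hence the counts agree.

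\textbf{Explicit count (optional).} One could also compute $N_{r,s}$ directly to get a formula for $T_{r,t}$. Split the coordinates into the $t$ positions where $x_b\ne y_b$ and the $m-t$ positions where they agree.

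On the $m-t$ agreeing positions, $z$ either agrees with both $x$ and $y$, or differs from both. In the latter case there are $n-1$ choices.

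On the $t$ disagreeing positions, $z_b$ falls into one of three cases:
\begin{itemize}
\item $z_b=y_b$ (one choice): differs from $x$;
\item $z_b=x_b$ (one choice): differs from $y$;
\item $z_b$ equal to neither (there are $n-2$ choices): differs from both.
\end{itemize}

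Summing multinomially over how many positions fall into each case gives an expression depending only on $r,s,t,n,m$. This is not needed for the lemma, but it foreshadows Proposition~\ref{prop:shell-transfer-generating}.

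\textbf{Main obstacle.} The step needing care is the construction of $g$ in the symmetry argument: it must fix the pair $(y,x)$ simultaneously, not just $y$. This is exactly where the case split between agreeing and disagreeing coordinates, and the extension of the two-point partial map, are used.

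Finally, the resulting expression for $T_{r,t}$ makes no reference to the specific $x$ or $y$, so the notation $T_{r,t}(\beta)$ is well defined.
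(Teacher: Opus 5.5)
Your proof is correct, but your main argument takes a different route from the paper's.

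The paper argues coordinatewise. Using \eqref{eq:product-kernel-from-single-block}, it classifies each coordinate by whether $x_b=y_b$ (two local outcomes for $z_b$) or $x_b\neq y_b$ (three outcomes: $z_b=y_b$, $z_b=x_b$, or neither). It then concludes that the total depends only on the number $t$ of mismatched coordinates. This is essentially your optional explicit count, and it is the same computation that yields the generating function of Proposition~\ref{prop:shell-transfer-generating}.

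You instead reduce $T^{(y)}_{r,t}(x;\beta)$ to the intersection numbers $\bigl|\{z\in S_r(y):d(x,z)=s\}\bigr|$. You prove their invariance by showing that $S_n\wr S_m$ acts transitively on ordered pairs at distance $t$. Your construction is sound: match the mismatch sets by a coordinate permutation, then extend the two-point injection in each mismatched coordinate, which needs only $n\ge 2$. Since $g$ is a Hamming isometry with $gy=y'$ and $gx=x'$, it bijects the relevant sets.

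What your route buys:
\begin{itemize}
\item It avoids having to justify that the sum over $z$ splits into a product of local sums depending on the mismatch set only through its size. The paper's short proof asserts this without writing it out, and only makes it explicit in the factorization argument of Proposition~\ref{prop:shell-transfer-product-pushforward}.
\item It proves well-definedness for any kernel that is a function of $d(x,z)$ alone, i.e.\ any element of the Bose--Mesner algebra of $H(m,n)$. This makes the association-scheme remark of Appendix~\ref{app:constraint-statistics-coding} concrete.
\end{itemize}

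What the paper's route buys:
\begin{itemize}
\item It gives the explicit coefficients immediately.
\item It isolates the exact local property being used: a one-block kernel with the same equality-versus-inequality two-class structure on every coordinate. The later Trotterization diagnostic $\delta_{\mathrm{quot}}$ measures precisely the breaking of this property.
\end{itemize}

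A minor wording point: in your last paragraph, $g$ must \emph{map} $(y,x)$ to $(y',x')$, not fix it.
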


\begin{proof}
Fix \(y\in \mathcal X\) and \(x\in S_t(y)\). Let
\[
A:=\{b\in\{1,\dots,m\}:x_b\neq y_b\},
\qquad |A|=t.
\]

By Eq.~\ref{eq:product-kernel-from-single-block}, the   kernel factorizes
coordinatewise and depends only on whether two local symbols are equal or
different. For each coordinate \(b\), the local contribution to the shell
transfer function is determined only by whether \(x_b=y_b\). If
\(x_b=y_b\), the local possibilities are \(z_b=y_b\) or \(z_b\neq y_b\). If
\(x_b\neq y_b\), the local possibilities are \(z_b=y_b\), \(z_b=x_b\), or
\(z_b\notin\{x_b,y_b\}\). Thus the full product contribution depends only on
the number \(t=d(x,y)\), and not on the positions or labels of the mismatches.
Hence \(T_{r,t}^{(y)}(x;\beta)\) depends only on \(r,t,\beta\).
\end{proof}

\begin{proposition}[Shell-transfer generating function]
\label{prop:shell-transfer-generating}
For every \(t\in\{0,\dots,m\}\), the shell-transfer coefficients
\(T_{r,t}(\beta)\) are the coefficients of the polynomial
\begin{equation}
\label{eq:shell-transfer-gf-general}
\sum_{r=0}^{m} T_{r,t}(\beta)\,\zeta^r
=
\Bigl(|a_1(\beta)| + \bigl(|a_0(\beta)|+(n-2)|a_1(\beta)|\bigr)\zeta\Bigr)^t
\Bigl(|a_0(\beta)| + (n-1)|a_1(\beta)|\zeta\Bigr)^{m-t}.
\end{equation}
Equivalently,
\begin{equation}
\label{eq:shell-transfer-explicit-general}
T_{r,t}(\beta)
=
\sum_{j=0}^{\min\{t,r\}}
\binom{t}{j}
\binom{m-t}{r-j}
|a_1(\beta)|^{\,t-j}
\bigl(|a_0(\beta)|+(n-2)|a_1(\beta)|\bigr)^j
|a_0(\beta)|^{\,m-t-r+j}
\bigl((n-1)|a_1(\beta)|\bigr)^{r-j}.
\end{equation}
\end{proposition}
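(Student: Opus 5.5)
The plan is to fix $y\in\mathcal X$ and $x\in S_t(y)$ with mismatch set $A=\{b: x_b\neq y_b\}$, $|A|=t$, as in the proof of Lemma~\ref{lem:shell-transfer-well-defined}. By Eq.~\eqref{eq:product-kernel-from-single-block}, we have $|\langle z|U_M(\beta)|x\rangle|=\prod_{b=1}^m |\langle z_b|U_M^{(b)}(\beta)|x_b\rangle|$. Each factor equals $|a_0(\beta)|$ if $z_b=x_b$ and $|a_1(\beta)|$ otherwise. Marking the target distance by $\zeta$ gives the weighted sum
\[
\sum_{r=0}^m T_{r,t}(\beta)\zeta^r=\sum_{z\in\mathcal X}\zeta^{d(z,y)}\prod_{b=1}^m|\langle z_b|U_M^{(b)}(\beta)|x_b\rangle|,
\]
since restricting to $z\in S_r(y)$ picks out exactly the coefficient of $\zeta^r$. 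Both $\zeta^{d(z,y)}=\prod_b \zeta^{[z_b\neq y_b]}$ and the kernel factorize over coordinates, and $z$ ranges over the full product $[n]^m$. The sum therefore factorizes as $\prod_{b=1}^m\bigl(\sum_{z_b\in[n]}\zeta^{[z_b\neq y_b]}|\langle z_b|U_M^{(b)}|x_b\rangle|\bigr)$. This is a finite distributive law and needs no further justification.

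Next I evaluate the local factor in the two cases already distinguished in Lemma~\ref{lem:shell-transfer-well-defined}. If $b\notin A$ (so $x_b=y_b$), then $z_b=y_b$ contributes $|a_0|$ with no $\zeta$. The remaining $n-1$ choices each contribute $|a_1|\zeta$, so the factor is $|a_0|+(n-1)|a_1|\zeta$. If $b\in A$, then $z_b=y_b\neq x_b$ contributes $|a_1|$ with no $\zeta$, while $z_b=x_b$ contributes $|a_0|\zeta$. The $n-2$ symbols outside $\{x_b,y_b\}$ contribute $|a_1|\zeta$ each, so the factor is $|a_1|+(|a_0|+(n-2)|a_1|)\zeta$. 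Taking the product over the $t$ coordinates in $A$ and the $m-t$ outside it gives Eq.~\eqref{eq:shell-transfer-gf-general}.

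For the explicit form \eqref{eq:shell-transfer-explicit-general}, I expand both powers binomially. Choose $j$ coordinates of $A$ to take the $\zeta$-term, giving $\binom tj|a_1|^{t-j}(|a_0|+(n-2)|a_1|)^j\zeta^j$. Then choose $r-j$ of the $m-t$ remaining coordinates to take the $\zeta$-term, giving $\binom{m-t}{r-j}|a_0|^{m-t-r+j}((n-1)|a_1|)^{r-j}\zeta^{r-j}$. Collecting the coefficient of $\zeta^r$ and using the stated convention on binomials to handle the range $0\le j\le\min\{t,r\}$ finishes the argument.

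There is no real obstacle here. The only point needing care is the bookkeeping in the $b\in A$ case: the distance to $y$, not to $x$, is what is marked by $\zeta$, so the ``stay'' amplitude $|a_0|$ acquires a $\zeta$ while the jump to $y_b$ does not.
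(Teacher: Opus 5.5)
Your proposal is correct and follows essentially the same route as the paper: fix the mismatch set $A$, factor the $\zeta$-marked absolute-kernel sum coordinatewise, evaluate the two local factors (with the ``stay'' amplitude $|a_0|$ picking up $\zeta$ on coordinates in $A$), multiply over coordinates, and read off the coefficient of $\zeta^r$ by binomial expansion. Your explicit appeal to the finite distributive law is the same step the paper makes explicit in its appendix proof of the radial-pushforward proposition.
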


\begin{proof}
Fix \(y\in \mathcal X\) and \(x\in S_t(y)\), and let
\[
A:=\{b:\ x_b\neq y_b\},
\qquad |A|=t.
\]
Lemma \ref{lem:shell-transfer-well-defined} shows that $T_{r,t}(\beta)$ is well-defined and depends only on \(r,t,\beta\), and not on the
particular choice of \(y\in \mathcal X\) or \(x\in S_t(y)\). Thus, we classify the contribution of each coordinate to the shell index
\(r=d(z,y)\). By Lemma~\ref{lem:single-block-xy-amplitudes} and \cref{eq:product-kernel-from-single-block}, each coordinate contributes
\(a_0(\beta)\) if \(z_b=x_b\), and \(a_1(\beta)\) if \(z_b\neq x_b\).

For a coordinate \(b\in A\), there are three possibilities for \(z_b\):
\begin{itemize}
\item \(z_b=y_b\): shell increment \(0\),   kernel factor
      \(|a_1(\beta)|\);
\item \(z_b=x_b\): shell increment \(1\),   kernel factor
      \(|a_0(\beta)|\);
\item \(z_b\notin\{x_b,y_b\}\): shell increment \(1\), with \(n-2\) choices,
      each contributing \(|a_1(\beta)|\).
\end{itemize}
Hence each coordinate in \(A\) contributes the generating factor
\[
|a_1(\beta)|
+
\bigl(|a_0(\beta)|+(n-2)|a_1(\beta)|\bigr)\zeta.
\]

For a coordinate \(b\notin A\), we have \(x_b=y_b\), so there are two
possibilities:
\begin{itemize}
\item \(z_b=y_b=x_b\): shell increment \(0\),   kernel factor
      \(|a_0(\beta)|\);
\item \(z_b\neq y_b\): shell increment \(1\), with \(n-1\) choices, each
      contributing \(|a_1(\beta)|\).
\end{itemize}
Hence each coordinate outside \(A\) contributes the generating factor
\[
|a_0(\beta)|+(n-1)|a_1(\beta)|\zeta.
\]

Multiplying over the \(t\) coordinates in \(A\) and the \(m-t\) coordinates
outside \(A\) gives \eqref{eq:shell-transfer-gf-general}. Extracting the
coefficient of \(\zeta^r\) yields
\eqref{eq:shell-transfer-explicit-general}.
\end{proof}

Define the initial shell-count vector by
\begin{equation}
\label{eq:v0-def-general}
v_t^{(0)}
:=
|S_t(y)|
=
\binom{m}{t}(n-1)^t,
\qquad t=0,\dots,m,
\end{equation}
and for \(\ell\ge 1\) define recursively
\begin{equation}
\label{eq:vp-recursion-general}
v_r^{(\ell)}
:=
\sum_{t=0}^{m}T_{r,t}(\beta_\ell)\,v_t^{(\ell-1)},
\qquad r=0,\dots,m.
\end{equation}
\noindent
This generating function is the basepoint-radial quotient of the coordinatewise product transfer operator on the \(n\)-ary Hamming scheme \(H(m,n)\). The coefficients \(T_{r,t}(\beta)\) are the shell-to-shell orbit sums obtained by projecting the product transfer onto the Hamming distance partition around the reference point \(y\). The shell recursion is therefore given by the following
finite-dimensional transfer rule.

\begin{theorem}[Exact shell-transfer recursion]
\label{thm:exact-shell-transfer-recursion}
Let \(v^{(0)}=(v^{(0)}_0,\dots,v^{(0)}_m)\) be the initial shell-weight vector.
For every \(\ell\ge 1\) and every \(r\in\{0,\dots,m\}\),
\begin{equation}
\label{eq:exact-shell-transfer-recursion}
v_r^{(\ell)}
=
\sum_{t=0}^{m} T_{r,t}(\beta_\ell)\,v_t^{(\ell-1)} .
\end{equation}
Equivalently,
\begin{equation}
\label{eq:vr-path-meaning}
v_r^{(\ell)}
=
\sum_{\substack{x_0,\dots,x_\ell\in \mathcal X\\ x_\ell\in S_r(y)}}
\prod_{j=1}^{\ell}
\bigl|\langle x_j\mid U_M(\beta_j)\mid x_{j-1}\rangle\bigr|,
\end{equation}
with the convention that the initial weight depends only on the shell of
\(x_0\). Thus \(v_r^{(\ell)}\) is the total weighted contribution of all
length-\(\ell\) paths
\[
x_0\to x_1\to\cdots\to x_\ell
\]
ending in the shell \(S_r(y)\), where the transfer weight of the path is
\[
\prod_{j=1}^{\ell}
\bigl|\langle x_j\mid U_M(\beta_j)\mid x_{j-1}\rangle\bigr|.
\]
In particular, since \(S_0(y)=\{y\}\),
\begin{equation}
\label{eq:v0-closed-form}
v_0^{(p)}
=
\sum_{\substack{x_0,\dots,x_p\in \mathcal X\\ x_p=y}}
\prod_{j=1}^{p}
\bigl|\langle x_j\mid U_M(\beta_j)\mid x_{j-1}\rangle\bigr|
\end{equation}
is the total weighted contribution of all depth-\(p\) paths ending exactly at
\(y\).
\end{theorem}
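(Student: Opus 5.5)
The first identity \eqref{eq:exact-shell-transfer-recursion} holds by definition, since it is \eqref{eq:vp-recursion-general}. The content of the theorem is the path-sum representation \eqref{eq:vr-path-meaning}, which we prove by induction on \(\ell\). We make the convention explicit by assigning each starting point \(x_0\) the weight \(w_0(x_0)=v^{(0)}_{d(x_0,y)}/|S_{d(x_0,y)}(y)|\). With the default choice \eqref{eq:v0-def-general} this weight equals \(1\). The claim to be proved is then
\[
v_r^{(\ell)}=\sum_{x_\ell\in S_r(y)} W_\ell(x_\ell),\qquad
W_\ell(x_\ell):=\sum_{x_0,\dots,x_{\ell-1}\in\mathcal X} w_0(x_0)\prod_{j=1}^{\ell}\bigl|\langle x_j\mid U_M(\beta_j)\mid x_{j-1}\rangle\bigr|.
\]
We also prove the stronger auxiliary statement that \(W_\ell(x)\) depends on \(x\) only through its shell index \(d(x,y)\). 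In that case \(W_\ell(x)=v^{(\ell)}_{d(x,y)}/|S_{d(x,y)}(y)|\).

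\emph{Base case.} For \(\ell=0\) the statement holds by the choice of \(w_0\).

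\emph{Inductive step.} Suppose the statement holds for \(\ell-1\). Peel off the last factor and write
\[
W_\ell(x_\ell)=\sum_{x_{\ell-1}}\bigl|\langle x_\ell\mid U_M(\beta_\ell)\mid x_{\ell-1}\rangle\bigr|\,W_{\ell-1}(x_{\ell-1}).
\]
Summing over \(x_\ell\in S_r(y)\) and exchanging the finite sums gives
\[
\sum_{x_{\ell-1}} W_{\ell-1}(x_{\ell-1})\sum_{x_\ell\in S_r(y)}\bigl|\langle x_\ell\mid U_M(\beta_\ell)\mid x_{\ell-1}\rangle\bigr|.
\]
The inner sum is exactly \(T^{(y)}_{r,t}(x_{\ell-1};\beta_\ell)\) from \eqref{eq:Trt-def-general}, where \(t=d(x_{\ell-1},y)\). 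By Lemma~\ref{lem:shell-transfer-well-defined} it equals \(T_{r,t}(\beta_\ell)\), independently of which \(x_{\ell-1}\in S_t(y)\) is chosen. Grouping the remaining sum over \(x_{\ell-1}\) by shells, the inductive hypothesis gives \(\sum_{x\in S_t(y)}W_{\ell-1}(x)=v^{(\ell-1)}_t\). This reproduces \eqref{eq:vp-recursion-general}.

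\emph{Main obstacle.} The delicate point is closing the induction, namely showing that \(W_\ell\) is again constant on each shell. The quantity \(T_{r,t}\) sums over the target shell, whereas \(W_\ell(x_\ell)\) sums over the source. We therefore also need the reverse statement: for fixed \(x_\ell\in S_r(y)\), the sum \(\sum_{x\in S_t(y)}|\langle x_\ell\mid U_M\mid x\rangle|\) depends only on \(r,t\). This follows by the same coordinatewise argument as Lemma~\ref{lem:shell-transfer-well-defined}. The kernel \eqref{eq:product-kernel-from-single-block} is symmetric in \((x,z)\), because it depends only on \(d(x,z)\). Alternatively, one can use the group of isometries of \(H(m,n)\) that fix \(y\): coordinate permutations combined with symbol relabelings that fix each \(y_b\). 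This group acts transitively on every shell, and it preserves the kernel magnitudes. Hence \(W_\ell\), which is built from an invariant initial weight and invariant kernels, is invariant under the group and therefore constant on shells.

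This invariance argument does more than close the induction. It removes any need for a separate proof of the auxiliary claim, and it makes the convention ``initial weight depends only on the shell'' exactly the hypothesis required. Finally, taking \(r=0\) and \(\ell=p\), and using \(S_0(y)=\{y\}\), yields \eqref{eq:v0-closed-form}.
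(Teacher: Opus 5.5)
Your proof is correct and follows the paper's route: induction on \(\ell\), peeling off the last mixer layer, invoking Lemma~\ref{lem:shell-transfer-well-defined} to replace the target-shell sum by \(T_{r,t}(\beta_\ell)\) for any source in \(S_t(y)\), and regrouping the source points by shell. The one thing to correct is your ``main obstacle.'' Your own inductive step uses only the aggregated hypothesis \(\sum_{x\in S_t(y)}W_{\ell-1}(x)=v^{(\ell-1)}_t\) and returns exactly that aggregated statement at level \(\ell\), so the induction already closes; the shell-constancy of \(W_\ell\) is true (by kernel symmetry or the isometry group fixing \(y\)), but it is an optional strengthening rather than a needed step.
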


\begin{proof}
We argue by induction on \(\ell\).

For \(\ell=0\), the identity reduces to
\[
v_r^{(0)}=|S_r(y)|,
\]
which is eq. \eqref{eq:v0-def-general}.

Assume the claim holds at depth \(\ell-1\). Then
\[
v_r^{(\ell)}
=
\sum_{t=0}^{m}T_{r,t}(\beta_\ell)\,v_t^{(\ell-1)}.
\]
By Lemma~\ref{lem:shell-transfer-well-defined} and \cref{eq:Trt-def-general},
\[
T_{r,t}(\beta_\ell)
=
\sum_{x_\ell\in S_r(y)}
\bigl|\langle x_\ell\mid U_M(\beta_\ell)\mid x_{\ell-1}\rangle\bigr|
\qquad
\text{for any fixed }x_{\ell-1}\in S_t(y).
\]
Summing over all such \(x_{\ell-1}\) with their depth-\((\ell-1)\) weighted
prefix contributions yields the total weighted contribution of all
length-\(\ell\) paths ending in \(S_r(y)\). This is
\eqref{eq:vr-path-meaning}. The final statement follows by taking \(r=0\),
since \(S_0(y)=\{y\}\).
\end{proof}

\noindent
Theorem~\ref{thm:exact-shell-transfer-recursion} gives the radial transport statement needed for the phase-amplitude analysis below. The induction proof establishes the shell recursion at the level of weighted path sums. A measure-theoretic characterization of the same reduction, including its formulation as a radial pushforward of a product path measure and as a lumped Markov recursion, is given in Appendix~\ref{app:measure-shell-reduction}. Further applications and consequences are discussed in Sec. \ref{sec:applications}.

\section{Main Results II: Interference and  Phase Control Analysis}
\label{sec:main2}

\subsection{Lattice normalization and phase control}
\label{subsec:shell-transfer-to-amplitude}

We now separate the second ingredient, namely the passage from  
shell-transfer counts to lower bounds on complex amplitudes. This step requires a phase alignment hypothesis on the full path summands to supplement the recursion in Theorem \ref{thm:exact-shell-transfer-recursion} because the shell-transfer recursion is phase-blind. It accounts for the   product-kernel contribution of all paths ending in a prescribed shell. To turn this   transfer into a lower bound on an actual complex amplitude, one must additionally control the phases of the path summands. The most transparent regime in which this can be done is the lattice-normalized regime. To that end, let \(H_C\) be diagonal in the computational basis, with eigenvalue function \(E:\mathcal X\to\mathbb R\). The diagonal cost is exactly
lattice-normalized if, after an affine rescaling and shift,
\begin{equation}
\label{eq:lattice-normalized-cost}
\widetilde E(x)=\alpha E(x)+\eta_0 \in \{0,1,\dots,T_n\},
\qquad x\in \mathcal X=[n]^m,
\end{equation}
for some integer \(T_n=\poly n\). Since adding a global constant to the Hamiltonian contributes only an overall phase, we may equivalently work with the normalized cost
\begin{equation}
\label{eq:lattice-normalized-cost-range}
0\le E(x)\le T_n,
\qquad
E(x)\in\{0,1,\dots,T_n\},
\qquad x\in \mathcal X.
\end{equation}

Fix an angle \(0<\theta<\pi\), and choose
\begin{equation}
\label{eq:lattice-gamma-choice}
\gamma=\frac{\theta}{T_n}.
\end{equation}
Then for every \(x\in \mathcal X\),
\begin{equation}
\label{eq:lattice-single-layer-window}
0\le \gamma E(x)\le \theta.
\end{equation}
Hence all cost-phase factors
\[
e^{-i\gamma E(x)}
\]
lie inside a common arc of angular width at most \(\theta\). Thus lattice normalization gives an explicit phase cone for the diagonal layer. For depth \(p\), suppose more generally that for each layer \(\ell\),
\begin{equation}
\label{eq:lattice-layer-window}
0\le \gamma_\ell E(x)\le \theta_\ell
\qquad \text{for all }x\in \mathcal X,
\end{equation}
and define
\begin{equation}
\label{eq:lattice-total-cone-width}
\Theta_p:=\theta_1+\cdots+\theta_p.
\end{equation}
Then the cost-phase contribution to any depth-\(p\) path satisfies
\begin{equation}
\label{eq:lattice-path-phase-window}
0\le
\sum_{\ell=1}^{p}\gamma_\ell E(x_{\ell-1})
\le
\Theta_p.
\end{equation}
Consequently,
\begin{equation}
\label{eq:lattice-path-cost-phase}
\prod_{\ell=1}^{p}e^{-i\gamma_\ell E(x_{\ell-1})}
=
\exp\!\left(
-i\sum_{\ell=1}^{p}\gamma_\ell E(x_{\ell-1})
\right)
\end{equation}
lies in a common arc of angular width at most \(\Theta_p\). If
\(\Theta_p<\pi\), these cost phases lie in a common open cone.

The remaining phases in the path expansion come from the mixer matrix
elements. Thus, in the exact lattice-normalized regime, the diagonal phases are controlled explicitly, while the only additional requirement for a full path-sum cone bound is that the relevant mixer-kernel phases remain controlled within the same transition-signature class. Under that phase-control condition, the shell-transfer count \(v_0^{(p)}\) converts directly into an amplitude lower bound. 

\subsection{ Lower bounds on complex amplitudes}
Let
\[
\ket{\psi_p(\vec\gamma,\vec\beta)}
=
\Bigl(\prod_{\ell=1}^{p}U_M(\beta_\ell)e^{-i\gamma_\ell H_C}\Bigr)\ket{s_0},
\qquad
\ket{s_0}=\frac{1}{n^{m/2}}\sum_{x\in \mathcal X}\ket{x},
\]
where \(H_C\) is diagonal in the computational basis, and \(\mathcal X=[n]^m\).

\begin{lemma}[Exact path-sum formula]
\label{lem:exact-path-sum}
For every \(y\in \mathcal X\),
\begin{equation}
\label{eq:exact-path-sum}
\langle y\mid\psi_p(\vec\gamma,\vec\beta)\rangle
=
\frac{1}{n^{m/2}}
\sum_{x_0,\dots,x_{p-1}\in \mathcal X}
\prod_{\ell=1}^{p}
\Bigl(
e^{-i\gamma_\ell E(x_{\ell-1})}
\langle x_\ell\mid U_M(\beta_\ell)\mid x_{\ell-1}\rangle
\Bigr),
\qquad x_p:=y.
\end{equation}
\end{lemma}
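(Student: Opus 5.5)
The plan is to insert a resolution of the identity in the computational basis between every pair of consecutive operators in the circuit product, and then read off the resulting multiple sum. The only facts needed are that each \(e^{-i\gamma_\ell H_C}\) is diagonal, \(e^{-i\gamma_\ell H_C}\ket{x}=e^{-i\gamma_\ell E(x)}\ket{x}\), and that the initial state is the uniform superposition \(\ket{s_0}=n^{-m/2}\sum_{x\in\mathcal X}\ket{x}\). Since every layer maps \(\OH\) to itself, it suffices to use the identity \(\mathbb 1_{\OH}=\sum_{x\in\mathcal X}\ket{x}\bra{x}\) on the encoded manifold.

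First, expand the initial state as \(\ket{s_0}=n^{-m/2}\sum_{x_0}\ket{x_0}\). Applying the first cost layer gives \(n^{-m/2}\sum_{x_0}e^{-i\gamma_1E(x_0)}\ket{x_0}\). Applying \(U_M(\beta_1)\) and inserting \(\sum_{x_1}\ket{x_1}\bra{x_1}\) gives
\[
n^{-m/2}\sum_{x_0,x_1} e^{-i\gamma_1E(x_0)}\langle x_1|U_M(\beta_1)|x_0\rangle\ket{x_1}.
\]

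Next, argue by induction on the layer index \(\ell\). The claim is that after \(\ell\) layers the state equals
\[
n^{-m/2}\sum_{x_0,\dots,x_\ell}\prod_{j=1}^{\ell}\Bigl(e^{-i\gamma_jE(x_{j-1})}\langle x_j|U_M(\beta_j)|x_{j-1}\rangle\Bigr)\ket{x_\ell}.
\]
For the inductive step, the next cost layer multiplies each \(\ket{x_\ell}\) by the scalar \(e^{-i\gamma_{\ell+1}E(x_\ell)}\). The next mixer layer is then expanded through one more insertion of the identity over \(x_{\ell+1}\). Linearity and the finiteness of \(\mathcal X\) justify exchanging the finite sums with the operators.

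Finally, take the inner product with \(\bra{y}\) at \(\ell=p\). By orthonormality, only \(x_p=y\) survives, which leaves the free summation over \(x_0,\dots,x_{p-1}\) exactly as in \eqref{eq:exact-path-sum}.

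There is no genuine obstacle here. The only point requiring care is the ordering convention. The product \(\prod_{\ell=1}^{p}U_M(\beta_\ell)e^{-i\gamma_\ell H_C}\) must be read with layer \(1\) acting first, i.e. rightmost, so that the cost phase in layer \(\ell\) is evaluated at the pre-mixer configuration \(x_{\ell-1}\), as the formula requires. I will state this convention explicitly at the start of the proof.
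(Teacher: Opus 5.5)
Your proposal is correct and follows the same route as the paper: expand \(\ket{s_0}\) in the computational basis, apply the diagonal phases, insert resolutions of the identity between mixer layers, and project onto \(\bra{y}\). Your explicit induction and your note that layer \(1\) must act first (rightmost) make precise what the paper's one-line proof leaves implicit.
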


\begin{proof}
Expand \(\ket{s_0}\) in the computational basis, apply each diagonal cost layer
\(e^{-i\gamma_\ell H_C}\), and then insert resolutions of the identity between
the mixer layers.
\end{proof}

\begin{lemma}[Cone lower bound for complex sums]
\label{lem:cone-lower-bound-shell}
Let \(z_1,\dots,z_M\in\mathbb C\) lie in an arc of angular width
\(\Theta\in[0,\pi)\). Then
\[
\left|\sum_{j=1}^{M}z_j\right|
\ge
\cos\!\left(\frac{\Theta}{2}\right)\sum_{j=1}^{M}|z_j|.
\]
\end{lemma}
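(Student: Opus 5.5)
The plan is to rotate the arc so that it is symmetric about the positive real axis, and then bound the modulus of the sum from below by its real part. Concretely, the hypothesis says there is an angle \(\phi_0\in\mathbb R\) such that every nonzero \(z_j\) can be written as \(z_j=|z_j|e^{i\phi_j}\) with \(\phi_j\in[\phi_0,\phi_0+\Theta]\). Zero terms contribute to neither side of the inequality, so they may be discarded at the outset. I will set the bisector angle \(\phi_c:=\phi_0+\Theta/2\) and consider the rotated numbers \(w_j:=e^{-i\phi_c}z_j\). Multiplication by a unimodular constant preserves both \(|\sum_j z_j|\) and each \(|z_j|\), so it suffices to prove the claim for the \(w_j\). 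Their arguments \(\psi_j:=\phi_j-\phi_c\) all lie in \([-\Theta/2,\Theta/2]\).

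The key step is the chain
\[
\Bigl|\sum_{j=1}^{M}w_j\Bigr|
\ge
\mathrm{Re}\sum_{j=1}^{M}w_j
=
\sum_{j=1}^{M}|w_j|\cos\psi_j
\ge
\cos\!\left(\frac{\Theta}{2}\right)\sum_{j=1}^{M}|w_j|.
\]
The first inequality is \(|w|\ge\mathrm{Re}\,w\) for any complex \(w\). The equality is linearity of the real part. The last inequality uses the fact that cosine is even and decreasing on \([0,\pi/2]\): since \(|\psi_j|\le\Theta/2<\pi/2\), we have \(\cos\psi_j\ge\cos(\Theta/2)>0\). Undoing the rotation then gives the stated bound.

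There is no real obstacle here. The only point needing care is interpreting ``arc of angular width \(\Theta\)'' consistently, namely as a choice of representatives \(\phi_j\) in a single interval of length \(\Theta\); this is exactly the form in which the cone condition arises in \eqref{eq:lattice-path-phase-window}. The restriction \(\Theta<\pi\) guarantees \(\cos(\Theta/2)>0\), which makes the bound nontrivial. Even at \(\Theta=\pi\) the inequality would remain true but trivial, with right-hand side zero.
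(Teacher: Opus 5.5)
Your proposal is correct and follows the paper's own proof: rotate the arc so its midpoint lies on the positive real axis, then bound the real part of each summand below by $\cos(\Theta/2)\,|z_j|$ and use $|w|\ge\operatorname{Re} w$. You spell out steps the paper leaves implicit, namely discarding zero terms, choosing representatives $\phi_j$ in a single interval, and noting that $\Theta<\pi$ gives $\cos(\Theta/2)>0$.
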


\begin{proof}
Rotate the arc so that its midpoint lies on the positive real axis, and then
bound the real part of each summand from below by
\(\cos(\Theta/2)\,|z_j|\).
\end{proof}

\begin{theorem}[Phase-aligned path-sum lower bound]
\label{thm:phase-aligned-path-sum-lower-bound}
Fix \(y\in \mathcal X\). Assume that all depth-\(p\) path summands in
\eqref{eq:exact-path-sum} lie in a common arc of angular width
\(\Theta_p<\pi\). Then
\begin{equation}
\label{eq:phase-aligned-pointwise}
\bigl|\langle y\mid\psi_p(\vec\gamma,\vec\beta)\rangle\bigr|
\ge
\frac{1}{n^{m/2}}
\cos\!\left(\frac{\Theta_p}{2}\right)\,v_0^{(p)}.
\end{equation}
Equivalently,
\begin{equation}
\label{eq:phase-aligned-pointwise-prob}
\bigl|\langle y\mid\psi_p(\vec\gamma,\vec\beta)\rangle\bigr|^2
\ge
\frac{1}{n^{m}}
\cos^2\!\left(\frac{\Theta_p}{2}\right)\,\bigl(v_0^{(p)}\bigr)^2.
\end{equation}
\end{theorem}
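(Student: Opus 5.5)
The plan is to combine the exact path-sum representation of Lemma~\ref{lem:exact-path-sum} with the cone inequality of Lemma~\ref{lem:cone-lower-bound-shell}, and then to identify the resulting sum of absolute values with the shell-transfer mass $v_0^{(p)}$ of Theorem~\ref{thm:exact-shell-transfer-recursion}.

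\textbf{Step 1: reduce to a single cone inequality.} Write the amplitude as
\[
\langle y\mid\psi_p\rangle=n^{-m/2}\sum_{\pi}z_\pi ,
\]
where $\pi=(x_0,\dots,x_{p-1},x_p=y)$ ranges over all depth-$p$ paths. Each path summand is
\[
z_\pi=\prod_{\ell=1}^{p}e^{-i\gamma_\ell E(x_{\ell-1})}\langle x_\ell\mid U_M(\beta_\ell)\mid x_{\ell-1}\rangle .
\]
By hypothesis, all $z_\pi$ lie in a common arc of width $\Theta_p<\pi$. Lemma~\ref{lem:cone-lower-bound-shell} then gives
\[
\Bigl|\sum_\pi z_\pi\Bigr|\ge\cos(\Theta_p/2)\sum_\pi|z_\pi| .
\]
The finite index set (there are $n^{mp}$ paths) causes no convergence issues.

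\textbf{Step 2: compute the absolute values of the summands.} The cost phases have unit modulus, so
\[
|z_\pi|=\prod_{\ell=1}^{p}\bigl|\langle x_\ell\mid U_M(\beta_\ell)\mid x_{\ell-1}\rangle\bigr| .
\]
Summing over all paths with $x_p=y$ gives exactly the right-hand side of \eqref{eq:v0-closed-form}, which equals $v_0^{(p)}$.

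\textbf{Step 3: check the initial-weight convention.} This is the only delicate point. In \eqref{eq:v0-closed-form} every starting point $x_0$ carries weight one. This agrees with the initialization $v_t^{(0)}=|S_t(y)|$: summing unit weights over $x_0\in S_t(y)$ reproduces the shell counts, which is what the induction in Theorem~\ref{thm:exact-shell-transfer-recursion} uses. The factor $n^{-m/2}$ from the uniform state $\ket{s_0}$ has already been pulled out in Lemma~\ref{lem:exact-path-sum}, so nothing is double counted. Combining the three steps yields \eqref{eq:phase-aligned-pointwise}.

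\textbf{Step 4: the squared form.} Square both sides. This is legitimate because both sides are nonnegative: $\cos(\Theta_p/2)>0$ since $\Theta_p<\pi$, and $v_0^{(p)}\ge 0$. The result is \eqref{eq:phase-aligned-pointwise-prob}.

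I expect no real obstacle. The substantive content lies in the phase-alignment hypothesis itself, which is assumed here rather than proved. The only thing needing care is the bookkeeping in Step~3, namely matching the unit initial weights to the shell-count initialization.
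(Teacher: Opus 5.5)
Your proposal is correct and follows the paper's own proof: apply Lemma~\ref{lem:cone-lower-bound-shell} to the summands of the exact path sum \eqref{eq:exact-path-sum}, identify the sum of their moduli with $v_0^{(p)}$ via \eqref{eq:v0-closed-form}, and square. You also make explicit two points the paper leaves implicit: the unit weight on each starting point $x_0$ matches the initialization $v_t^{(0)}=|S_t(y)|$, and squaring is legitimate because $\cos(\Theta_p/2)>0$.
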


\begin{proof}
Apply Lemma~\ref{lem:cone-lower-bound-shell} to the complex summands in the
path sum \eqref{eq:exact-path-sum}. Their total sum of moduli is exactly
\[
\sum_{\substack{x_0,\dots,x_p\in \mathcal X\\ x_p=y}}
\prod_{\ell=1}^{p}
\bigl|\langle x_\ell\mid U_M(\beta_\ell)\mid x_{\ell-1}\rangle\bigr|
=
v_0^{(p)}
\]
by Theorem~\ref{thm:exact-shell-transfer-recursion}. This yields
\eqref{eq:phase-aligned-pointwise}. Squaring gives
\eqref{eq:phase-aligned-pointwise-prob}.
\end{proof}


Theorem~\ref{thm:phase-aligned-path-sum-lower-bound} shows that when all
relevant depth-\(p\) path summands lie in an arc of angular width
\(\Theta_p<\pi\), their projections along the arc bisector add
constructively. Thus, the shell-transfer recursion
(Theorem~\ref{thm:exact-shell-transfer-recursion}) and the phase-aligned
path-sum bound
(Theorem~\ref{thm:phase-aligned-path-sum-lower-bound}) separate
transport from phase coherence. The former computes the available
absolute path weight, while the latter quantifies the fraction retained
under coherent summation. The discussion below makes the corresponding
alignment and cancellation geometry explicit. The subsequent
constructive radial-mass analysis then bridges these two ingredients by
identifying the mixer angle that maximizes the one-site transfer factor
contributing to \(v_0^{(p)}\). This yields an explicit logarithmic-depth
route for raising the target radial mass to the scale required by the
phase-aligned amplitude bound, and hence for obtaining a success
probability independent of the problem size and encoded dimension.

\subsection{Phase alignment and cancellation in the path-sum geometry}
\label{subsec:phase-alignment-and-cancellation}

Theorem~\ref{thm:phase-aligned-path-sum-lower-bound} expresses the
target amplitude as the coherent sum of all depth-\(p\) paths ending at
\(y\). To describe the associated interference geometry explicitly,
define the path summand
\begin{equation}
\label{eq:depth-p-path-summand}
A(x_0,\ldots,x_{p-1};y)
:=
\prod_{\ell=1}^{p}
\Bigl(
e^{-i\gamma_\ell E(x_{\ell-1})}
\langle x_\ell\mid U_M(\beta_\ell)\mid x_{\ell-1}\rangle
\Bigr),
\qquad
x_p:=y.
\end{equation}
For every nonzero path summand, write
\begin{equation*}
\label{eq:depth-p-path-polar-form}
A(x_0,\ldots,x_{p-1};y)
=
w(x_0,\ldots,x_{p-1};y)
e^{i\phi(x_0,\ldots,x_{p-1};y)},
\end{equation*}
where
\begin{equation*}
\label{eq:depth-p-path-weight}
w(x_0,\ldots,x_{p-1};y)
:=
\prod_{\ell=1}^{p}
\bigl|
\langle x_\ell\mid U_M(\beta_\ell)\mid x_{\ell-1}\rangle
\bigr|.
\end{equation*}
The accumulated path phase is
\begin{align}
\label{eq:depth-p-accumulated-path-phase}
\phi(x_0,\ldots,x_{p-1};y)
&=
-\sum_{\ell=1}^{p}
\gamma_\ell E(x_{\ell-1})
\notag\\
&\quad
+
\sum_{\ell=1}^{p}
\arg
\bigl(
\langle x_\ell\mid
U_M(\beta_\ell)
\mid x_{\ell-1}\rangle
\bigr)
\pmod{2\pi}.
\end{align}
The first term contains the phase rotations generated by the diagonal
cost layers, while the second contains the phases of the mixer matrix
elements. Each diagonal factor has unit modulus. When either \(E(x)=0\) or \(\gamma_\ell=0\), the corresponding factor is \(e^{-i\gamma_\ell E(x)} =1.\) The layer then contributes zero angular rotation to the path. Nonzero energy differences and nonzero phase angles generate relative rotations between distinct paths.

To see this directly, consider two depth-\(p\) paths
\[
(x_0,\ldots,x_{p-1},y)
\qquad\text{and}\qquad
(x'_0,\ldots,x'_{p-1},y).
\]
Their relative accumulated phase is
\begin{align}
\label{eq:relative-depth-p-path-phase}
\Delta\phi
&:=
\phi(x_0,\ldots,x_{p-1};y)
-
\phi(x'_0,\ldots,x'_{p-1};y)
\notag\\
&=
-\sum_{\ell=1}^{p}
\gamma_\ell
\Bigl(
E(x_{\ell-1})-E(x'_{\ell-1})
\Bigr)
\notag\\
&\quad
+
\sum_{\ell=1}^{p}
\Bigl[
\arg
\bigl(
\langle x_\ell\mid
U_M(\beta_\ell)
\mid x_{\ell-1}\rangle
\bigr)
\notag\\
&\hspace{4.4cm}
-
\arg
\bigl(
\langle x'_\ell\mid
U_M(\beta_\ell)
\mid x'_{\ell-1}\rangle
\bigr)
\Bigr]
\pmod{2\pi},
\end{align}
where \(x_p=x'_p=y\).

Two path summands cancel pairwise when their weights agree and their
relative phase equals \(\pi\) modulo \(2\pi\):
\begin{align}
\label{eq:pairwise-path-cancellation-condition}
w(x_0,\ldots,x_{p-1};y)
&=
w(x'_0,\ldots,x'_{p-1};y),
\\
\Delta\phi
&=
\pi
\pmod{2\pi}.
\end{align}
For paths with the same accumulated mixer phase, this condition becomes
\begin{equation}
\label{eq:energy-induced-pairwise-cancellation}
\sum_{\ell=1}^{p}
\gamma_\ell
\Bigl(
E(x_{\ell-1})-E(x'_{\ell-1})
\Bigr)
=
\pi
\pmod{2\pi}.
\end{equation}
Thus the accumulated energy difference can rotate two equally weighted
path contributions into opposite directions in the complex plane.

The complete target amplitude vanishes precisely when all weighted
path phasors close:
\begin{equation}
\label{eq:full-path-phasor-closure}
\sum_{x_0,\ldots,x_{p-1}\in\mathcal X}
w(x_0,\ldots,x_{p-1};y)
e^{i\phi(x_0,\ldots,x_{p-1};y)}
=
0.
\end{equation}
This closure may arise from two path families with equal aggregate
weight and opposite phases. It may also arise from several
symmetry-related phase sectors. For example, \(q\) equally weighted
sectors with phases separated by \(2\pi/q\) satisfy
\begin{equation}
\label{eq:root-of-unity-path-cancellation}
\sum_{j=0}^{q-1}
e^{2\pi i j/q}
=
0.
\end{equation}

More generally, phase dispersion can strongly suppress the coherent
sum:
\begin{align}
\label{eq:strong-path-sum-suppression}
&
\left|
\sum_{x_0,\ldots,x_{p-1}\in\mathcal X}
A(x_0,\ldots,x_{p-1};y)
\right|
\notag\\
&\hspace{1.5cm}
\ll
\sum_{x_0,\ldots,x_{p-1}\in\mathcal X}
w(x_0,\ldots,x_{p-1};y).
\end{align}
The shell-transfer quantity \(v_0^{(p)}\) is the total weight on the
right-hand side:
\begin{equation}
\label{eq:v0-as-total-path-weight}
v_0^{(p)}
=
\sum_{x_0,\ldots,x_{p-1}\in\mathcal X}
w(x_0,\ldots,x_{p-1};y).
\end{equation}
The phase geometry determines how much of this available path weight
survives in the coherent amplitude.

\paragraph{Common-cone geometry.}

Suppose the phases of all nonzero depth-\(p\) path summands lie in a
common arc of angular width \(\Theta_p<\pi\). Let \(\alpha_p\) denote the
midpoint of this arc. Then
\begin{equation}
\label{eq:path-phase-common-arc}
\left|
\phi(x_0,\ldots,x_{p-1};y)-\alpha_p
\right|
\le
\frac{\Theta_p}{2}
\end{equation}
for every contributing path. Rotation by \(e^{-i\alpha_p}\) gives
\begin{align}
\label{eq:path-common-projection}
&
\operatorname{Re}
\left[
e^{-i\alpha_p}
A(x_0,\ldots,x_{p-1};y)
\right]
\notag\\
&\hspace{1cm}
=
w(x_0,\ldots,x_{p-1};y)
\cos
\left(
\phi(x_0,\ldots,x_{p-1};y)-\alpha_p
\right)
\notag\\
&\hspace{1cm}
\ge
w(x_0,\ldots,x_{p-1};y)
\cos
\left(
\frac{\Theta_p}{2}
\right).
\end{align}
Every contributing path therefore has a positive projection along the
arc bisector. Summing these projections gives
\begin{align}
\label{eq:path-cone-projection-sum}
&
\left|
\sum_{x_0,\ldots,x_{p-1}\in\mathcal X}
A(x_0,\ldots,x_{p-1};y)
\right|
\notag\\
&\hspace{1cm}
\ge
\cos
\left(
\frac{\Theta_p}{2}
\right)
v_0^{(p)},
\end{align}
which is the geometric content of
Theorem~\ref{thm:phase-aligned-path-sum-lower-bound}.

The semicircle boundary occurs at \(\Theta_p=\pi.\) At this boundary, the smallest common projection reaches zero. For \(\Theta_p<\pi\), the positive common projection prevents the weighted phasors from closing and gives a strictly positive target
amplitude whenever \(v_0^{(p)}>0\).

\paragraph{The angular margin.}

Define the angular margin from the semicircle boundary by
\[
\delta_p
:=
\pi-\Theta_p,
\qquad
\delta_p>0.
\]
Then
\begin{equation*}
\label{eq:cone-factor-angular-margin}
\cos
\left(
\frac{\Theta_p}{2}
\right)
=
\sin
\left(
\frac{\delta_p}{2}
\right).
\end{equation*}
The amplitude bound
\eqref{eq:phase-aligned-pointwise} can therefore be written as
\begin{equation}
\label{eq:angular-margin-amplitude-bound}
\bigl|
\langle y\mid
\psi_p(\vec\gamma,\vec\beta)
\rangle
\bigr|
\ge
\frac{1}{n^{m/2}}
\sin
\left(
\frac{\delta_p}{2}
\right)
v_0^{(p)}.
\end{equation}

For a small angular margin, \(
\sin
\left(
\frac{\delta_p}{2}
\right)
=
\frac{\delta_p}{2}
+
O(\delta_p^3).
\)
Hence an inverse-polynomial margin
\begin{equation}
\label{eq:inverse-polynomial-angular-margin}
\delta_p
\ge
\frac{1}{\poly n}
\end{equation}
retains an inverse-polynomial fraction of the available path weight in
amplitude and an inverse-polynomial-squared fraction in probability. Notably, this implies that approaching the semicircle boundary is quite harmless to the success guarantees derived in this work. It is equally important to point out that the condition \(\Theta_p<\pi\) is sufficient for the common-cone certificate, rather than necessary for a nonzero target amplitude. At the boundary \(\Theta_p=\pi\), the uniform projection factor vanishes, and the bound becomes inconclusive. The path sum may nevertheless remain nonzero, since exact cancellation requires the stronger weighted phasor-closure condition
\begin{equation}
\label{eq:weighted-path-phasor-closure}
\sum_{\mathsf P:\,x_p=y}
w_{\mathsf P}e^{i\phi_{\mathsf P}}
=
0.
\end{equation}
Accordingly, phase configurations extending across a semicircle may
still produce a substantial coherent amplitude. More general phase
engineering may exploit nonuniform path weights or structured phase
sectors beyond the single-cone regime, although such extensions are
outside the present scope. Nevertheless, the next proposition supplies a constructive mixer-angle choice that raises \(v_0^{(p)}\) to the scale required by the amplitude bound, effectively interpolating between the shell-transfer recursion that fixes the available path weight and the fraction of that weight retained after coherent summation.

\subsection{Constructive radial-mass amplification}
\begin{proposition}[Constructive radial-mass lower bound]
\label{prop:constructive-radial-mass-lower-bound}
Let the total one-layer transition weight of the product mixer be given as 
\begin{equation}
q_n(\beta):=|a_0(\beta)|+(n-1)|a_1(\beta)|.
\label{eq:one-layer-weight}
\end{equation}
For the complete-graph one-hot mixer, define
\[
\theta_n(\beta):=\frac{n\beta}{2(n-1)}.
\]

For \(n\ge4\), this quantity is maximized by the constructive angle
\[
\beta_\star=\frac{\pi(n-1)}{n}.
\]

If
\[
\beta_1=\cdots=\beta_p=\beta_\star,
\]
then
\begin{equation}
\label{eq:v0-star-exact}
v_0^{(p)}
=
\left(3-\frac{4}{n}\right)^{mp}.
\end{equation}

\end{proposition}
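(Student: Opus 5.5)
The plan is to compute both single-block moduli in closed form as functions of \(\theta_n(\beta)\), maximize \(q_n\) in one variable, and then evaluate \(v_0^{(p)}\) by summing the path weights backwards from \(y\).

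\textbf{Step 1: the moduli.} From \eqref{eq:a1-single-block}, the two exponentials in \(a_1(\beta)\) differ in phase by \(\beta+\beta/(n-1)=n\beta/(n-1)=2\theta_n(\beta)\). Hence
\[
|a_1(\beta)|=\tfrac{2}{n}\,|\sin\theta_n(\beta)|.
\]
Factoring \(e^{i\beta/(n-1)}\) out of \eqref{eq:a0-single-block} gives
\[
|a_0(\beta)|^2=\bigl(1-\tfrac1n\bigr)^2+\tfrac1{n^2}+\tfrac{2}{n}\bigl(1-\tfrac1n\bigr)\cos 2\theta_n .
\]
With \(s:=|\sin\theta_n|\) and \(\cos2\theta_n=1-2s^2\), this simplifies to \(|a_0|^2=1-4(n-1)s^2/n^2\). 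Therefore \(q_n\) is a function of \(s\in[0,1]\) alone:
\[
q_n=g(s):=\sqrt{1-\tfrac{4(n-1)}{n^2}s^2}+\tfrac{2(n-1)}{n}s .
\]
I read the maximization claim as maximizing \(q_n(\beta)\), equivalently \(g(s)\), over \(s\in[0,1]\).

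\textbf{Step 2: monotonicity of \(g\).} Differentiating,
\[
g'(s)=\tfrac{2(n-1)}{n}\Bigl(1-\tfrac{2s}{n|a_0|}\Bigr),
\]
so \(g'(s)\ge 0\) exactly when \(|a_0|\ge 2s/n\). This is the step that needs \(n\ge4\). We have \(|a_0|^2-4s^2/n^2=1-4s^2/n\ge 1-4/n\ge0\) for all \(s\in[0,1]\) precisely when \(n\ge4\). So \(g\) is nondecreasing on \([0,1]\) and attains its maximum at \(s=1\), i.e. \(\theta_n=\pi/2\), i.e. \(\beta_\star=\pi(n-1)/n\). At this point \(|a_0|=1-2/n\) (positive since \(n\ge4\)) and \(|a_1|=2/n\), which gives
\[
q_n(\beta_\star)=1-\tfrac2n+(n-1)\tfrac2n=3-\tfrac4n .
\]

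\textbf{Step 3: evaluating \(v_0^{(p)}\).} I would use the path representation \eqref{eq:v0-closed-form} from Theorem~\ref{thm:exact-shell-transfer-recursion}. I read the initial weight there as \(1\) per starting configuration \(x_0\), consistent with \(v^{(0)}_t=|S_t(y)|\). By \eqref{eq:product-kernel-from-single-block}, \(|\langle z|U_M(\beta)|x\rangle|=|a_0|^{m-d(x,z)}|a_1|^{d(x,z)}\) is symmetric in \((x,z)\). Consequently, for any fixed \(x_j\),
\[
\sum_{x_{j-1}}|\langle x_j|U_M|x_{j-1}\rangle|=\bigl(|a_0|+(n-1)|a_1|\bigr)^m=q_n^m .
\]
This is also what Proposition~\ref{prop:shell-transfer-generating} gives at \(\zeta=1\), since both generating factors reduce to \(q_n\).

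I would then sum out \(x_0\), then \(x_1\), and so on up to \(x_{p-1}\), with \(x_p=y\) fixed; each summation contributes a factor \(q_n^m\). This yields \(v_0^{(p)}=\prod_{\ell}q_n(\beta_\ell)^m\). With every \(\beta_\ell=\beta_\star\), this is \((3-4/n)^{mp}\), which is \eqref{eq:v0-star-exact}.

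\textbf{Main obstacle.} The delicate point is Step 2: controlling the sign of \(g'\) on all of \([0,1]\) rather than just at the endpoint. The identity \(|a_0|^2-4s^2/n^2=1-4s^2/n\) settles this and shows exactly where \(n\ge4\) enters. The remaining care is bookkeeping: confirming that the initial-weight convention of \eqref{eq:v0-closed-form} is one per \(x_0\), so that no extra factor of \(n^{-m/2}\) appears in \(v_0^{(p)}\).
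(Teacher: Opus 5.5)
Your proposal is correct and follows the same route as the paper's proof. Both express $|a_1|$ and $|a_0|$ through $\theta_n(\beta)$, maximize $q_n$ at $|\sin\theta_n|=1$ to obtain $q_n(\beta_\star)=3-4/n$, and conclude via $v_0^{(p)}=\prod_{\ell}q_n(\beta_\ell)^m$. You also supply two justifications that the paper only asserts: the derivative argument through $|a_0|^2-4s^2/n^2=1-4s^2/n$, which shows exactly where $n\ge4$ is needed, and the backward summation using the symmetry of the product kernel, which establishes $v_0^{(p)}=\prod_{\ell}q_n(\beta_\ell)^m$.
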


\begin{proof} See Appendix~\ref{app:proof-constructive-radial-mass}. \end{proof}



\begin{theorem}[Finite success probability from constructive radial mass]
\label{thm:constructive-radial-mass-success}
Assume the hypotheses of
Theorem~\ref{thm:phase-aligned-path-sum-lower-bound}, and choose
the mixer angles as
\[
\beta_1=\cdots=\beta_p
=
\beta_\star
:=
\frac{\pi(n-1)}{n}.
\]
For \(n\ge 4\), the sufficient depth condition
\[
p
\ge
\left\lceil
\frac{1}{2}\log_2 n
\right\rceil
\]
gives
\[
\bigl|
\langle y\mid
\psi_p(\vec\gamma,\vec\beta)
\rangle
\bigr|^2
\ge
\cos^2\!\left(\frac{\Theta_p}{2}\right)
\]
under the same phase-alignment hypothesis.
\end{theorem}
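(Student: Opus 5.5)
The plan is to chain the two results already established: the phase-aligned lower bound of Theorem~\ref{thm:phase-aligned-path-sum-lower-bound} and the exact radial mass at the constructive angle from Proposition~\ref{prop:constructive-radial-mass-lower-bound}. The phase-blind factor $v_0^{(p)}$ then only needs to be compared against the normalization $n^{-m}$ of the uniform initial state. No new path-sum analysis should be required. The phase hypothesis is used once, through the cone factor $\cos(\Theta_p/2)$, and the mixer geometry is used once, through the closed form of $v_0^{(p)}$.

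First, I would invoke Theorem~\ref{thm:phase-aligned-path-sum-lower-bound} in its squared form \eqref{eq:phase-aligned-pointwise-prob}, which gives
\[
\bigl|\langle y\mid\psi_p(\vec\gamma,\vec\beta)\rangle\bigr|^2\ge \frac{1}{n^m}\cos^2\!\left(\frac{\Theta_p}{2}\right)\bigl(v_0^{(p)}\bigr)^2 .
\]
Since all mixer angles equal $\beta_\star=\pi(n-1)/n$ and $n\ge4$, Proposition~\ref{prop:constructive-radial-mass-lower-bound} applies and gives $v_0^{(p)}=(3-4/n)^{mp}$. Substituting yields
\[
\bigl|\langle y\mid\psi_p\rangle\bigr|^2\ge \cos^2\!\left(\frac{\Theta_p}{2}\right)\left(\frac{(3-4/n)^{2p}}{n}\right)^{m}.
\]
The statement therefore reduces to the per-coordinate inequality $(3-4/n)^{2p}\ge n$. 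Once this holds, the $m$-th power of a quantity that is at least $1$ is itself at least $1$, and the dependence on $m$, and hence on $|\mathcal X|=n^m$, disappears.

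The per-coordinate inequality is elementary. For $n\ge4$ we have $4/n\le1$, so $3-4/n\ge2$, and therefore $(3-4/n)^{2p}\ge 2^{2p}$. The depth condition $p\ge\lceil\tfrac12\log_2 n\rceil$ gives $2p\ge\log_2 n$, so $2^{2p}\ge n$. Monotonicity of $t\mapsto t^{2p}$ on $t\ge0$ then completes the chain.

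I expect the only delicate point to be bookkeeping rather than analysis. One has to check that the hypotheses of Theorem~\ref{thm:phase-aligned-path-sum-lower-bound}, namely the common arc of width $\Theta_p<\pi$, are taken at the same depth $p$ and the same angles $\beta_\star$ used in Proposition~\ref{prop:constructive-radial-mass-lower-bound}. One must also confirm that the exact formula for $v_0^{(p)}$ is the same path weight \eqref{eq:v0-closed-form} that enters the cone bound. I would state explicitly that the theorem is conditional on the phase-alignment assumption and adds no claim about when that assumption can be realized. Beyond this, the argument is a direct substitution followed by the inequality $3-4/n\ge2$.
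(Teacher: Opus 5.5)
Your proposal is correct and follows the paper's proof: substitute $v_0^{(p)}=(3-4/n)^{mp}$ from Proposition~\ref{prop:constructive-radial-mass-lower-bound} into the bound of Theorem~\ref{thm:phase-aligned-path-sum-lower-bound}, then use $3-4/n\ge 2$ together with $2p\ge\log_2 n$ to get $v_0^{(p)}\ge 2^{mp}\ge n^{m/2}$. The only difference is cosmetic: you work with the squared bound and reduce to the per-coordinate inequality $(3-4/n)^{2p}\ge n$, while the paper bounds the unsquared amplitude and squares at the end.
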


\begin{proof}
By
Proposition~\ref{prop:constructive-radial-mass-lower-bound},
the fixed mixer angle \(\beta_\star\) gives
\[
v_0^{(p)}
=
\left(3-\frac{4}{n}\right)^{mp}.
\]
Since \(n\ge4\),
\(
3-\frac{4}{n}\ge2,
\)
and the stated depth condition therefore implies
\(
v_0^{(p)}
\ge
2^{mp}
\ge
n^{m/2}.
\)
Substituting this bound into
Theorem~\ref{thm:phase-aligned-path-sum-lower-bound}
yields
\[
\bigl|
\langle y\mid
\psi_p(\vec\gamma,\vec\beta)
\rangle
\bigr|
\ge
\cos\!\left(\frac{\Theta_p}{2}\right).
\]
Squaring both sides completes the proof.
\end{proof}

\begin{remark}[Refined constructive depth threshold]
\label{rem:refined-constructive-depth-threshold}
The depth condition in
Theorem~\ref{thm:constructive-radial-mass-success}
uses only the uniform estimate
\(
3-4/n\ge2
\)
and is therefore conservative. Retaining the exact radial-growth
factor, the cancellation condition
\[
v_0^{(p)}
=
\left(3-\frac{4}{n}\right)^{mp}
\ge
n^{m/2}
\]
is equivalent to
\[
p
\ge
\frac{\ln n}
     {2\ln\!\left(3-\frac{4}{n}\right)}.
\]
Because the circuit depth is integral, the sharper sufficient
condition is consequently
\[
p
\ge
\left\lceil
\frac{\ln n}
     {2\ln\!\left(3-\frac{4}{n}\right)}
\right\rceil.
\]
For \(n=4\), this agrees with the base-two threshold. For \(n>4\),
the underlying real-valued threshold is strictly smaller than
\(\frac12\log_2 n\); after integer rounding. Asymptotically, that excess lowers the sufficient depth threshold by about 37\%, to roughly 63\% of the conservative base-two estimate. Moreover,
\[
\frac{\ln n}
     {2\ln\!\left(3-\frac{4}{n}\right)}
=
\frac{\ln n}{2\ln 3}
+
O\!\left(\frac{\ln n}{n}\right),
\]
so the leading logarithmic-depth coefficient is asymptotically
smaller than that of the base-two estimate by the factor
\(
\frac{\ln 2}{\ln 3}
\approx
0.631.
\)

\end{remark}

It is worth noting that exact lattice normalization arises when the relevant cost and penalty data are integer-valued and polynomially bounded. This condition is satisfied by the penalty terms in
Definition~\ref{def:kernel-requirement}, whereas the problem-specific
objective term need not satisfy it in general. Prior CE--QAOA analyses in Refs.~\cite{onahfund,onahfinite} show, through Riemann--Lebesgue averaging, that the relevant phase-alignment guarantees extend beyond exact lattice spectra.

\section{Example applications of the framework}
\label{sec:applications}

Connecting the shell reduction to quantum sampling tasks in constrained optimization requires a probabilistic interpretation of the mixer-induced transport. The technical results collected in Appendix~\ref{app:measure-shell-reduction} recast the shell reduction as a finite probability model. This interpretation turns the absolute mixer
envelope into an object that can be studied using classical probability, time-inhomogeneous Markov chains, lumpability, drift, variance, and concentration tools. Recall that Theorem~\ref{thm:exact-shell-transfer-recursion} gives the unnormalized absolute path-weight recursion
\[
v_r^{(\ell)}
=
\sum_{t=0}^{m}T_{r,t}(\beta_\ell)v_t^{(\ell-1)}.
\]
After normalization by the one-layer product mass \(q_n(\beta)^m\), the same
coefficients define
\[
P_{r,t}^{(\beta)}
=
\frac{T_{r,t}(\beta)}{q_n(\beta)^m}.
\]
Corollary~\ref{cor:normalized-shell-kernel-stochastic} proves that \(P^{(\beta)}\) is a stochastic shell kernel.
Proposition~\ref{prop:shell-transfer-product-pushforward} identifies this kernel as the radial pushforward of a coordinatewise product probability kernel, while Theorem~\ref{thm:depth-p-radial-markov} gives the corresponding depth-\(p\) time-inhomogeneous radial Markov recursion. This normalized process uses the quantities
\[
\bigl|\langle z|U_M(\beta)|x\rangle\bigr|
\]
to define the \(L^1\)-normalized probabilistic envelope of the absolute mixer path expansion, thereby turning the shell coefficients into a finite stochastic object. Classical probability and Markov-chain tools can therefore be applied directly to the radial envelope of the mixer path expansion.

This viewpoint yields several consequences for quantum algorithms designed on product spaces. First, the normalized shell process clarifies the onset mechanism for quantum sampling advantage in the phase-aligned regime. Second,
finite-size transport diagnostics show how different \(XY\)-mixer geometries select different reduced statistics. Third, measured bitstrings can be further analyzed through problem-dependent classical statistics to obtain practical
probes of mixing, leakage, topology effects, phase effects, convergence, and application-level constraint structure.

\subsection{The onset of quantum sampling advantage}
\label{subsec:probabilistic-normalized-shell-kernel}
For the complete-graph block-\(XY\) mixer, Corollary~\ref{cor:two-species-poisson-binomial}
gives an explicit one-step law.  Fix \(y\in \mathcal X=[n]^m\), and suppose
\(R_\ell=d(Z^{(\ell)},y)=t\).  Then
\[
R_{\ell+1}
\;\stackrel{d}{=}\;
\mathrm{Bin}\bigl(m-t,p_+(\beta_{\ell+1})\bigr)
+
\mathrm{Bin}\bigl(t,1-p_-(\beta_{\ell+1})\bigr),
\]
where
\[
p_+(\beta)
=
\frac{(n-1)|a_1(\beta)|}{q_n(\beta)},
\qquad
p_-(\beta)
=
\frac{|a_1(\beta)|}{q_n(\beta)}.
\]
Hence
\[
\mathbb E[R_{\ell+1}\mid R_\ell=t]
=
m p_+(\beta_{\ell+1})
+
\bigl(1-p_+(\beta_{\ell+1})-p_-(\beta_{\ell+1})\bigr)t,
\]
and
\[
\mathrm{Var}(R_{\ell+1}\mid R_\ell=t)
=
(m-t)p_+(\beta_{\ell+1})\bigl(1-p_+(\beta_{\ell+1})\bigr)
+
t\,p_-(\beta_{\ell+1})\bigl(1-p_-(\beta_{\ell+1})\bigr).
\]
These formulas give closed drift and variance laws for the normalized radial envelope. For a fixed angle \(\beta\) with \(|a_1(\beta)|>0\), the affine drift has the
fixed point
\[
r_\infty
=
\frac{m p_+(\beta)}{p_+(\beta)+p_-(\beta)}
=
m\frac{n-1}{n}.
\]
The stationary one-coordinate mismatch probability is
\[
\pi_{\mathrm{mis}}
=
\frac{p_+(\beta)}{p_+(\beta)+p_-(\beta)}
=
\frac{n-1}{n},
\]
and the stationary radial law is
\[
\rho_\infty(r)
=
\binom{m}{r}
\left(\frac{n-1}{n}\right)^r
\left(\frac{1}{n}\right)^{m-r}.
\]
However, this is the Hamming-shell law of a uniformly random configuration relative to
\(y\).  The normalized complete-graph envelope relaxes toward the product-space bulk. At the constructive angle
\[
\beta_\star=\frac{\pi(n-1)}{n},
\]
Proposition~\ref{prop:constructive-radial-mass-lower-bound} gives
\[
|a_0(\beta_\star)|=\frac{n-2}{n},
\qquad
|a_1(\beta_\star)|=\frac{2}{n},
\qquad
q_n(\beta_\star)=3-\frac{4}{n}.
\]
Thus
\[
p_+(\beta_\star)
=
\frac{2(n-1)}{3n-4},
\qquad
p_-(\beta_\star)
=
\frac{2}{3n-4}.
\]
Starting from \(R_\ell=0\), the normalized radial envelope satisfies
\[
\mathbb E[R_{\ell+1}\mid R_\ell=0]
=
m\frac{2(n-1)}{3n-4}
\sim
\frac{2m}{3}.
\]
The same angle that maximizes the unnormalized one-layer growth factor
\(q_n(\beta)^m\) drives the normalized radial envelope toward the bulk.

This exposes a necessary mechanism for quantum sampling advantage in constrained quantum optimization algorithms designed on product spaces and shows how to engineer for it. The
normalized radial chain describes the product-space shape of the absolute
envelope, whose complete-graph fixed point lies in the ordinary bulk.  The
factor \(q_n(\beta)^m\) controls the total unnormalized \(\ell_1\)-path mass
available to the quantum path expansion.  Advantage arises when the phase
structure converts this available mass into Born probability on useful
configurations.  The lattice-normalized phase-alignment regime in
Theorem~\ref{thm:phase-aligned-path-sum-lower-bound} suggests an engineering criterion for this conversion.  More general instance-dependent phase structures, cost landscapes, and mixer geometries can be analyzed as mechanisms for harvesting the same available path mass or as certificates that phase cancellation prevents such harvesting. We leave further phase engineering for future work.


\subsection{Finite-size transport diagnostics for XY mixer geometries}
\label{subsec:numerical-transport-diagnostics}
\footnote{The numerical data and supporting script used in this work are publicly available on Zenodo~\cite{Michielsen2026GeometryInterferenceData}}

We now examine how the transport layer changes when the one-hot block is preserved and the local XY topology is varied. We compare complete, cycle, path, star, and sparse chordal-cycle block geometries. For a local interaction graph \(G\) on \([n]\), with adjacency matrix \(A_G\), we use the normalized one-block one-excitation-sector mixer
\[
        U_G(\beta)=\exp\!\left(-i\beta A_G/\|A_G\|_2\right).
\]
The first diagnostic is the one-site absolute transfer factor
\[
        q_G(\beta;y_0)
        =
        \sum_{z\in[n]}
        \left|\langle z|U_G(\beta)|y_0\rangle\right|.
\]
For \(G=K_n\), this quantity is the analytical expression in \cref{eq:one-layer-weight}. It is independent of the reference symbol \(y_0\) and equals
\[
        q_n(\beta)=|a_0(\beta)|+(n-1)|a_1(\beta)|.
\]
The analytic theory gives \(q_n(\beta_\star)=3-4/n\) at the constructive
complete-graph angle. Figure~\ref{fig:mixer-q-factor} shows that the same
encoded one-hot alphabet supports different transfer envelopes under different
XY geometries. One readily observes that amplitude contribution growth is a mixer-geometric property. A change in \(G\) changes the amount of absolute mass made available prior to interference.

The second diagnostic tests closure of a proposed reduced statistic under the
absolute transfer kernel. For a partition
\(\mathcal P=\{C_0,C_1,\ldots\}\) of the local alphabet, define
\[
\Delta_{\mathcal P}(G,\beta)
=
\max_{C_i,C_j}
\max_{a,b\in C_i}
\left|
\sum_{z\in C_j}|\langle z|U_G(\beta)|a\rangle|
-
\sum_{z\in C_j}|\langle z|U_G(\beta)|b\rangle|
\right|.
\]
This quantity is the lumpability defect. The condition
\(\Delta_{\mathcal P}=0\) means that all local states inside the same class
\(C_i\) send the same total absolute transfer into every class \(C_j\). The
partition then defines an exact reduced transport model. A positive value of
\(\Delta_{\mathcal P}\) measures the amount of local geometry hidden by the
chosen shell variable.

Figure~\ref{fig:mixer-lumpability-defect} gives the main structural lesson of this diagnostic. The complete graph has zero defect for the Hamming partition. This confirms the exact Hamming-shell recursion in our analysis. Path, cycle, star, and sparse geometries show residual defect under ordinary Hamming shells. Topology-refined partitions reduce the defect by using the geometry of the local graph. The defect therefore acts as a numerical test for a proposed radial variable. A large defect identifies missing transport geometry. A small or zero defect identifies an appropriate reduced statistic.

The third diagnostic connects reduced transport geometry to sampling behavior.
We apply one radial phase-marking layer on each block, with local phase
\(\exp(-i\gamma \mathbf 1_{\{z\ne y_0\}})\), followed by
\(U_G(\beta_\star)\), where \(\beta_\star\) maximizes \(q_G(\beta;y_0)\) on the scanned interval. For \(m\) independent product blocks, this induces a shell law on \(r=d(x,y)\). Figure~\ref{fig:mixer-shell-sampling-profiles} shows that the same radial phase marking produces different Hamming-shell distributions for different mixer topologies because the mixer topology determines which neighborhoods of the reference configuration receive mass efficiently. These numerical results show that mixer geometry determines the available transfer mass, the correct reduced statistic, and the resulting shell sampling behavior on the encoded product manifold.

\begin{figure*}[t]
    \centering
    \begin{subfigure}[t]{0.32\textwidth}
        \centering
        \includegraphics[width=\linewidth]{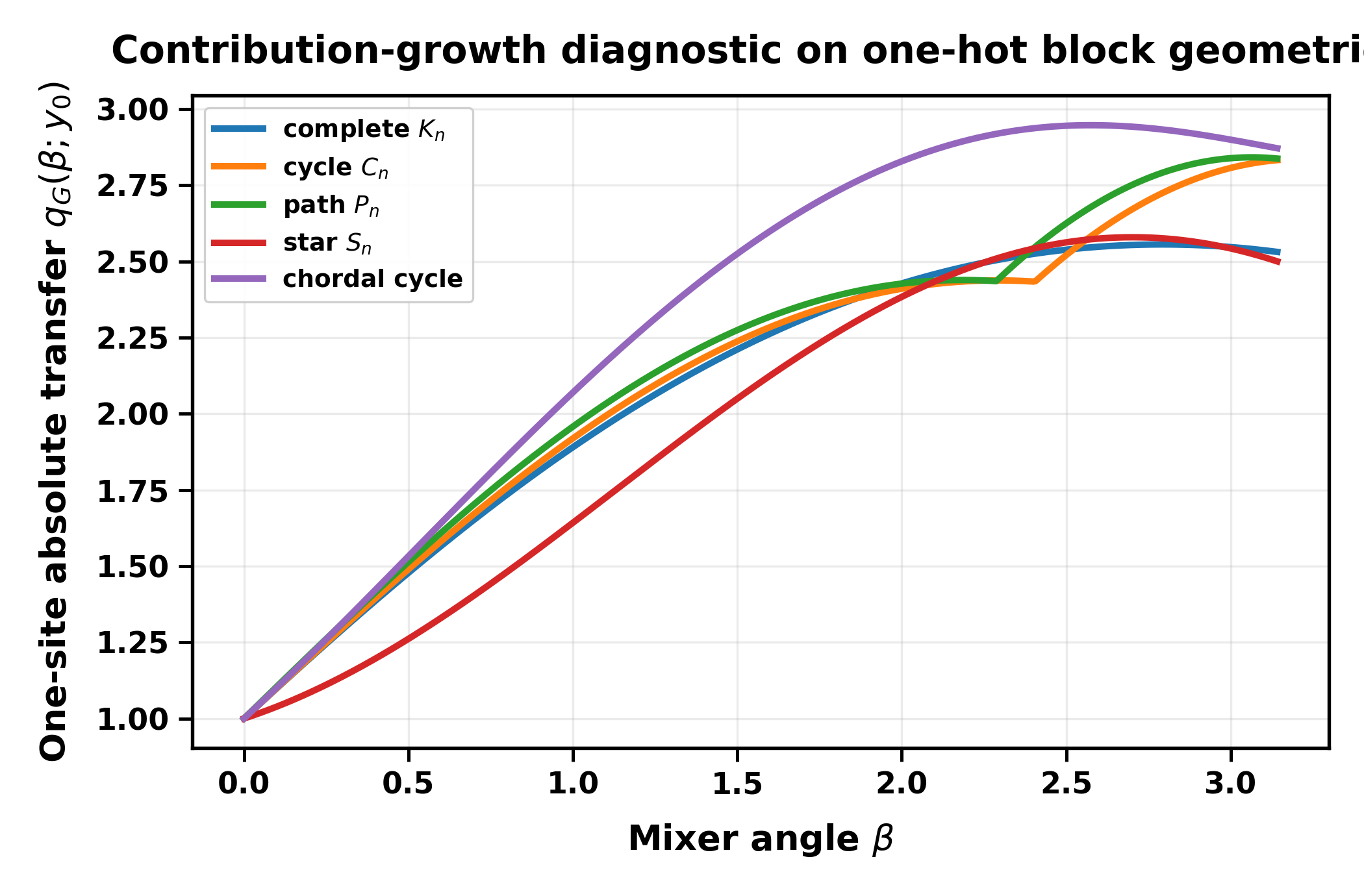}
        \caption{One-site absolute transfer factor \(q_G(\beta;y_0)\).}
        \label{fig:mixer-q-factor}
    \end{subfigure}
    \hfill
    \begin{subfigure}[t]{0.32\textwidth}
        \centering
        \includegraphics[width=\linewidth]{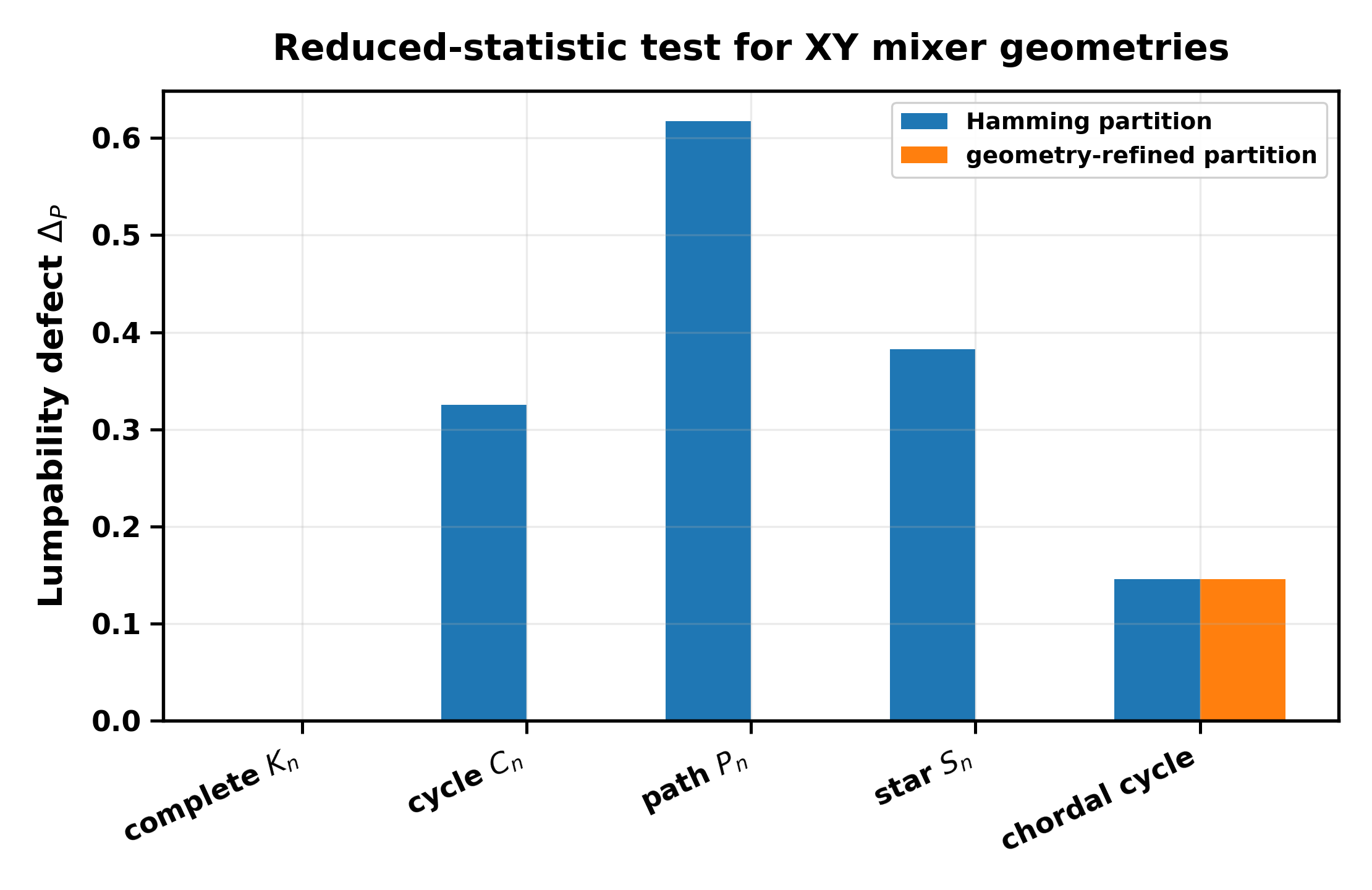}
        \caption{Lumpability defect for Hamming and geometry-refined partitions.}
        \label{fig:mixer-lumpability-defect}
    \end{subfigure}
    \hfill
    \begin{subfigure}[t]{0.32\textwidth}
        \centering
        \includegraphics[width=\linewidth]{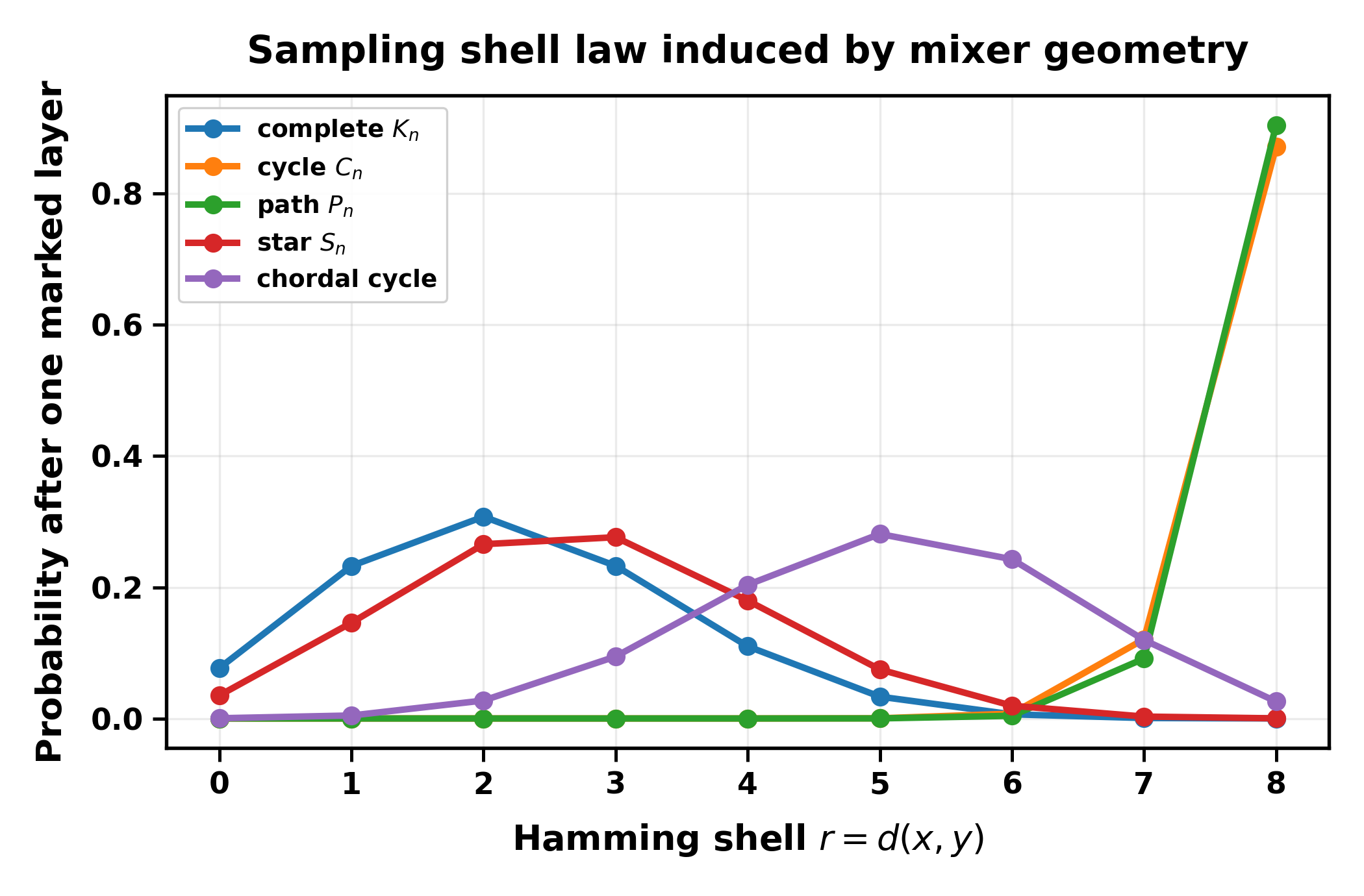}
        \caption{Shell sampling profiles after one radial phase-marking layer.}
        \label{fig:mixer-shell-sampling-profiles}
    \end{subfigure}
    \caption{Finite-size transport diagnostics for XY mixer geometries on the OFM product manifold. The local alphabet size is \(n=9\), the shell-profile experiment uses \(m=8\) product blocks, and each mixer is normalized by the spectral norm of its one-block adjacency matrix. The left panel shows that mixer geometry changes the available one-site absolute transfer mass. The middle panel shows that complete-graph XY closes exactly on Hamming shells, while other geometries select refined reduced statistics such as cyclic distance, path distance, or graph-distance data. The right panel shows that these transport differences already produce distinct shell sampling laws before any problem-specific objective phase or post sampling map is introduced. In addition, one observes that the complete-graph curve is not the largest transfer envelope over the full angle range but it has zero Hamming-shell lumpability defect. The complete-graph curve has zero Hamming-shell lumpability defect and gives the exact shell-transfer law used in the analysis.  Some sparse or boundary-sensitive geometries exhibit larger transfer envelopes over parts of the angle range but suffer from defects that limit the efficiency of amplitude transfer.  }
    \label{fig:xy-mixer-transport-diagnostics}
\end{figure*}

\subsection{Shell-transport preservation as a transpilation diagnostic}
\label{subsec:shell-transport-transpilation}

Here we introduce shell-transport preservation as a problem-structure-aware
transpilation diagnostic motivated by the need to retain or track the algorithmic guarantees derived analytically from circuit implementations. This line of application combines insights from symmetry reductions of quantum dynamics with application-aware compilation methods. This is important because unitary approximations during gate synthesis can interfere with symmetry driven guarantees, such as the ones derived in Sec. \ref{sec:results}. 

Symmetry reductions of quantum dynamics have been studied through quotient graphs, where an invariant subspace supports an effective quantum walk on a smaller graph and can govern properties such as hitting behavior
\cite{KroviBrun2007Quotient}. Related ideas arise in quantum sampling through
lumpable quantum channels, for which transition probabilities depend only on
the equivalence classes defining the reduced dynamics
\cite{Franca2018PerfectSampling}. In parallel, application-aware and
algorithm-oriented compilation methods have emphasized that circuit
optimization should exploit the structure of the algorithm and preserve the
quality of its output alongside generic circuit-level metrics
\cite{Quetschlich2024ApplicationAware,Ji2025AOQMAP,Zhu2024Coqa}.

The key observation is that the finite-size diagnostics introduced in
Section~\ref{subsec:numerical-transport-diagnostics} also provide
algorithm-specific criteria for assessing a compiled mixer. In addition, these transport diagnostics remain tractable as the problem size grows and therefore provide operational quantities for circuit synthesis and execution. For a proposed Trotter depth, matching order, and routing strategy, the local kernel
\(\widehat K_s\), the associated quotient defect, and the resulting
shell-transfer bounds can be computed classically before the circuit is
executed. Residual distortions resulting from the synthesis can be used to guide shot budgets or modify the confidence interval in the underlying guarantees.

To consider a transpiled mixer unitary, let
\[
\mathcal E(K_n)
=
\mathcal M_1\sqcup\cdots\sqcup
\mathcal M_{\chi'(K_n)}
\]
be an edge-coloring of the complete mixer graph into disjoint matchings.
Exchange operations within a matching act on disjoint qubit pairs and can
therefore be implemented in parallel. For \(s\) first-order Trotter steps, the
compiled mixer on block \(b\) is
\begin{equation}
U_{M,s}^{(b)}(\beta_\star)
=
\left[
\prod_{c=1}^{\chi'(K_n)}
\prod_{(u,v)\in\mathcal M_c}
\exp\!\left[
-\frac{i\beta_\star}{2s(n-1)}
\left(
X_u^{(b)}X_v^{(b)}
+
Y_u^{(b)}Y_v^{(b)}
\right)
\right]
\right]^s,
\label{eq:trotterized-block-mixer}
\end{equation}
where the matching layers are applied in a fixed order. A finite Trotter
depth preserves the encoded one-hot subspace while potentially breaking the
permutation symmetry responsible for exact Hamming-shell lumpability. We
assess this effect using the normalized absolute transfer kernel. For local
symbols \(a,c\in[n]\), define
\begin{align}
\widehat Q_s(c)
&=
\sum_{a\in[n]}
\left|
\langle e_a|
U_{M,s}^{(b)}(\beta_\star)
|e_c\rangle
\right|,
\\
\widehat K_s(a\mid c)
&=
\frac{
\left|
\langle e_a|
U_{M,s}^{(b)}(\beta_\star)
|e_c\rangle
\right|
}{
\widehat Q_s(c)
}.
\label{eq:trotter-local-normalized-kernel}
\end{align}
Because the compiled mixer factorizes across blocks, the normalized product
kernel is
\begin{equation}
K_s(z\mid x)
=
\prod_{b=1}^{m}
\widehat K_s(z_b\mid x_b),
\qquad
x,z\in \mathcal \mathcal X.
\label{eq:trotter-product-normalized-kernel}
\end{equation}
For a reference configuration \(y\), the shell profile induced by a source
configuration \(x\) is
\begin{equation}
P_s(r\mid x)
=
\sum_{z\in S_r(y)}
K_s(z\mid x),
\qquad
r=0,\ldots,m.
\label{eq:trotter-shell-profile}
\end{equation}

For the ideal complete-graph mixer, all source configurations in the same
shell \(S_t(y)\) induce the same shell-transfer law. The departure from this
property is measured by the Trotterization-induced lumpability defect ($\Delta_{\mathrm{lump}}(s)$) and shell-kernel error ($\Delta_{\mathrm{shell}} (s)$) where the former is given as 
\begin{equation}
\Delta_{\mathrm{lump}}(s)
=
\max_{0\leq t\leq m}
\max_{x,x'\in S_t(y)}
\frac{1}{2}
\sum_{r=0}^{m}
\left|
P_s(r\mid x)
-
P_s(r\mid x')
\right|.
\label{eq:trotter-lumpability-defect}
\end{equation}
Approximate lumpability alone does not ensure that the compiled mixer
reproduces the intended shell-transition law. We therefore also define the
shell-kernel error
\begin{equation}
\Delta_{\mathrm{shell}}(s)
=
\max_{0\leq t\leq m}
\max_{x\in S_t(y)}
\frac{1}{2}
\sum_{r=0}^{m}
\left|
P_s(r\mid x)
-
P_{r,t}^{(\beta_\star)}
\right|,
\label{eq:trotter-shell-error}
\end{equation}
where \(P_{r,t}^{(\beta_\star)}\) is the normalized ideal shell kernel derived
in Appendix~\ref{app:measure-shell-reduction}. These two quantities
distinguish preservation of the shell partition from agreement with the correct transport law. For the ideal complete-graph mixer, the normalized one-block kernel depends
only on whether the local input and output symbols agree,
\begin{equation}
\widehat K_{\mathrm{ideal}}(a\mid c)
=
\begin{cases}
d_\star, & a=c,\\
o_\star, & a\neq c.
\end{cases}
\label{eq:ideal-local-two-class-kernel}
\end{equation}
This equality-versus-inequality quotient is the local symmetry that produces
exact Hamming-shell lumpability on the product space. Its violation under
Trotterization is quantified by
\begin{align}
\delta_{\mathrm{diag}}(s)
&=
\max_a \widehat K_s(a\mid a)
-
\min_a \widehat K_s(a\mid a),
\\
\delta_{\mathrm{off}}(s)
&=
\max_{a\neq c}\widehat K_s(a\mid c)
-
\min_{a\neq c}\widehat K_s(a\mid c),
\\
\delta_{\mathrm{quot}}(s)
&=
\max\!\left\{
\delta_{\mathrm{diag}}(s),
\delta_{\mathrm{off}}(s)
\right\}.
\label{eq:local-quotient-defect}
\end{align}
The exact local quotient defect vanishes when the compiled normalized kernel retains the two-class structure required by the ideal Hamming-shell reduction. To control agreement with the ideal shell kernel, define
\begin{align}
\epsilon_{\mathrm{diag}}(s)
&=
\max_a
\left|
\widehat K_s(a\mid a)-d_\star
\right|,
\\
\epsilon_{\mathrm{off}}(s)
&=
\max_{a\neq c}
\left|
\widehat K_s(a\mid c)-o_\star
\right|.
\end{align}

\begin{proposition}[Local certification of product-space transport]
\label{prop:local-product-transport-bounds}
For every source shell \(S_t(y)\),
\begin{equation}
\Delta_{\mathrm{lump}}^{(t)}(s)
\leq
\min\!\left\{
1,\,
(m-t)\delta_{\mathrm{diag}}(s)
+
t\,\delta_{\mathrm{off}}(s)
\right\},
\label{eq:lumpability-local-upper-bound}
\end{equation}
and
\begin{equation}
\Delta_{\mathrm{shell}}^{(t)}(s)
\leq
\min\!\left\{
1,\,
(m-t)\epsilon_{\mathrm{diag}}(s)
+
t\,\epsilon_{\mathrm{off}}(s)
\right\}.
\label{eq:shell-kernel-local-upper-bound}
\end{equation}
\end{proposition}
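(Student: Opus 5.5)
The plan is to treat each shell profile $P_s(\cdot\mid x)$ as the law of a sum of independent Bernoulli variables, and then bound total-variation distances between such sums coordinate by coordinate. The truncation at $1$ in both bounds is automatic, because every quantity involved is a total-variation distance between probability laws on $\{0,\dots,m\}$. So only the linear terms need proof.

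\textbf{Step 1 (Bernoulli reduction).} By \eqref{eq:trotter-product-normalized-kernel}, if $Z\sim K_s(\cdot\mid x)$, then the coordinates $Z_b$ are independent with laws $\widehat K_s(\cdot\mid x_b)$. Hence $d(Z,y)=\sum_b \mathbf 1\{Z_b\neq y_b\}$ is a sum of independent Bernoulli variables $B_b$.
\begin{itemize}
\item If $x_b=y_b$, the success parameter of $B_b$ is $1-\widehat K_s(y_b\mid y_b)$, which is one minus a diagonal entry.
\item If $x_b\neq y_b$, the parameter is $1-\widehat K_s(y_b\mid x_b)$, which is one minus an off-diagonal entry.
\end{itemize}
Thus $P_s(\cdot\mid x)$ is determined by the multiset of $m-t$ parameters of the first kind and $t$ parameters of the second kind, where $t=d(x,y)$.

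For the ideal kernel \eqref{eq:ideal-local-two-class-kernel} I would check that the same construction reproduces $P^{(\beta_\star)}_{r,t}$. The relevant identifications are $d_\star=|a_0|/q_n$ and $o_\star=|a_1|/q_n$. With these, the ideal law has $m-t$ parameters equal to $1-d_\star$ and $t$ parameters equal to $1-o_\star$. This agrees with the generating function of Proposition~\ref{prop:shell-transfer-generating} after normalization by $q_n^m$, and it matches the two-species binomial law quoted in Section~\ref{subsec:probabilistic-normalized-shell-kernel}.

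\textbf{Step 2 (coupling bound).} For independent Bernoulli families $(B_b)$ and $(B'_b)$ with parameters $p_b$ and $p'_b$, the plan is to couple each pair maximally, so that $\Pr[B_b\neq B'_b]=|p_b-p'_b|$. This gives
\[
\mathrm{TV}\Bigl(\textstyle\sum_b B_b,\sum_b B'_b\Bigr)\le \Pr\Bigl[\textstyle\sum_b B_b\neq\sum_b B'_b\Bigr]\le \sum_b|p_b-p'_b|.
\]
Since a sum does not depend on the order of its summands, the coordinates of one family may be matched to those of the other by any bijection.

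\textbf{Step 3 (pairing).} Take $x,x'\in S_t(y)$ and match the $m-t$ agreeing coordinates of $x$ with those of $x'$, and likewise the $t$ disagreeing coordinates. Each matched pair of the first kind differs by a difference of two diagonal entries, so by at most $\delta_{\mathrm{diag}}(s)$. Each pair of the second kind differs by a difference of two off-diagonal entries, so by at most $\delta_{\mathrm{off}}(s)$. Summing over pairs yields \eqref{eq:lumpability-local-upper-bound}.

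For \eqref{eq:shell-kernel-local-upper-bound}, I would compare $x$ directly with the ideal family from Step 1. The per-coordinate differences are $|\widehat K_s(y_b\mid y_b)-d_\star|\le\epsilon_{\mathrm{diag}}(s)$ for agreeing coordinates and $|\widehat K_s(y_b\mid x_b)-o_\star|\le\epsilon_{\mathrm{off}}(s)$ for disagreeing ones.

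\textbf{Main obstacle.} The subtle step is the identification in Step 1: $P_{r,t}^{(\beta_\star)}$ must be exactly the Bernoulli-sum law with parameters $1-d_\star$ and $1-o_\star$. This includes checking that $d_\star$ and $o_\star$ are the normalized values $|a_0(\beta_\star)|/q_n(\beta_\star)$ and $|a_1(\beta_\star)|/q_n(\beta_\star)$, so that the ideal kernel lumps consistently. Once that bookkeeping is fixed, the coupling argument goes through directly.
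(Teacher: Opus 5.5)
Your proposal is correct and follows essentially the same argument as the paper. Both reduce each shell profile to a Poisson-binomial law with $m-t$ diagonal-class and $t$ off-diagonal-class parameters, pair the Bernoulli variables within each class (or with the ideal parameters $1-d_\star$, $1-o_\star$), couple each pair maximally, apply a union bound, and truncate at $1$ because total variation is at most one. Your explicit check that $d_\star=|a_0(\beta_\star)|/q_n(\beta_\star)$ and $o_\star=|a_1(\beta_\star)|/q_n(\beta_\star)$ make the ideal Bernoulli family reproduce $P^{(\beta_\star)}_{r,t}$ (via Proposition~\ref{prop:shell-transfer-generating} and Corollary~\ref{cor:two-species-poisson-binomial}) fills in a bookkeeping step that the paper leaves implicit.
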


The proof follows by coupling the coordinatewise Bernoulli shell increments
and is given in
Appendix~\ref{app:trotter-transport-certification}.

This result provides a scalable certificate because all four local quantities are obtained from an \(n\times n\) one-block kernel, without constructing the full \(n^m\)-dimensional product kernel. For a prescribed tolerance \(\varepsilon>0\), let

\begin{align}
\overline{\Delta}_{\mathrm{lump}}(s)
&=
\max_{0\leq t\leq m}
\min\!\left\{
1,\,
(m-t)\delta_{\mathrm{diag}}(s)
+
t\,\delta_{\mathrm{off}}(s)
\right\},
\\
\overline{\Delta}_{\mathrm{shell}}(s)
&=
\max_{0\leq t\leq m}
\min\!\left\{
1,\,
(m-t)\epsilon_{\mathrm{diag}}(s)
+
t\,\epsilon_{\mathrm{off}}(s)
\right\}.
\end{align}
The certified Trotter step count is
\begin{equation}
s_{\mathrm{cert}}(\varepsilon)
=
\min\left\{
s\in\mathbb N:
\overline{\Delta}_{\mathrm{lump}}(s)\leq\varepsilon
\ \text{and}\
\overline{\Delta}_{\mathrm{shell}}(s)\leq\varepsilon
\right\},
\label{eq:certified-trotter-step}
\end{equation}
with corresponding logical exchange depth
\begin{equation}
D_{\mathrm{cert}}(\varepsilon)
=
s_{\mathrm{cert}}(\varepsilon)\chi'(K_n),
\qquad
\chi'(K_n)
=
\begin{cases}
n-1, & n\ \text{even},\\
n, & n\ \text{odd}.
\end{cases}
\label{eq:certified-trotter-depth}
\end{equation}
The gate-count expression and the details of the certification
procedure are collected in Appendix~\ref{app:trotter-transport-certification}.


For the numerical investigation, we consider the position-based TSP encoding
with \(m=n\) one-hot blocks. The reference \(y\in[n]^n\) is a valid
permutation, while the source configurations range over the full row-one-hot
product space \(\mathcal X=[n]^n\). We evaluate
\(n=m=9,\ldots,25\) at
\(\beta_\star=\pi(n-1)/n\). For \(s=1,\ldots,10\), we compute the exact local
quotient quantities and evaluate arbitrary and extremal sources from every Hamming shell. The complete numerical protocol is given in Appendix~\ref{app:trotter-transport-certification}. Table~\ref{tab:full-product-trotter-summary} below shows that a Trotterized mixer can approximate the ideal unitary accurately while retaining substantial defects in the reduced transport law. 
At \(s=10\), this fidelity-based Trotter error lies between
\(2.11\times10^{-3}\) and \(2.60\times10^{-3}\) over
\(n=9,\ldots,25\). In contrast, the empirical shell-kernel defects reported
in the table range from \(4.67\times10^{-2}\) to \(7.06\times10^{-2}\),
while the empirical lumpability defects range from \(9.20\times10^{-2}\) to
\(1.40\times10^{-1}\). Consequently, a product-formula realization that
appears accurate according to conventional process fidelity can still
substantially distort the quotient transport structure entering the
analytical sampling guarantee. The shell-transport criterion therefore detects residual algorithmically relevant structure that is hidden by a generic fidelity measure. Extensions to
connectivity-dependent routing, native-gate compilation, and noisy execution are left for future work.

\begin{table}[H]
\centering
\caption{
Full-product-space Trotter transport diagnostics for valid-tour references
with \(n=m=9,\ldots,25\). The empirical defects at \(s=10\) maximize over
sampled arbitrary sources in the full row-one-hot product space and are lower
bounds on the worst-case shell defects. The exact local quotient defect
measures the breaking of the equality-versus-inequality local kernel.
The certified values \(s_{\mathrm{cert}}\) and \(D_{\mathrm{cert}}\) are the
minimum Trotter steps and logical exchange depth for which rigorous product-shell upper bounds on both defects are at most \(10^{-2}\).
}
\label{tab:full-product-trotter-summary}
\small
\begin{tabular}{cccccc}
\toprule
\textbf{\(n=m\)} &
\textbf{\(\widehat\Delta_{\mathrm{lump}}(10)\)} &
\textbf{\(\widehat\Delta_{\mathrm{shell}}(10)\)} &
\textbf{\(\delta_{\mathrm{quot}}(10)\)} &
\textbf{\(s_{\mathrm{cert}}\)} &
\textbf{\(D_{\mathrm{cert}}\)} \\
\midrule
9 & \num{9.196e-02} & \num{4.670e-02} & \num{2.486e-02} & 225 & 2025 \\
10 & \num{9.927e-02} & \num{5.038e-02} & \num{2.273e-02} & 228 & 2052 \\
11 & \num{1.056e-01} & \num{5.353e-02} & \num{2.219e-02} & 245 & 2695 \\
12 & \num{1.106e-01} & \num{5.605e-02} & \num{2.052e-02} & 247 & 2717 \\
13 & \num{1.151e-01} & \num{5.826e-02} & \num{1.983e-02} & 259 & 3367 \\
14 & \num{1.187e-01} & \num{6.008e-02} & \num{1.852e-02} & 260 & 3380 \\
15 & \num{1.220e-01} & \num{6.171e-02} & \num{1.784e-02} & 269 & 4035 \\
20 & \num{1.334e-01} & \num{6.728e-02} & \num{1.408e-02} & 282 & 5358 \\
25 & \num{1.402e-01} & \num{7.061e-02} & \num{1.171e-02} & 294 & 7350 \\
\bottomrule
\end{tabular}
\end{table}

\subsection{Problem-dependent classical maps, hardware probes and feasibility repair}
\label{subsec:multidimensional-jump-statistics}

After measurement, the quantum circuit produces a classical distribution over
the encoded product space \(\mathcal X\). Problem-dependent classical processing of
these samples forms a third layer of the hybrid algorithm, following mixer
transport and phase-sensitive interference. This layer can expose the
constraint structure present in the measured distribution, guide
feasibility-restoring operations, and separate the contribution of the
quantum sampler from that of the subsequent classical processing. 
This is especially relevant for the classical processing of quantum-generated samples from near-term devices, where measured samples may be biased, noisy, or infeasible~\cite{Maciejewski_2025,hadfield2026ndar,Maciejewski2026}.

Without loss of generality, let
\[
x^{(1)},\ldots,x^{(N)}\in \mathcal X
\]
be samples drawn from the measured Born distribution, and let
\[
T:\mathcal X\longrightarrow\mathcal T
\]
be a problem-dependent classical map. Its value may record Hamming distance, penalty level, collision structure, resource usage, route loads, subtours,
machine loads, color conflicts, or missing and repeated assignments. The
corresponding empirical distribution is
\[
\widehat p_T(\tau)
=
\frac{1}{N}
\sum_{j=1}^{N}
\mathbf 1\!\left\{T\!\left(x^{(j)}\right)=\tau\right\},
\qquad
\tau\in\mathcal T.
\]
This distribution reveals which combinatorial features of the optimization
problem are represented in the quantum-generated samples.

The penalty Hamiltonian in
Definition~\ref{def:kernel-requirement} provides a natural source of such maps. If
\begin{equation}
H_{\mathrm{pen}}(x)=F(T(x)),
\label{eq:scheme-alignment-present}
\end{equation}
then \(T\) resolves the combinatorial structure underlying each penalty
value. Configurations with the same total penalty may therefore be
distinguished according to the particular constraint violations that produced
it. The algebraic structure of this relation, including the decomposition of
penalty levels into \(T\)-fibres, multidimensional contingency-table
statistics, and the connection to coding-theoretic methods, is developed in
Appendix~\ref{app:constraint-statistics-coding}.

This additional resolution is especially useful for feasibility repair. Let
\[
\mathcal F\subseteq \mathcal X
\]
be the feasible set and let
\[
\mathcal R:\mathcal X\longrightarrow\mathcal F
\]
be a deterministic repair operation satisfying
\[
\mathcal R(x)=x,
\qquad x\in\mathcal F.
\]
In many applications, the correction applied to a sample depends on its
multidimensional constraint statistic:
\[
\mathcal R(x)=\mathcal R_{T(x)}(x).
\]
The value \(T(x)\) identifies the violation pattern, while
\(\mathcal R_{T(x)}\) specifies the corresponding correction. Routing repair
may depend on visit multiplicities, subtour indicators, and route loads.
Scheduling repair may depend on unassigned jobs, machine overloads, and
time-slot conflicts. Capacity-constrained allocation may require the complete
vector of resource excesses rather than the total penalty alone.

The distribution after repair is
\[
\widehat p_{\mathcal R}(z)
=
\frac{1}{N}
\sum_{j=1}^{N}
\mathbf 1\!\left\{
\mathcal R\!\left(x^{(j)}\right)=z
\right\},
\qquad z\in\mathcal F.
\]
For a target set \(A\subseteq\mathcal F\), the raw and repaired success
probabilities are
\[
P_{\mathrm{raw}}(A)
=
\sum_{x\in A}p(x),
\qquad
P_{\mathrm{repair}}(A)
=
\sum_{x:\,\mathcal R(x)\in A}p(x).
\]
Their difference quantifies the contribution of classical repair to the
observed success probability. The correction distance, reduction in penalty,
and change in objective value can likewise be used to determine whether the
repair performs a local correction of a promising quantum sample or a more
substantial reconstruction.

The layered formalism makes this attribution transparent. The shell-transfer
analysis describes the mass generated by the mixer. The phase analysis
determines how that mass contributes to the measured Born distribution. The
classical map \(T\) exposes the problem structure contained in the samples,
and the repair map \(\mathcal R\) determines how infeasible samples are
converted into feasible solutions. Improvements created by repair can
therefore be distinguished from improvements produced directly by the
quantum circuit.

This distinction is particularly important for near-term quantum devices.
Noise, finite sampling, imperfect phase control, and approximate mixer
implementations can produce infeasible samples that nevertheless retain
useful combinatorial structure. Multidimensional statistics identify which
samples are close to feasibility, which violations are readily correctable,
and which parts of the quantum output can be preserved during repair.
Discarding all infeasible samples would remove this information.

The same empirical distributions can also provide problem-dependent hardware probes. Comparisons across ideal simulation, Trotterized circuits, and noisy execution can reveal changes in collision profiles, load imbalances, subtour
frequencies, constraint violations, and repair costs that are invisible to a single objective value or feasibility rate. They therefore connect circuit behavior directly to the application structure represented in the measured samples. Finally, the multidimensional statistics introduced here provide a route for importing powerful tools from coding theory and algebraic combinatorics into the analysis of constrained quantum optimization. They enlarge the class of product-space structures that can be studied through distance distributions, orthogonal polynomial expansions, weight enumerators, association schemes, and related methods. We leave detailed application-specific studies of the resulting repair strategies and hybrid quantum--classical workflows to future work.

\section{Discussion}
\label{sec:discussion}


The central structural insight of this work is that quantum dynamics on a large encoded product space can admit a finite shell-resolved transport description. For the complete-graph block-\(XY\) mixer, the local transition
amplitude depends only on whether two symbols agree or differ. Together with the coordinatewise factorization of the mixer, this symmetry reduces the absolute transport on \(\mathcal X=[n]^m\) to the \(m+1\) Hamming shells around a reference configuration. Proposition
\ref{prop:shell-transfer-generating} gives the one-layer transfer law, while
Theorem \ref{thm:exact-shell-transfer-recursion} propagates it through arbitrary circuit depth. This gives a broader local-to-global interpretation of product-space quantum
dynamics. The local kernel determines the elementary moves available within each block, while the product structure combines these moves into global transport across the encoded space. In general, the symmetry of the mixer determines
which reduced statistic describes this transport. Complete-graph mixing
selects ordinary Hamming distance, while path, cycle, and nonhomogeneous
mixers generally require path distance, cyclic distance, boundary data, or
other refined statistics.

However, the problem structure determines which transport geometry is most useful. Local
cardinality constraints define the admissible state space of each decision variable, such as one-hot or \(k\)-hot blocks while the remaining routing,
assignment, coloring, scheduling, matching, or capacity constraints couple the blocks and create problem-dependent transport bottlenecks. A mixer can
preserve the correct local sector while still providing transport that is poorly matched to these local and global relations. The finite-size diagnostics in
Section~\ref{subsec:numerical-transport-diagnostics} make this distinction explicit.

A second central insight is the separation of transport from interference.
The shell-transfer recursion determines the absolute path mass supplied by
the mixer, while the actual quantum amplitude remains a coherent sum of complex path contributions. Lemma~\ref{lem:exact-path-sum} restores the phase information, and Theorem
\ref{thm:phase-aligned-path-sum-lower-bound} gives a sufficient condition under which the available transport mass survives as target amplitude. Our constructive result shows that the complete-graph mixer can generate enough absolute path mass to compensate for the large normalization factor that comes from the encoded search space. Under the phase-alignment condition, this produces the success guarantee in Theorem
\ref{thm:constructive-radial-mass-success}. The guarantee therefore combines
a geometric transport resource with a phase-coherent mechanism for harvesting
that resource.

After normalization, the shell-transfer coefficients define a finite radial Markov process. This makes drift, variance, concentration, and long-time behavior accessible through classical probability theory. In the complete-graph case, the normalized process moves toward the ordinary bulk of
the product space rather than toward the reference configuration. Mixer
transport alone therefore supplies path mass without target
concentration. The phase layers determine whether this mass is converted into
Born probability on a useful configuration or target set.

The mixer and Trotterization studies show that preservation of the encoded
sector does not guarantee preservation of the relevant transport law.
Different logical mixers generate different shell structures, and a finite
product-formula realization can preserve excitation number while distorting
the shell dynamics of the ideal mixer. The numerical results further show
that high process fidelity can coexist with substantial shell-transport
error. The shell diagnostics therefore provide an algorithm-specific measure
of implementation quality.

Problem-specific classical maps form a third layer after transport and interference. They can  expose the constraint structure present in measured samples, guide feasibility-restoring operations, and separate the contribution of the quantum distribution from that of classical post-processing. This distinction is especially important for near-term
devices, where infeasible samples may retain useful and readily repairable structure. Multidimensional statistics can identify the specific violation patterns hidden by a scalar Hamming distance or total penalty and can therefore support more selective repair operations. These statistics also provide a route for applying tools from coding theory
and algebraic combinatorics to constrained quantum optimization. Distance distributions, Krawtchouk expansions, weight enumerators, association schemes, and related methods can be used to study a broader class of product-space structures and constraint statistics. The corresponding
mathematical connection is developed further in
Appendix~\ref{app:constraint-statistics-coding}.

Constrained quantum annealing may provide a second realization of the block-factorized transport geometry considered here when the driver Hamiltonian preserves independent fixed-cardinality sectors and acts through
block-local terms~\cite{HenSpedalieri2016,HenSarandy2016,leipold2026imposing}. However, the connection remains structural. Unlike the quantum alternating operator framework, standard constrained quantum annealing does not generally follow
the repeated alternation of mixer and diagonal phase layers used in the multilayer path expansion. The shell formalism may therefore inform the transport geometry generated by such drivers, while the phase-alignment and finite-depth sampling guarantees developed here apply specifically to the alternating circuit setting with persistent block factorized strucuture.

Natural extensions include consideration of
broader product spaces and heterogeneous mixer geometries, phase control beyond exact lattice normalization, and transport-preserving analysis of higher-order Trotter product formulas, connectivity constraints, and the effects of hardware noise \cite{onahfinite, Quetschlich2024ApplicationAware,Ji2025AOQMAP,Zhu2024Coqa}. Moreover our results may be further applied to quantum algorithms 
for problems encoded directly to emerging qudit-based quantum hardware~\cite{
chi2022programmable, Ringbauer2022UniversalQudit, nguyen2024empowering, kim2025ultracoherent, venturelli2025near,hadfield2026ndar}.

\subsection{Conclusion}
\label{subsec:conclusion}

We have developed a framework for separating mixer transport from
phase-coherent interference in constrained quantum optimization on product
spaces. For the complete-graph block-\(XY\) mixer, the factorized local
transition kernel produces an exact shell-resolved description on
\(\mathcal X=[n]^m\). Proposition~\ref{prop:shell-transfer-generating} gives the
one-layer shell-transfer generating function, while
Theorem~\ref{thm:exact-shell-transfer-recursion} gives the corresponding closed form multilayer recursion. The resulting shell-transfer law quantifies the absolute path mass generated by the mixer. The exact path expansion and the phase-alignment condition determine whether this mass contributes constructively to the target amplitude. The constructive mixer angle of
Proposition~\ref{prop:constructive-radial-mass-lower-bound}, together with Theorem~\ref{thm:phase-aligned-path-sum-lower-bound}, yields the success guarantee in Theorem~\ref{thm:constructive-radial-mass-success}; identifying a concrete mechanism by which transport across an exponentially large product space can be converted into a sampling probability whose scale is independent of the ambient Hilbert-space dimension, the product-space cardinality, and the feasible-set size.

We show that the normalized shell process provides a finite  probabilistic description of the same mixer dynamics. Our analysis shows how the absolute transfer envelope moves across the product space and why mixer transport alone does not concentrate on the target. The mixer supplies the available path mass, while the phase layers provide the constructive interference required to convert that mass into useful Born probability.

The application results show that the transport description also provides a practical diagnostic. It distinguishes logical mixer geometries, identifies the statistics required to describe their transport, and measures how finite Trotterization changes the shell law of the complete-graph mixer. The comparison with process fidelity shows that a small generic unitary error can coexist with substantial distortion of the transport structure relevant to the sampling guarantee.

The resulting framework connects the structure of the optimization problem to the full hybrid quantum--classical workflow. Local constraints determine the encoded blocks, while global constraints determine how those blocks interact, where useful configurations lie, and which transport bottlenecks the mixer must overcome. The mixer generates transport on the resulting product space, and the phase layers determine whether the available path mass is converted into useful Born probability. After measurement, problem-specific classical maps expose the constraint
structure present in the generated samples and can guide feasibility-restoring repair operations on noisy near-term devices. This makes it possible to separate the contribution of the quantum-generated distribution from the improvement supplied by classical post-processing. Such attribution is especially important for near-term devices, where noisy or approximately implemented circuits may produce infeasible samples that nevertheless retain useful and readily repairable combinatorial structure. The separation of transport, interference, implementation, and post-measurement processing provides a direct language for analyzing and designing constrained quantum optimization algorithms from their product-space dynamics.

\section*{Data and code availability}

The numerical data and supporting code used in this work are available on Zenodo at
\href{https://doi.org/10.5281/zenodo.21302533}
{https://doi.org/10.5281/zenodo.21302533}.

\section*{Conflict of Interests}
All authors declare no competing interests.

\section*{Acknowledgements}
The authors gratefully acknowledge helpful discussions initiated at the International Workshop on Quantum Optimization held in Oslo, Norway in March 2026. S.H. acknowledges support from U.S. Department of Energy under grant No. DE-SC0026126.

\appendix
\appendix

\section{Appendix}

\subsection{A catalogue of mixer geometries}
\label{app:xy-mixer-catalogue}
Table~\ref{tab:xy-mixer-variants-product spaces} summarizes several mixer constructions acting on fixed-Hamming-weight encoded Hilbert spaces. The mixer unitary acts on \(\mathcal H_{\mathrm{OH}}\), while its matrix elements \[ \langle z|U_M(\beta)|x\rangle, \qquad x,z\in \mathcal X, \] induce a transition kernel indexed by the corresponding classical configurations. The shell-reduction framework exposes the resulting trade-offs by comparing the transition kernel induced by each topology, the amount and range of transfer available per layer, the symmetry that permits a reduced description, and the quantum resources required to implement the mixer. Section ~\ref{subsec:numerical-transport-diagnostics} compares these effects through finite-size transport diagnostics, including the available transfer mass, the appropriate reduced statistic, and the induced shell-sampling behavior. 



\begin{table}[H]
\centering
\footnotesize
\setlength{\tabcolsep}{3pt}
\renewcommand{\arraystretch}{1.18}
\caption{Common \(XY\)-mixer variants viewed as constrained product-space
transport geometries. The table indicates the invariant space generated by each
variant and the corresponding product-space structure on which shell or refined
shell diagnostics can be applied.}
\label{tab:xy-mixer-variants-product spaces}
\begin{adjustbox}{max width=\textwidth}
\begin{tabular}{
@{}
>{\raggedright\arraybackslash}p{0.13\linewidth}
>{\raggedright\arraybackslash}p{0.22\linewidth}
>{\raggedright\arraybackslash}p{0.49\linewidth}
>{\raggedright\arraybackslash}p{0.10\linewidth}
@{}}
\toprule
\textbf{Mixer} &
\textbf{Typical form} &
\textbf{product-space consequence} &
\textbf{Examples} \\
\midrule

Ring or cycle \(XY\) &
\(\sum_i (X_iX_{i+1}+Y_iY_{i+1})\), cyclic indices &
Preserves each fixed-Hamming-weight sector. Applied blockwise to one-hot
registers, it gives a product of cyclic local transport spaces. &
\cite{Hadfield2019AOA,Wang2020XYMixers,Cook2019MaxKVertexCover} \\

Path or chain \(XY\) &
\(\sum_{i=1}^{n-1}(X_iX_{i+1}+Y_iY_{i+1})\) &
Preserves excitation number with open-boundary local transport. In block form,
the encoded space remains a product space, but local amplitudes depend on path
geometry. &
\cite{He2023A,KordonowyLeipold2026XYLie} \\

Complete-graph \(XY\) &
\(\sum_{i<j}(X_iX_j+Y_iY_j)\) &
Preserves fixed Hamming weight and treats all local symbols symmetrically. The transition amplitude depends only on equality versus inequality, giving the product kernel used in the shell reduction. &
\cite{Cook2019MaxKVertexCover,Hadfield2019AOA,KordonowyLeipold2026XYLie} \\

One-hot block \(XY\) &
\(H_M=\sum_{b=1}^m H_{XY}^{(b)}\) &
Each block remains in \(\mathcal H_1\). The full encoded manifold is the
product alphabet space \(\mathcal X=[n]^m\), as in assignment, routing, coloring, and
scheduling encodings. &
\cite{Hadfield2019AOA,onahce} \\

\(k\)-hot / Dicke \(XY\) &
\(H_{XY}(G)\) restricted to \(\mathcal H_k=\{x:\|x\|_1=k\}\) &
Preserves cardinality. With \(m\) blocks, the encoded manifold is a product of
local \(k\)-subset spaces, \(X_k\). &
\cite{Cook2019MaxKVertexCover,B_rtschi_2019} \\

Partitioned / trotterized \(XY\) &
Product over edge matchings or commuting edge sets &
Preserves the same constrained sector as the underlying \(XY\) graph, while the implemented unitary may differ. This gives a
hardware-aware product-space dynamics. &
\cite{Wang2020XYMixers,awasthi2026constraintpreservingxymixerstrotterized,He2023A} \\

Nearby-value \(XY\) &
Exchange between selected nearby local symbols &
Creates a product space with additional local geometry. The reduced dynamics
may depend on cyclic distance, path distance, or another local relation rather
than only equality versus inequality. &
\cite{Hadfield2019AOA,Wang2020XYMixers} \\

Arbitrary-topology \(XY\) &
\(H_{XY}(G)\) for a chosen interaction graph \(G\) &
Preserves Hamming weight, while the transport geometry is controlled by \(G\).
Blockwise use again produces constrained product spaces, with topology-specific
radial reductions. &
\cite{KordonowyLeipold2026XYLie} \\

Multi-angle \(XY\) &
\(\sum_{(i,j)\in E}\beta_{ij}(X_iX_j+Y_iY_j)\) &
Preserves the same excitation-number constraint but changes controllability,
trainability, and the effective geometry of transport on the manifold. &
\cite{KordonowyLeipold2026XYLie} \\

Extended \(XY\) &
\(XY\) plus diagonal, \(ZZ\), warm-start, or counterdiabatic terms &
Can preserve the same cardinality when the added generators
respect excitation number. &
\cite{KordonowyLeipold2026XYLie,Ruan2025XYCD} \\

\bottomrule
\end{tabular}
\end{adjustbox}
\end{table}

\subsection{Technical Proofs}
\label{app:proofs}

\begin{proof}[Proof of Lemma \ref{lem:single-block-xy-amplitudes}]
We work inside the subspace \(\mathcal H_1\).
Using the raising and lowering operators
\[
\sigma_j^+ = \frac{1}{2}(X_j+iY_j),
\qquad
\sigma_j^- = \frac{1}{2}(X_j-iY_j),
\]
one has
\[
X_uX_v+Y_uY_v
=
2(\sigma_u^+\sigma_v^-+\sigma_u^-\sigma_v^+).
\]
Hence the exchange term moves a single excitation between the two sites
\(u\) and \(v\) with 
\[
(X_uX_v+Y_uY_v)\ket u=2\ket v,
\qquad
(X_uX_v+Y_uY_v)\ket v=2\ket u,
\]
and
\[
(X_uX_v+Y_uY_v)\ket r=0
\qquad
\text{if } r\notin\{u,v\}.
\]
Therefore, restricted to \(\mathcal H_1\),
\begin{equation}
\label{eq:xy-restricts-to-complete-graph}
\left.
\sum_{1\le u<v\le n}
(X_uX_v+Y_uY_v)
\right|_{\mathcal H_1}
=
2\sum_{1\le u<v\le n}
\bigl(\ket u\bra v+\ket v\bra u\bigr)
=
2A_{K_n},
\end{equation}
where \(A_{K_n}\) is the adjacency matrix of the complete graph on
\([n]\). Thus
\[
\left.
\widetilde H_{XY}^{(b)}
\right|_{\mathcal H_1}
=
\frac{2}{n-1}A_{K_n}.
\]
Since the mixer contains the factor \(1/2\) in the exponent, we obtain
\begin{equation}
\label{eq:block-mixer-adjacency-exponential}
\left.
U_M^{(b)}(\beta)
\right|_{\mathcal H_1}
=
\exp\!\left(
-i\beta\frac{A_{K_n}}{n-1}
\right).
\end{equation}

It remains to exponentiate the complete-graph adjacency matrix. Let
\[
\ket{s_{\mathrm{blk}}}
=
\frac{1}{\sqrt n}\sum_{r=1}^{n}\ket r,
\qquad
P_s=\ket{s_{\mathrm{blk}}}\bra{s_{\mathrm{blk}}},
\qquad
P_s^\perp=I-P_s.
\]
The complete-graph adjacency matrix satisfies
\[
A_{K_n}\ket{s_{\mathrm{blk}}}
=
(n-1)\ket{s_{\mathrm{blk}}},
\]
because every vertex of \(K_n\) has degree \(n-1\). On the orthogonal
complement of \(\ket{s_{\mathrm{blk}}}\), the eigenvalue is \(-1\). Indeed,
if \(\ket\phi=\sum_{r=1}^n \phi_r\ket r\) satisfies
\(\sum_{r=1}^n \phi_r=0\), then
\[
A_{K_n}\ket\phi
=
\sum_{r=1}^n
\left(
\sum_{\substack{q=1\\q\neq r}}^n \phi_q
\right)
\ket r
=
\sum_{r=1}^n
(-\phi_r)\ket r
=
-\ket\phi.
\]
Equivalently,
\begin{equation}
\label{eq:complete-graph-projector-decomposition}
A_{K_n}
=
(n-1)P_s-P_s^\perp.
\end{equation}
Substituting \eqref{eq:complete-graph-projector-decomposition} into
\eqref{eq:block-mixer-adjacency-exponential} and using the orthogonality of
\(P_s\) and \(P_s^\perp\), we get
\begin{equation}
\label{eq:block-mixer-projector-form}
U_M^{(b)}(\beta)
=
e^{-i\beta}P_s
+
e^{i\beta/(n-1)}P_s^\perp .
\end{equation}

Now take matrix elements in the one-hot basis. Since
\[
P_s
=
\frac{1}{n}
\sum_{r,q=1}^{n}\ket r\bra q,
\]
we have
\[
\bra r P_s\ket q=\frac{1}{n},
\qquad
\bra r P_s^\perp\ket q
=
\bra r(I-P_s)\ket q
=
\delta_{rq}-\frac{1}{n}.
\]
Therefore
\begin{align}
\label{eq:block-mixer-general-entry}
\bra r U_M^{(b)}(\beta)\ket q
&=
e^{-i\beta}\bra r P_s\ket q
+
e^{i\beta/(n-1)}
\bra r P_s^\perp\ket q
\\
&=
\frac{1}{n}e^{-i\beta}
+
\left(\delta_{rq}-\frac{1}{n}\right)e^{i\beta/(n-1)}.
\end{align}
If \(r=q\), this gives
\[
\bra r U_M^{(b)}(\beta)\ket r
=
\frac{1}{n}e^{-i\beta}
+
\left(1-\frac{1}{n}\right)e^{i\beta/(n-1)}
=
a_0(\beta).
\]
If \(r\neq q\), this gives
\[
\bra r U_M^{(b)}(\beta)\ket q
=
\frac{1}{n}
\left(
e^{-i\beta}
-
e^{i\beta/(n-1)}
\right)
=
a_1(\beta).
\]
Thus the single-block complete-graph XY mixer has  one diagonal
transition amplitude and one off-diagonal transition amplitude.
\end{proof}

\begin{proof}[Proof of Proposition~\ref{prop:constructive-radial-mass-lower-bound}]
\label{app:proof-constructive-radial-mass}
Define
\[
\theta_n(\beta):=\frac{n\beta}{2(n-1)}.
\]
From
\[
a_1(\beta)
=
\frac1n\left(e^{-i\beta}-e^{i\beta/(n-1)}\right),
\]
the phase difference is \(n\beta/(n-1)\), and therefore
\[
|a_1(\beta)|
=
\frac{2}{n}
\left|
\sin\left(\frac{n\beta}{2(n-1)}\right)
\right|
=
\frac{2}{n}|\sin\theta_n(\beta)|.
\]

Similarly,
\[
a_0(\beta)
=
\frac1n e^{-i\beta}
+
\left(1-\frac1n\right)e^{i\beta/(n-1)}.
\]
Hence
\begin{align}
|a_0(\beta)|^2
&=
\frac1{n^2}
+
\left(1-\frac1n\right)^2
+
\frac{2(n-1)}{n^2}
\cos\left(\frac{n\beta}{n-1}\right)
\nonumber\\
&=
1-\frac{4(n-1)}{n^2}\sin^2\theta_n(\beta).
\end{align}

Consequently,
\begin{equation}
\label{eq:qn-closed-form}
q_n(\beta)
=
\left(
1-\frac{4(n-1)}{n^2}\sin^2\theta_n(\beta)
\right)^{1/2}
+
\frac{2(n-1)}{n}
|\sin\theta_n(\beta)|.
\end{equation}

For \(n\ge4\), the right-hand side of
\eqref{eq:qn-closed-form} is maximized at
\[
|\sin\theta_n(\beta)|=1.
\]
Choosing
\[
\beta_\star=\frac{\pi(n-1)}{n}
\]
gives
\[
\theta_n(\beta_\star)=\frac{\pi}{2},
\]
and therefore
\[
|a_1(\beta_\star)|=\frac2n,
\qquad
|a_0(\beta_\star)|=\frac{n-2}{n}.
\]
Thus
\[
q_n(\beta_\star)
=
\frac{n-2}{n}
+
(n-1)\frac2n
=
3-\frac4n.
\]

Finally, using
\[
v_0^{(p)}
=
\prod_{\ell=1}^{p}q_n(\beta_\ell)^m
\]
and setting
\[
\beta_1=\cdots=\beta_p=\beta_\star
\]
gives
\[
v_0^{(p)}
=
\left(3-\frac4n\right)^{mp},
\]
which is Eq.~\eqref{eq:v0-star-exact}.
\end{proof}

\section{Measure-theoretic characterization of the shell reduction}
\label{app:measure-shell-reduction}


\paragraph{Normalized product kernel.}

From \cref{eq:block} and \cref{eq:one-layer-weight} define
\[
P_{r,t}^{(\beta)}:=\frac{T_{r,t}(\beta)}{q_n(\beta)^m}.
\]

For \(a,b\in[n]\), let
\[
k_\beta(a;b)
:=
\begin{cases}
|a_0(\beta)|, & a=b,\\[2mm]
|a_1(\beta)|, & a\neq b,
\end{cases}
\]
and normalize it to the one-site probability kernel
\[
p_\beta(a\mid b):=\frac{k_\beta(a;b)}{q_n(\beta)}.
\]
For \(x=(x_1,\dots,x_m)\in \mathcal X=[n]^m\), define the associated product measure
\[
\mu_x^\beta
:=
\bigotimes_{v=1}^{m}\mu_{x,v}^\beta,
\qquad
\mu_{x,v}^\beta(\{a\})=p_\beta(a\mid x_v).
\]
For a fixed reference configuration \(y\in \mathcal X\), define the radial map
\[
R_y:\mathcal X\to\{0,1,\dots,m\},
\qquad
R_y(z):=d(z,y).
\]

\begin{lemma}[Finite Fubini principle for radial shell reduction]
\label{lem:fubini-radial-shell}
Let \(y\in \mathcal X=[n]^m\), and let
\[
R_y:\mathcal X\to\{0,\dots,m\},
\qquad
R_y(z)=d(z,y),
\]
be the radial map. For each depth \(\ell\), define the nonnegative path weight
\[
W_\ell(x_0,\dots,x_\ell)
:=
\prod_{j=1}^{\ell}
\bigl|\langle x_j|U_M(\beta_j)|x_{j-1}\rangle\bigr|.
\]
Then for every function \(f:\{0,\dots,m\}\to\mathbb R\),
\begin{equation}
\label{eq:fubini-radial-shell}
\sum_{x_0,\dots,x_\ell\in \mathcal X}
f\!\left(R_y(x_\ell)\right)
W_\ell(x_0,\dots,x_\ell)
=
\sum_{r=0}^{m}
f(r)
\sum_{\substack{x_0,\dots,x_\ell\in \mathcal X\\ x_\ell\in S_r(y)}}
W_\ell(x_0,\dots,x_\ell).
\end{equation}
In particular, choosing \(f=\mathbf 1_{\{r\}}\) gives
\[
\sum_{\substack{x_0,\dots,x_\ell\in \mathcal X\\ x_\ell\in S_r(y)}}
W_\ell(x_0,\dots,x_\ell)
=
v_r^{(\ell)}.
\]
Thus the shell vector \(v^{(\ell)}\) is the pushforward of the full
nonnegative path-weight measure under the terminal radial map
\[
(x_0,\dots,x_\ell)\longmapsto R_y(x_\ell).
\]
\end{lemma}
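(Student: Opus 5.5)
The plan is to prove \eqref{eq:fubini-radial-shell} by partitioning the finite index set of paths according to the terminal shell. Since \(R_y\) takes values in \(\{0,\dots,m\}\) and the shells \(S_0(y),\dots,S_m(y)\) partition \(\mathcal X\), the path set \(\mathcal X^{\ell+1}\) is the disjoint union over \(r\) of the sets \(\{(x_0,\dots,x_\ell): x_\ell\in S_r(y)\}\). A finite sum over a disjoint union splits as the sum of the sums over the pieces. On the \(r\)-th piece the factor \(f(R_y(x_\ell))\) is the constant \(f(r)\), so it can be pulled out of the inner sum. This gives \eqref{eq:fubini-radial-shell} directly. Because all sums are finite, this is just the finite Fubini, or regrouping, identity. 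Nonnegativity of \(W_\ell\) is not even needed here. It only matters later, for interpreting the result as a measure.

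For the specialization, we take \(f=\mathbf 1_{\{r\}}\). The left-hand side then collapses to the sum of \(W_\ell\) over paths ending in \(S_r(y)\). We identify this with \(v_r^{(\ell)}\) by invoking the path representation \eqref{eq:vr-path-meaning} of Theorem~\ref{thm:exact-shell-transfer-recursion}. One point needs attention. In that theorem the initial weight on \(x_0\) is \(1\), consistent with \(v^{(0)}_t=|S_t(y)|\), and this matches the unweighted sum over \(x_0\) in \(W_\ell\). I will check this explicitly at \(\ell=0\). There \(W_0\equiv1\) (empty product), and the sum over \(x_0\in S_r(y)\) equals \(|S_r(y)|=v_r^{(0)}\), as in \eqref{eq:v0-def-general}.

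The pushforward statement then follows by reading \eqref{eq:fubini-radial-shell} measure-theoretically. We define the nonnegative measure \(\mathsf W_\ell\) on \(\mathcal X^{\ell+1}\) with point masses \(W_\ell\). The identity says that \(\int f\circ R_y(x_\ell)\,d\mathsf W_\ell=\sum_r f(r)v_r^{(\ell)}\) for all test functions \(f\). This is exactly the defining property of the image measure under \((x_0,\dots,x_\ell)\mapsto R_y(x_\ell)\), and its point masses are the \(v_r^{(\ell)}\).

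I expect no real obstacle. The only delicate point is the bookkeeping of the initial-weight convention in Theorem~\ref{thm:exact-shell-transfer-recursion}, which is what makes the identification with \(v_r^{(\ell)}\) exact.
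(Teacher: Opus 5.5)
Your proposal is correct and follows essentially the same route as the paper: partition the terminal configuration \(x_\ell\) according to the disjoint shells \(S_r(y)\), pull the constant \(f(r)\) out of each block, and identify the indicator specialization with \(v_r^{(\ell)}\) through the path representation in Theorem~\ref{thm:exact-shell-transfer-recursion}. Your explicit check that the unit initial weight on \(x_0\) matches \(v^{(0)}_t=|S_t(y)|\) makes explicit a convention the paper leaves implicit, but it does not change the argument.
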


\begin{proof}
Since all sums are finite, we may partition the terminal configurations
\(x_\ell\in \mathcal X\) according to the disjoint shell decomposition
\[
\mathcal X=\bigsqcup_{r=0}^{m}S_r(y).
\]
Therefore
\[
\sum_{x_0,\dots,x_\ell\in \mathcal X}
f(R_y(x_\ell))W_\ell(x_0,\dots,x_\ell)
=
\sum_{r=0}^{m}
\sum_{\substack{x_0,\dots,x_\ell\in \mathcal X\\ x_\ell\in S_r(y)}}
f(r)W_\ell(x_0,\dots,x_\ell),
\]
which is \eqref{eq:fubini-radial-shell}. The final identity follows from
Theorem~\ref{thm:exact-shell-transfer-recursion}.
\end{proof}

The finite Fubini principle makes explicit the measure-theoretic content of the
shell reduction. Before phases are taken into account, the product mixer defines
a nonnegative path-weight measure on the full path space \(\mathcal X^{\ell+1}\). The
radial map \(R_y(x_\ell)=d(x_\ell,y)\) then pushes this full path measure forward
to a measure on the shell space \(\{0,\dots,m\}\). The shell recursion admits a particularly useful pointwise corollary.

\begin{corollary}[Normalized shell-count recursion]
\label{cor:normalized-shell-count-recursion}
Let
\[
v_t^{(0)}:=|S_t(y)|=\binom{m}{t}(n-1)^t,
\qquad
v_r^{(\ell)}
=
\sum_{t=0}^{m} T_{r,t}(\beta_\ell)\,v_t^{(\ell-1)},
\qquad \ell\ge 1,
\]
be the unnormalized shell-transfer recursion from
\cref{thm:exact-shell-transfer-recursion}. Define
\[
\widehat v_r^{(\ell)}
:=
\left(\prod_{j=1}^{\ell} q_n(\beta_j)^{-m}\right) v_r^{(\ell)},
\qquad
\widehat v_r^{(0)}:=v_r^{(0)}.
\]
Then
\begin{equation}
\label{eq:normalized-v-recursion-general}
\widehat v_r^{(\ell)}
=
\sum_{t=0}^{m}
P_{r,t}^{(\beta_\ell)}\,\widehat v_t^{(\ell-1)}.
\end{equation}
Moreover, if
\[
\rho_t^{(0)}:=\frac{|S_t(y)|}{n^m}=\frac{v_t^{(0)}}{n^m},
\]
is the shell law of a uniformly random initial configuration in \(\mathcal X\), then
\[
\widehat v_r^{(\ell)}=n^m\,\rho_r^{(\ell)}.
\]
Here \(\rho^{(\ell)}\) is defined by the normalized recursion
\[
\rho_r^{(\ell)}
=
\sum_{t=0}^{m}P_{r,t}^{(\beta_\ell)}\rho_t^{(\ell-1)}.
\]
\end{corollary}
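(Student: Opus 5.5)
The plan is to prove both claims of Corollary~\ref{cor:normalized-shell-count-recursion} by a single induction on the depth \(\ell\). The only input from earlier is the unnormalized recursion of Theorem~\ref{thm:exact-shell-transfer-recursion} together with the definition \(P_{r,t}^{(\beta)}=T_{r,t}(\beta)/q_n(\beta)^m\).

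The first step is to write the normalizing prefactor as a telescoping product. Set
\[
c_\ell:=\prod_{j=1}^{\ell}q_n(\beta_j)^{-m},\qquad c_0=1,
\]
so that \(c_\ell=q_n(\beta_\ell)^{-m}c_{\ell-1}\). Multiplying the recursion \(v_r^{(\ell)}=\sum_t T_{r,t}(\beta_\ell)v_t^{(\ell-1)}\) by \(c_\ell\) gives
\[
\widehat v_r^{(\ell)}=\sum_{t=0}^{m}\frac{T_{r,t}(\beta_\ell)}{q_n(\beta_\ell)^m}\,c_{\ell-1}v_t^{(\ell-1)}=\sum_{t=0}^{m}P_{r,t}^{(\beta_\ell)}\,\widehat v_t^{(\ell-1)}.
\]
This is \eqref{eq:normalized-v-recursion-general}. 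The identity is purely algebraic and needs \(q_n(\beta_\ell)\neq 0\). That condition always holds, because \(q_n(\beta)\ge|a_0(\beta)|+(n-1)|a_1(\beta)|\ge \bigl|\sum_r\langle r|U|q\rangle\bigr|\cdot\)(unitarity gives column norm one), so \(q_n\ge 1\) by \(\ell^1\ge\ell^2\).

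The second step is to identify \(\widehat v^{(\ell)}\) with \(n^m\rho^{(\ell)}\). At \(\ell=0\) this holds by definition, since \(\widehat v^{(0)}=v^{(0)}=n^m\rho^{(0)}\). Both \(\widehat v^{(\ell)}\) and \(n^m\rho^{(\ell)}\) satisfy the same linear recursion with the same matrices \(P^{(\beta_\ell)}\). Linearity and induction on \(\ell\) therefore give \(\widehat v^{(\ell)}=n^m\rho^{(\ell)}\) for all \(\ell\).

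No step is a real obstacle. The only points needing care are checking that \(q_n(\beta_\ell)>0\), so that \(P\) is well defined, and keeping the indexing of the product \(c_\ell\) consistent with the layer angles \(\beta_\ell\). For a consistency check, and not as part of the proof, note that \(\rho^{(0)}\) sums to one. Stochasticity of \(P\) would then make each \(\rho^{(\ell)}\) a probability vector. Stochasticity follows from the generating function of Proposition~\ref{prop:shell-transfer-generating} at \(\zeta=1\): for \(t\) mismatched coordinates it gives \((|a_0|+(n-1)|a_1|)^t(|a_0|+(n-1)|a_1|)^{m-t}=q_n^m\), so each column of \(P^{(\beta)}\) sums to one.
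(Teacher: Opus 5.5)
Your proof is correct and takes the same route as the paper. Dividing the unnormalized recursion by $q_n(\beta_\ell)^m$ turns $T_{r,t}(\beta_\ell)$ into $P_{r,t}^{(\beta_\ell)}$, and $\widehat v^{(\ell)}=n^m\rho^{(\ell)}$ then follows because both vectors obey the same linear recursion from the same initial vector. Your check that $q_n(\beta)\ge 1$, which the paper leaves implicit, is sound but stated in a muddled way. It should simply read $q_n(\beta)=\sum_r|\langle r|U_M^{(b)}(\beta)|q\rangle|\ge 1$, since a unitary column has $\ell^1$ norm at least its $\ell^2$ norm, which equals $1$.
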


\begin{proof}
Since
\[
T_{r,t}(\beta_\ell)=q_n(\beta_\ell)^m P_{r,t}^{(\beta_\ell)},
\]
the recursion immediately normalizes to
\eqref{eq:normalized-v-recursion-general}. Since
\[
\sum_{t=0}^{m}v_t^{(0)}=|\mathcal X|=n^m,
\]
the second claim follows by comparing the normalized shell to the
radial Markov recursion.
\end{proof}

\subsection{Shell transfer as a radial pushforward of a product kernel}

\begin{proposition}[Shell transfer as a radial pushforward of a product kernel]
\label{prop:shell-transfer-product-pushforward}
Fix \(x,y\in \mathcal X\), and let
\[
t:=d(x,y).
\]
Then the pushforward law \((R_y)_\#\mu_x^\beta\) depends only on \(t\), not on
the specific pair \((x,y)\). More precisely,
\begin{equation}
\label{eq:radial-pushforward-T-general}
\Pr_{\mu_x^\beta}(R_y(Z)=r)
=
\frac{T_{r,t}(\beta)}{q_n(\beta)^m},
\qquad r=0,\dots,m.
\end{equation}
Equivalently, the probability generating function of the radial pushforward law is
\begin{equation}
\label{eq:radial-pgf-general}
\sum_{r=0}^{m}
\Pr_{\mu_x^\beta}(R_y(Z)=r)\,\zeta^r
=
\left(
\frac{|a_1(\beta)|+\bigl(|a_0(\beta)|+(n-2)|a_1(\beta)|\bigr)\zeta}
     {q_n(\beta)}
\right)^t
\left(
\frac{|a_0(\beta)|+(n-1)|a_1(\beta)|\zeta}
     {q_n(\beta)}
\right)^{m-t}.
\end{equation}
Multiplying by \(q_n(\beta)^m\) recovers the unnormalized shell-transfer
generating function
\begin{equation}
\label{eq:radial-pgf-unnormalized-general}
\sum_{r=0}^{m} T_{r,t}(\beta)\,\zeta^r
=
\Bigl(|a_1(\beta)| + \bigl(|a_0(\beta)|+(n-2)|a_1(\beta)|\bigr)\zeta\Bigr)^t
\Bigl(|a_0(\beta)| + (n-1)|a_1(\beta)|\zeta\Bigr)^{m-t}.
\end{equation}
\end{proposition}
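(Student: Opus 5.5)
The plan is to compute the radial pushforward law directly from the product structure of \(\mu_x^\beta\), in the same way as Proposition~\ref{prop:shell-transfer-generating}, and then divide by the normalization. Under \(\mu_x^\beta\) the coordinates \(Z_1,\dots,Z_m\) are independent, with \(Z_v\sim p_\beta(\cdot\mid x_v)\). The radial variable decomposes as \(R_y(Z)=\sum_{v=1}^m \mathbf 1\{Z_v\neq y_v\}\), a sum of independent Bernoulli indicators. Its probability generating function therefore factorizes as \(\mathbb E[\zeta^{R_y(Z)}]=\prod_{v=1}^m \mathbb E[\zeta^{\mathbf 1\{Z_v\neq y_v\}}]\).

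Next I would evaluate each factor by splitting coordinates into \(A=\{v:x_v\neq y_v\}\), with \(|A|=t\), and its complement.

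For \(v\in A\), the event \(Z_v=y_v\) has probability \(|a_1(\beta)|/q_n(\beta)\), since \(y_v\neq x_v\). The complementary event \(Z_v\neq y_v\) covers \(Z_v=x_v\), with weight \(|a_0(\beta)|\), and the \(n-2\) remaining symbols, each with weight \(|a_1(\beta)|\). Its probability is therefore \((|a_0(\beta)|+(n-2)|a_1(\beta)|)/q_n(\beta)\).

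For \(v\notin A\), we have \(Z_v=y_v=x_v\) with probability \(|a_0(\beta)|/q_n(\beta)\), and \(Z_v\neq y_v\) with probability \((n-1)|a_1(\beta)|/q_n(\beta)\).

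In both cases the two probabilities sum to one, because \(q_n(\beta)=|a_0(\beta)|+(n-1)|a_1(\beta)|\) is exactly the row sum of \(k_\beta\). This confirms that \(p_\beta\) is a probability kernel. Multiplying the \(t\) factors from \(A\) and the \(m-t\) factors from its complement gives \eqref{eq:radial-pgf-general}. The right-hand side visibly depends only on \(t\), \(m\), \(n\) and \(\beta\), so the pushforward law depends only on \(t=d(x,y)\).

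Multiplying \eqref{eq:radial-pgf-general} by \(q_n(\beta)^m\) yields exactly the polynomial of Proposition~\ref{prop:shell-transfer-generating}, which is \eqref{eq:radial-pgf-unnormalized-general}. Comparing coefficients of \(\zeta^r\) then gives \eqref{eq:radial-pushforward-T-general}.

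Alternatively, \eqref{eq:radial-pushforward-T-general} can be checked directly. By definition, \(\Pr_{\mu_x^\beta}(R_y(Z)=r)=q_n(\beta)^{-m}\sum_{z\in S_r(y)}\prod_v k_\beta(z_v;x_v)\). By \eqref{eq:product-kernel-from-single-block} and Lemma~\ref{lem:single-block-xy-amplitudes}, the product \(\prod_v k_\beta(z_v;x_v)\) equals \(|\langle z|U_M(\beta)|x\rangle|\). The sum is therefore \(T_{r,t}^{(y)}(x;\beta)=T_{r,t}(\beta)\) by \eqref{eq:Trt-def-general} and Lemma~\ref{lem:shell-transfer-well-defined}.

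No step is a real obstacle. The only point requiring care is the bookkeeping for coordinates in \(A\): one must count \(n-2\) symbols outside \(\{x_v,y_v\}\), which needs \(n\ge2\), and place the \(|a_0(\beta)|\) mass at \(Z_v=x_v\) in the ``mismatch'' class. Mixing this up is the only way the generating factor would come out wrong.
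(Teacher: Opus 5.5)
Your proposal is correct and follows essentially the same route as the paper: both arguments factorize the generating function coordinatewise over the product structure of $\mu_x^\beta$, and both evaluate the same two local sums according to whether $x_v=y_v$ or $x_v\neq y_v$. The only difference is the order of steps. You work with normalized probabilities and cite Proposition~\ref{prop:shell-transfer-generating} for the unnormalized identity, whereas the paper factorizes $\sum_{z}\prod_{v}k_\beta(z_v;x_v)\,\zeta^{\mathbf 1_{\{z_v\neq y_v\}}}$ directly and then divides by $q_n(\beta)^m$.
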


\begin{proof}
Because the mixer factorizes across coordinates, one has
\[
\bigl|\langle z\mid U_M(\beta)\mid x\rangle\bigr|
=
\prod_{v=1}^{m} k_\beta(z_v;x_v),
\qquad z\in \mathcal X.
\]
Hence
\[
T_{r,t}(\beta)
=
\sum_{z:\,R_y(z)=r}\prod_{v=1}^{m} k_\beta(z_v;x_v).
\]
Summing over \(r\) with weight \(\zeta^r\) gives
\[
\sum_{r=0}^{m} T_{r,t}(\beta)\,\zeta^r
=
\sum_{z\in \mathcal X}
\prod_{v=1}^{m}
\Bigl(
k_\beta(z_v;x_v)\,\zeta^{\mathbf 1_{\{z_v\neq y_v\}}}
\Bigr).
\]
By repeated summation over the finite product space \(\mathcal X=\prod_{v=1}^{m}[n]\),
this factorizes as
\[
\prod_{v=1}^{m}
\left(
\sum_{a\in[n]} k_\beta(a;x_v)\,\zeta^{\mathbf 1_{\{a\neq y_v\}}}
\right).
\]

If \(x_v=y_v\), the local sum equals
\[
|a_0(\beta)|+(n-1)|a_1(\beta)|\,\zeta.
\]
If \(x_v\neq y_v\), the local sum equals
\[
|a_1(\beta)|
+
\bigl(|a_0(\beta)|+(n-2)|a_1(\beta)|\bigr)\zeta.
\]
Since exactly \(t=d(x,y)\) coordinates satisfy \(x_v\neq y_v\), while the
remaining \(m-t\) coordinates satisfy \(x_v=y_v\), one obtains
\eqref{eq:radial-pgf-unnormalized-general}. Dividing by \(q_n(\beta)^m\) yields
\eqref{eq:radial-pgf-general}, since
\[
\mu_x^\beta(\{z\})
=
\prod_{v=1}^{m} p_\beta(z_v\mid x_v)
=
\frac{1}{q_n(\beta)^m}
\prod_{v=1}^{m} k_\beta(z_v;x_v).
\]
Taking the coefficient of \(\zeta^r\) gives
\eqref{eq:radial-pushforward-T-general}.
\end{proof}

\begin{corollary}[Normalized shell kernel is stochastic]
\label{cor:normalized-shell-kernel-stochastic}
Define
\[
P_{r,t}^{(\beta)}
:=
\frac{T_{r,t}(\beta)}{q_n(\beta)^m},
\qquad r,t\in\{0,\dots,m\}.
\]
Then for each fixed \(t\),
\[
P_{r,t}^{(\beta)}\ge 0,
\qquad
\sum_{r=0}^{m} P_{r,t}^{(\beta)}=1.
\]

Thus \(P^{(\beta)}=(P_{r,t}^{(\beta)})_{r,t=0}^{m}\) is column-stochastic
in the convention used here.
\end{corollary}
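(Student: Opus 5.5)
The plan is to read both claims directly off Proposition~\ref{prop:shell-transfer-product-pushforward}, with no fresh computation of $T_{r,t}(\beta)$. Nonnegativity comes from the definition. Each $T_{r,t}(\beta)$ is, by \eqref{eq:Trt-def-general}, a finite sum of moduli $\bigl|\langle z\mid U_M(\beta)\mid x\rangle\bigr|$, and hence is nonnegative. So $P_{r,t}^{(\beta)}\ge0$ as soon as the normalizer $q_n(\beta)^m$ is strictly positive. I will check that positivity first, and it is the only step needing any care. From Lemma~\ref{lem:single-block-xy-amplitudes}, the single-block mixer $U_M^{(b)}(\beta)$ is unitary on $\mathcal H_1$. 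So each of its columns has unit Euclidean norm, $|a_0(\beta)|^2+(n-1)|a_1(\beta)|^2=1$. This rules out $|a_0(\beta)|$ and $|a_1(\beta)|$ vanishing together, so $q_n(\beta)=|a_0(\beta)|+(n-1)|a_1(\beta)|>0$.

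For the column sums, I will evaluate the generating function of Proposition~\ref{prop:shell-transfer-generating} at $\zeta=1$. Equation \eqref{eq:shell-transfer-gf-general} gives
\[
\sum_{r=0}^m T_{r,t}(\beta)=\bigl(|a_0(\beta)|+(n-1)|a_1(\beta)|\bigr)^t\bigl(|a_0(\beta)|+(n-1)|a_1(\beta)|\bigr)^{m-t}=q_n(\beta)^m .
\]
The two factors coincide because $|a_1|+|a_0|+(n-2)|a_1|=q_n(\beta)$. Dividing by $q_n(\beta)^m$ then yields $\sum_r P_{r,t}^{(\beta)}=1$ for every fixed $t$.

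The same conclusion follows more conceptually from \eqref{eq:radial-pushforward-T-general}. There $P_{r,t}^{(\beta)}=\Pr_{\mu_x^\beta}(R_y(Z)=r)$, where $\mu_x^\beta$ is a product of one-site probability kernels $p_\beta(\cdot\mid x_v)$, each summing to one by the definition of $q_n$. The column $P_{\cdot,t}^{(\beta)}$ is therefore the law of a pushforward of a probability measure, which is automatically a probability vector. I would mention this as the interpretation and keep the $\zeta=1$ evaluation as the actual proof.

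Column-stochasticity in the stated convention is then exactly the combination of these two facts, with $t$ indexing columns as in the recursion \eqref{eq:exact-shell-transfer-recursion}.
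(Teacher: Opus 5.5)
Your proof is correct and follows essentially the same route as the paper, which also evaluates the radial generating function of Proposition~\ref{prop:shell-transfer-product-pushforward} at $\zeta=1$ (the normalized version \eqref{eq:radial-pgf-general} rather than the unnormalized one) and calls nonnegativity immediate. Your extra check that $q_n(\beta)>0$, via the column-norm identity $|a_0(\beta)|^2+(n-1)|a_1(\beta)|^2=1$, fills in a point the paper leaves implicit.
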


\begin{proof}
Nonnegativity is immediate. Setting \(\zeta=1\) in
\eqref{eq:radial-pgf-general} gives
\[
\sum_{r=0}^{m} P_{r,t}^{(\beta)}
=
\left(\frac{q_n(\beta)}{q_n(\beta)}\right)^t
\left(\frac{q_n(\beta)}{q_n(\beta)}\right)^{m-t}
=
1.
\]
\end{proof}

\begin{corollary}[Two-species Poisson-binomial form of the shell law]
\label{cor:two-species-poisson-binomial}
Let \(Z\sim\mu_x^\beta\), and let \(t=d(x,y)\). Then
\[
R_y(Z)=d(Z,y)
\]
is a sum of independent Bernoulli variables of two species. More precisely,
there exist independent random variables
\[
\xi_1,\dots,\xi_{m-t},
\qquad
\eta_1,\dots,\eta_t,
\]
such that
\[
R_y(Z)
\;\stackrel{d}{=}\;
\sum_{j=1}^{m-t}\xi_j+\sum_{k=1}^{t}\eta_k,
\]
with
\[
\Pr(\xi_j=1)
=
\frac{(n-1)|a_1(\beta)|}{q_n(\beta)},
\qquad
\Pr(\xi_j=0)
=
\frac{|a_0(\beta)|}{q_n(\beta)},
\]
and
\[
\Pr(\eta_k=1)
=
\frac{|a_0(\beta)|+(n-2)|a_1(\beta)|}{q_n(\beta)},
\qquad
\Pr(\eta_k=0)
=
\frac{|a_1(\beta)|}{q_n(\beta)}.
\]
\end{corollary}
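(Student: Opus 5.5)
The plan is to read the decomposition directly off the probability generating function in Proposition~\ref{prop:shell-transfer-product-pushforward}, and to support it with an explicit coupling on the product space. Fix \(x\) with \(t=d(x,y)\) and let \(A=\{v: x_v\neq y_v\}\), so \(|A|=t\). Under \(\mu_x^\beta=\bigotimes_v \mu_{x,v}^\beta\), the coordinates \(Z_1,\dots,Z_m\) are independent. Since \(R_y(Z)=\sum_{v=1}^m \mathbf 1_{\{Z_v\neq y_v\}}\), the natural choice is to set \(\xi\)'s to be the indicators \(\mathbf 1_{\{Z_v\neq y_v\}}\) for \(v\notin A\), and \(\eta\)'s to be the same indicators for \(v\in A\). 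These are functions of distinct independent coordinates, so they are independent. This gives the representation \(R_y(Z)=\sum_j\xi_j+\sum_k\eta_k\) as an almost-sure identity, which is stronger than equality in distribution.

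It then remains to compute the two one-site marginals from \(p_\beta(a\mid b)=k_\beta(a;b)/q_n(\beta)\).

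For \(v\notin A\) we have \(x_v=y_v\). The event \(Z_v=y_v\) has probability \(|a_0(\beta)|/q_n(\beta)\). The complementary event \(Z_v\neq y_v\) covers \(n-1\) symbols, each different from \(x_v\), so it has probability \((n-1)|a_1(\beta)|/q_n(\beta)\).

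For \(v\in A\) we have \(x_v\neq y_v\). The event \(Z_v=y_v\) is a single symbol different from \(x_v\), so \(\Pr(\eta=0)=|a_1(\beta)|/q_n(\beta)\). The event \(Z_v\neq y_v\) splits into two parts: \(Z_v=x_v\), with weight \(|a_0(\beta)|\), and the \(n-2\) remaining symbols, each with weight \(|a_1(\beta)|\). This yields the stated \(\Pr(\eta=1)\).

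In each case the two probabilities sum to one, because \(q_n(\beta)=|a_0(\beta)|+(n-1)|a_1(\beta)|\). This is the same check as Corollary~\ref{cor:normalized-shell-kernel-stochastic}.

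As a consistency check, I would observe that the product of the Bernoulli generating functions \(\mathbb E[\zeta^{\xi}]^{m-t}\,\mathbb E[\zeta^{\eta}]^{t}\) is exactly \eqref{eq:radial-pgf-general}. So even the purely distributional route, identifying the law through its generating function on \(\{0,\dots,m\}\), closes the argument. There is no real obstacle here. The only care needed is the bookkeeping in the case \(v\in A\): \(Z_v=x_v\) must be counted as a mismatch with \(y\), with weight \(|a_0|\) rather than \(|a_1|\).
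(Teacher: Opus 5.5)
Your proposal is correct and follows essentially the same argument as the paper: write \(R_y(Z)=\sum_{v}\mathbf 1_{\{Z_v\neq y_v\}}\), use the independence of the coordinates under the product measure \(\mu_x^\beta\), and compute the one-site mismatch probability separately for \(x_v=y_v\) and \(x_v\neq y_v\). Your extra check, that the product of the Bernoulli generating functions reproduces \eqref{eq:radial-pgf-general}, is a valid confirmation but is not needed for the proof.
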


\begin{proof}
Under the product measure \(\mu_x^\beta\), the coordinates \(Z_v\) are
independent. For each coordinate \(v\), the indicator
\[
\mathbf 1_{\{Z_v\neq y_v\}}
\]
depends only on whether \(x_v=y_v\) or \(x_v\neq y_v\). If \(x_v=y_v\), then
\[
\Pr(Z_v\neq y_v)
=
\frac{(n-1)|a_1(\beta)|}{q_n(\beta)}.
\]
If \(x_v\neq y_v\), then
\[
\Pr(Z_v\neq y_v)
=
\frac{|a_0(\beta)|+(n-2)|a_1(\beta)|}{q_n(\beta)}.
\]
Since
\[
R_y(Z)=\sum_{v=1}^{m}\mathbf 1_{\{Z_v\neq y_v\}},
\]
the claim follows.
\end{proof}

\begin{theorem}[Depth-\(p\) radial Markov recursion from product path measures]
\label{thm:depth-p-radial-markov}
Fix \(y\in \mathcal X\). Let \(Z^{(0)},Z^{(1)},\dots,Z^{(p)}\) be a time-inhomogeneous
Markov chain on \(\mathcal X\) defined by the one-step transition kernels
\[
\Pr\!\bigl(Z^{(\ell)}=z \,\big|\, Z^{(\ell-1)}=x\bigr)
=
p_{\beta_\ell}(z\mid x),
\qquad \ell=1,\dots,p.
\]

Equivalently, we write
\[
p_\beta(z\mid x)
:=
\prod_{v=1}^{m}p_\beta(z_v\mid x_v),
\qquad x,z\in \mathcal X,
\]
for the associated product transition kernel on \(\mathcal X\).

Define the radial process
\[
R_\ell:=d(Z^{(\ell)},y)\in\{0,\dots,m\}.
\]
Then \((R_\ell)_{\ell=0}^{p}\) is itself a time-inhomogeneous Markov chain on
\(\{0,\dots,m\}\), with one-step kernels
\begin{equation}
\label{eq:radial-markov-kernel-general}
\Pr(R_\ell=r\mid R_{\ell-1}=t)
=
P_{r,t}^{(\beta_\ell)}
=
\frac{T_{r,t}(\beta_\ell)}{q_n(\beta_\ell)^m}.
\end{equation}
Consequently, if
\[
\rho_r^{(\ell)}:=\Pr(R_\ell=r),
\]
then
\begin{equation}
\label{eq:radial-law-recursion-general}
\rho_r^{(\ell)}
=
\sum_{t=0}^{m}
P_{r,t}^{(\beta_\ell)}\,\rho_t^{(\ell-1)},
\qquad \ell=1,\dots,p.
\end{equation}
\end{theorem}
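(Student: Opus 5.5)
The plan is to apply the standard lumpability (Dynkin) criterion for functions of Markov chains. The needed uniformity condition is exactly Proposition~\ref{prop:shell-transfer-product-pushforward}. Recall the one-step law at layer \(\ell\): conditioned on \(Z^{(\ell-1)}=x\), the next state \(Z^{(\ell)}\) has law \(\mu_x^{\beta_\ell}\), the product measure built from \(p_{\beta_\ell}(\cdot\mid\cdot)\). That proposition gives
\[
\Pr\bigl(R_y(Z^{(\ell)})=r \,\big|\, Z^{(\ell-1)}=x\bigr)=\frac{T_{r,t}(\beta_\ell)}{q_n(\beta_\ell)^m}=P^{(\beta_\ell)}_{r,t},\qquad t=d(x,y),
\]
and this depends on \(x\) only through \(t\). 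Corollary~\ref{cor:normalized-shell-kernel-stochastic} confirms that each \(P^{(\beta_\ell)}\) is a genuine stochastic kernel on \(\{0,\dots,m\}\).

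The key step is to verify the Markov property of \((R_\ell)\) with respect to its own history. Fix \(\ell\) and shells \(r_0,\dots,r_{\ell-1},r\) such that the event \(\{R_0=r_0,\dots,R_{\ell-1}=r_{\ell-1}\}\) has positive probability. Decompose this event by the value \(x\) of \(Z^{(\ell-1)}\). Then
\[
\Pr(R_0=r_0,\dots,R_{\ell-1}=r_{\ell-1},R_\ell=r)
=\sum_{x\in S_{r_{\ell-1}}(y)}\Pr\bigl(R_0=r_0,\dots,R_{\ell-2}=r_{\ell-2},\,Z^{(\ell-1)}=x\bigr)\,\Pr\bigl(R_\ell=r\mid Z^{(\ell-1)}=x\bigr).
\]
The decomposition uses the Markov property of \(Z\): given \(Z^{(\ell-1)}\), the next state is independent of the earlier history, and in particular of \(R_0,\dots,R_{\ell-2}\), which are functions of \(Z^{(0)},\dots,Z^{(\ell-2)}\). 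By the display above, the second factor equals \(P^{(\beta_\ell)}_{r,r_{\ell-1}}\) for every \(x\) in the sum, so it pulls out. The remaining sum is \(\Pr(R_0=r_0,\dots,R_{\ell-1}=r_{\ell-1})\). Dividing by this gives \(\Pr(R_\ell=r\mid R_0,\dots,R_{\ell-1})=P^{(\beta_\ell)}_{r,R_{\ell-1}}\), which is \eqref{eq:radial-markov-kernel-general}.

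The law recursion \eqref{eq:radial-law-recursion-general} then follows from the law of total probability, conditioning on \(R_{\ell-1}\).

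The only step needing care is the middle one. It is where strong lumpability enters, and the key point is that the conditional shell law is constant across each shell \(S_t(y)\), not merely on average. This property is exactly what Proposition~\ref{prop:shell-transfer-product-pushforward} provides through Lemma~\ref{lem:shell-transfer-well-defined}, so no new computation is required. It remains only to note that zero-probability conditioning events can be ignored.
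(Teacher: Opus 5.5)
Your proof is correct and takes the same route as the paper. The paper also rests entirely on Proposition~\ref{prop:shell-transfer-product-pushforward}, namely that the shell law of \(Z^{(\ell)}\) given \(Z^{(\ell-1)}=x\) depends only on \(d(x,y)\), and then obtains \eqref{eq:radial-law-recursion-general} from the law of total probability. Your decomposition over the value of \(Z^{(\ell-1)}\) checks the Markov property of \((R_\ell)\) against its whole past, not just against \(R_{\ell-1}\); this spells out the strong-lumpability step that the paper's two-line proof leaves implicit.
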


\begin{proof}
Condition on \(Z^{(\ell-1)}=x\) with \(d(x,y)=t\). By
Proposition~\ref{prop:shell-transfer-product-pushforward}, the pushforward law
of \(R_y(Z^{(\ell)})\) under the transition kernel
\(p_{\beta_\ell}(\cdot\mid x)\) depends only on \(t\), and is exactly
\(P_{r,t}^{(\beta_\ell)}\). This proves
\eqref{eq:radial-markov-kernel-general}. The recursion
\eqref{eq:radial-law-recursion-general} is then the law of total probability.
\end{proof}

\section{Certification of Trotterized shell transport} \label{app:trotter-transport-certification} 

\subsection{Proof of the local-to-product transport bounds} 

\begin{proof}[Proof of Proposition~\ref{prop:local-product-transport-bounds}] 
Fix \(x\in S_t(y)\). Under the product kernel 
\[ K_s(z\mid x) = \prod_{b=1}^{m} \widehat K_s(z_b\mid x_b), 
\] 

the shell index \(d(z,y)\) is the sum of independent Bernoulli variables, 

\[ d(z,y) = \sum_{b=1}^{m}B_b, \qquad B_b=\mathbf 1_{\{z_b\neq y_b\}}.
\] 

If \(x_b=y_b\), then \[ \Pr(B_b=1) = 1-\widehat K_s(y_b\mid y_b). \] If \(x_b\neq y_b\), then 
\[ \Pr(B_b=1) = 1-\widehat K_s(y_b\mid x_b).
\] 

Thus the shell profile is a Poisson-binomial law containing \(m-t\) parameters from the diagonal class and \(t\) parameters from the off-diagonal class. Consider \(x,x'\in S_t(y)\). Since both profiles contain the same numbers of diagonal- and off-diagonal-class factors, their Bernoulli variables can be paired within the two classes. Under a maximal coupling, the probability that a paired Bernoulli variable disagrees is the absolute difference of its parameters. Hence the diagonal pairs disagree with probability at most \(\delta_{\mathrm{diag}}(s)\), and the off-diagonal pairs disagree with probability at most \(\delta_{\mathrm{off}}(s)\). A union bound gives 
\[ \left\| P_s(\,\cdot\mid x) - P_s(\,\cdot\mid x') \right\|_{\mathrm{TV}} \leq (m-t)\delta_{\mathrm{diag}}(s) + t\,\delta_{\mathrm{off}}(s). \] Since total variation distance is at most one, this proves Eq.~\eqref{eq:lumpability-local-upper-bound}. For the ideal kernel, the Bernoulli parameter of a coordinate satisfying \(x_b=y_b\) is \(1-d_\star\), while that of a coordinate satisfying \(x_b\neq y_b\) is \(1-o_\star\). Coupling each compiled Bernoulli variable to its ideal counterpart gives disagreement probabilities bounded by \(\epsilon_{\mathrm{diag}}(s)\) and \(\epsilon_{\mathrm{off}}(s)\), respectively. Therefore \[ \left\| P_s(\,\cdot\mid x) - P^{(\beta_\star)}_{\cdot,t} \right\|_{\mathrm{TV}} \leq (m-t)\epsilon_{\mathrm{diag}}(s) + t\,\epsilon_{\mathrm{off}}(s), \] which proves Eq.~\eqref{eq:shell-kernel-local-upper-bound}. \end{proof} \subsection{Certified depth and exchange-gate count} The worst-case product-space bounds are \begin{align} \overline{\Delta}_{\mathrm{lump}}(s) &= \max_{0\leq t\leq m} \min\!\left\{ 1,\, (m-t)\delta_{\mathrm{diag}}(s) + t\,\delta_{\mathrm{off}}(s) \right\}, \\ \overline{\Delta}_{\mathrm{shell}}(s) &= \max_{0\leq t\leq m} \min\!\left\{ 1,\, (m-t)\epsilon_{\mathrm{diag}}(s) + t\,\epsilon_{\mathrm{off}}(s) \right\}. \end{align} For tolerance \(\varepsilon\), the certified Trotter count is \[ s_{\mathrm{cert}}(\varepsilon) = \min\left\{ s\in\mathbb N: \overline{\Delta}_{\mathrm{lump}}(s)\leq\varepsilon,\, \overline{\Delta}_{\mathrm{shell}}(s)\leq\varepsilon \right\}. \] An edge coloring of \(K_n\) requires \[ \chi'(K_n) = \begin{cases} n-1, & n\ \mathrm{even},\\ n, & n\ \mathrm{odd}. \end{cases} \] The corresponding logical exchange depth and two-qubit exchange-gate count are \begin{equation} D_{\mathrm{cert}}(\varepsilon) = s_{\mathrm{cert}}(\varepsilon)\chi'(K_n), \end{equation} and \begin{equation} G_{2\mathrm q}(\varepsilon) = m\,s_{\mathrm{cert}}(\varepsilon)\binom{n}{2}. \end{equation} The depth counts parallel matching layers before connectivity-dependent routing or decomposition into native gates. 

In our numerical protocol, for every \(n=m\in\{9,\ldots,25\}\), the ideal one-block mixer is evaluated exactly in the one-excitation sector. The Trotterized unitary is constructed from a fixed ordering of the matchings in an edge coloring of \(K_n\), at \[ \beta_\star=\frac{\pi(n-1)}{n}. \] The scan uses \(s=1,\ldots,10\) Trotter steps. The reference configuration \(y\in[n]^n\) is a valid permutation. Source configurations range over the full row-one-hot product space \(\mathcal X=[n]^n\). Arbitrary sources are drawn from every Hamming shell, and deterministic extremal sources are added to expose the largest observed violations of the local quotient symmetry. The maxima over the tested sources define \(\widehat{\Delta}_{\mathrm{lump}}\) and \(\widehat{\Delta}_{\mathrm{shell}}\). The conventional local process-infidelity error is \begin{equation} \varepsilon_{\mathrm{F}}^{\mathrm{loc}}(s) = 1- \frac{ \left| \operatorname{Tr}\!\left[ U_M^{(b)}(\beta_\star)^\dagger U_{M,s}^{(b)}(\beta_\star) \right] \right|^2 }{n^2}. \end{equation} This quantity is computed from the exact \(n\times n\) one-block unitaries. The transport defects are computed from the normalized absolute kernel \(\widehat K_s\). The two calculations therefore compare generic unitary approximation with preservation of the equality-versus-inequality quotient that generates the shell reduction.

\section{Problem structure, constraint statistics and coding-theoretic structure}
\label{app:constraint-statistics-coding}

Let
\(
T:\mathcal X\longrightarrow\mathcal T
\)
be a problem-dependent classical map that records the combinatorial
information relevant to the constraints. If the penalty depends only on
\(T\), so that
\begin{equation}
\label{eq:scheme-alignment-present2}
H_{\mathrm{pen}}(x)=F(T(x)),
\end{equation}
then every penalty level
\[
L_s=\{x\in \mathcal X:H_{\mathrm{pen}}(x)=s\}
\]
is a union of \(T\)-fibres:
\[
L_s
=
\bigcup_{\tau\in\mathcal T:\,F(\tau)=s}
L_\tau^T,
\qquad
L_\tau^T
=
\{x\in \mathcal X:T(x)=\tau\}.
\]
The statistic \(T\) therefore resolves the combinatorial classes on which the
penalty is constant. This is the natural setting in which hard constraints
induce an algebraic structure on the encoded product space that is generally
finer than the scalar penalty levels themselves.

For many constrained optimization problems, the relevant statistic is
genuinely multidimensional. To see how such statistics arise in the present
setting, fix a reference configuration \(y\in \mathcal X=[n]^m\) and define
\[
N^{(y)}_{u,v}(x)
=
\#\left\{
b\in\{1,\ldots,m\}:
y_b=u,\ x_b=v
\right\},
\qquad u,v\in[n].
\]
The row sums are fixed by the symbol populations of the reference
configuration,
\[
\sum_{v=1}^{n}N^{(y)}_{u,v}(x)
=
\#\{b:y_b=u\},
\]
while the column sums recover the symbol histogram of \(x\):
\[
h_v(x)
=
\sum_{u=1}^{n}N^{(y)}_{u,v}(x).
\]
The Hamming distance is obtained from the diagonal,
\[
d(x,y)
=
m-\operatorname{tr}N^{(y)}(x).
\]
Thus the scalar shell index records only the total number of mismatched
coordinates, whereas \(N^{(y)}(x)\) retains the full pattern of symbol
transitions between \(y\) and \(x\). Constraints involving symbol balance,
collisions, capacities, visit multiplicities, or permutation conditions can
therefore be expressed through the histogram, marginal, or contingency-table
data contained in \(N^{(y)}(x)\).

This type of multidimensional constraint statistic appears across several
classical optimization settings. Fixed-margin contingency-table fibres are
central in algebraic statistics and Markov-basis methods
\cite{DiaconisSturmfels1998,DeLoeraOnn2004,
SlavkovicZhuPetrovic2015}, while universality results for integer programming arise through slim three-way transportation programs
\cite{DeLoeraOnn2006}. Overlap-matrix statistics play an analogous role in
random coloring and constraint-satisfaction problems
\cite{AchlioptasCojaOghlan2008,CojaOghlanVilenchik2016}.

Related structures also occur in routing and scheduling. In the standard TSP,
the Miller--Tucker--Zemlin formulation places the binary arc-incidence
variables \((x_{ij})\) in an assignment structure with one outgoing and one
incoming arc at each city before additional ordering constraints eliminate
subtours \cite{MillerTuckerZemlin1960}. Many-visits TSP fixes city-visit
multiplicities \cite{CosmadakisPapadimitriou1984}. Vehicle-routing
formulations combine customer-cover, route-incidence, capacity, and resource
constraints. Configuration and \(N\)-fold formulations for scheduling
similarly fix aggregate job-type, machine-load, and time-slot statistics
\cite{MnichWiese2015,KnopKoutecky2018}.

These examples show that measured product-space samples can be analyzed
through shell indices, penalty levels, contingency tables, visit
multiplicities, collision profiles, route loads, scheduling loads, and other
problem-dependent classical maps. The corresponding empirical distributions
expose which parts of the constraint structure are represented in the
quantum-generated samples. They can also provide the information required by
post-measurement repair operations that restore feasibility while preserving
useful structure already present in an infeasible sample.

This separation is particularly relevant for near-term quantum devices.
Approximate mixer implementations, noise, finite sampling, and imperfect
phase control may generate infeasible samples that nevertheless retain
low-penalty or readily repairable combinatorial structure. A multidimensional
statistic can distinguish different violation patterns that share the same
Hamming distance or total penalty and can therefore support more selective
feasibility-restoring operations. The formal separation between quantum
transport, phase-sensitive sampling, the problem-dependent classical map, and
the subsequent repair operation also makes the contribution of each stage to
the final performance transparent.

\paragraph{Coding-theoretic and association-scheme viewpoint.}

The product space \(\mathcal X=[n]^m\), equipped with Hamming distance, is the
\(n\)-ary Hamming scheme \(H(m,n)\) in the language of coding theory and
association schemes
\cite{MacWilliamsSloane1977,DelsarteLevenshtein1998}. For a fixed reference
configuration \(y\in \mathcal X\), the shells
\[
S_r(y)
=
\{x\in \mathcal X:d(x,y)=r\},
\qquad r=0,\ldots,m,
\]
are its distance classes.

From this viewpoint, the shell-transfer coefficients derived in
Proposition~\ref{prop:shell-transfer-generating} are radial matrix elements of
the product-space transfer operator with respect to the Hamming-distance
partition. The generating functions appearing in the shell reduction are
therefore natural Hamming-scheme objects and are closely related to the
\(n\)-ary Krawtchouk structure familiar from coding theory
\cite{MacWilliamsSloane1977}.

The multidimensional statistics introduced above provide a route for
importing powerful tools from coding theory and algebraic  combinatorics into
the analysis of constrained quantum optimization. They enlarge the class of
product-space structures that can be studied through distance distributions,
orthogonal polynomial expansions, weight enumerators, association schemes,
and related methods.

When a statistic \(T\) exactly resolves the relevant constraint structure,
its fibres define natural relation classes on the product space. In symmetric
settings where these relations are closed under composition, their adjacency
matrices span a Bose--Mesner algebra
\cite{BoseMesner1959}. The scalar Hamming-shell reduction studied in the main
text is then the radial description associated with the ordinary
Hamming-distance classes, while richer statistics can reveal additional
algebraic structure induced by the constraints. We leave the detailed study of hybrid quantum--classical workflows leveraging these algebraic insights, including application-specific feasibility-repair procedures, to future work.

\bibliographystyle{apsrev4-2}
\bibliography{references-prx-quantum}

\begin{thebibliography}{69}%
\makeatletter
\providecommand \@ifxundefined [1]{%
 \@ifx{#1\undefined}
}%
\providecommand \@ifnum [1]{%
 \ifnum #1\expandafter \@firstoftwo
 \else \expandafter \@secondoftwo
 \fi
}%
\providecommand \@ifx [1]{%
 \ifx #1\expandafter \@firstoftwo
 \else \expandafter \@secondoftwo
 \fi
}%
\providecommand \natexlab [1]{#1}%
\providecommand \enquote  [1]{``#1''}%
\providecommand \bibnamefont  [1]{#1}%
\providecommand \bibfnamefont [1]{#1}%
\providecommand \citenamefont [1]{#1}%
\providecommand \href@noop [0]{\@secondoftwo}%
\providecommand \href [0]{\begingroup \@sanitize@url \@href}%
\providecommand \@href[1]{\@@startlink{#1}\@@href}%
\providecommand \@@href[1]{\endgroup#1\@@endlink}%
\providecommand \@sanitize@url [0]{\catcode `\\12\catcode `\$12\catcode `\&12\catcode `\#12\catcode `\^12\catcode `\_12\catcode `\%12\relax}%
\providecommand \@@startlink[1]{}%
\providecommand \@@endlink[0]{}%
\providecommand \url  [0]{\begingroup\@sanitize@url \@url }%
\providecommand \@url [1]{\endgroup\@href {#1}{\urlprefix }}%
\providecommand \urlprefix  [0]{URL }%
\providecommand \Eprint [0]{\href }%
\providecommand \doibase [0]{https://doi.org/}%
\providecommand \selectlanguage [0]{\@gobble}%
\providecommand \bibinfo  [0]{\@secondoftwo}%
\providecommand \bibfield  [0]{\@secondoftwo}%
\providecommand \translation [1]{[#1]}%
\providecommand \BibitemOpen [0]{}%
\providecommand \bibitemStop [0]{}%
\providecommand \bibitemNoStop [0]{.\EOS\space}%
\providecommand \EOS [0]{\spacefactor3000\relax}%
\providecommand \BibitemShut  [1]{\csname bibitem#1\endcsname}%
\let\auto@bib@innerbib\@empty
\bibitem [{\citenamefont {Korte}\ and\ \citenamefont {Vygen}(2018)}]{KorteVygen}%
  \BibitemOpen
  \bibfield  {author} {\bibinfo {author} {\bibfnamefont {B.}~\bibnamefont {Korte}}\ and\ \bibinfo {author} {\bibfnamefont {J.}~\bibnamefont {Vygen}},\ }\href {https://doi.org/10.1007/978-3-662-56039-6} {\emph {\bibinfo {title} {Combinatorial optimization: Theory and algorithms}}},\ \bibinfo {edition} {6th}\ ed.\ (\bibinfo  {publisher} {Springer},\ \bibinfo {address} {Berlin, Heidelberg},\ \bibinfo {year} {2018})\BibitemShut {NoStop}%
\bibitem [{\citenamefont {Schrijver}(2003)}]{Schrijver2003}%
  \BibitemOpen
  \bibfield  {author} {\bibinfo {author} {\bibfnamefont {A.}~\bibnamefont {Schrijver}},\ }\href@noop {} {\emph {\bibinfo {title} {Combinatorial optimization: Polyhedra and efficiency}}},\ \bibinfo {series} {Algorithms and Combinatorics}, Vol.~\bibinfo {volume} {24}\ (\bibinfo  {publisher} {Springer},\ \bibinfo {address} {Berlin, Heidelberg},\ \bibinfo {year} {2003})\BibitemShut {NoStop}%
\bibitem [{\citenamefont {Abbas}\ \emph {et~al.}(2024)\citenamefont {Abbas}, \citenamefont {Ambainis}, \citenamefont {Augustino}, \citenamefont {B{\"a}rtschi}, \citenamefont {Buhrman}, \citenamefont {Coffrin}, \citenamefont {Cortiana}, \citenamefont {Dunjko}, \citenamefont {Egger}, \citenamefont {Elmegreen} \emph {et~al.}}]{abbas2024challenges}%
  \BibitemOpen
  \bibfield  {author} {\bibinfo {author} {\bibfnamefont {A.}~\bibnamefont {Abbas}}, \bibinfo {author} {\bibfnamefont {A.}~\bibnamefont {Ambainis}}, \bibinfo {author} {\bibfnamefont {B.}~\bibnamefont {Augustino}}, \bibinfo {author} {\bibfnamefont {A.}~\bibnamefont {B{\"a}rtschi}}, \bibinfo {author} {\bibfnamefont {H.}~\bibnamefont {Buhrman}}, \bibinfo {author} {\bibfnamefont {C.}~\bibnamefont {Coffrin}}, \bibinfo {author} {\bibfnamefont {G.}~\bibnamefont {Cortiana}}, \bibinfo {author} {\bibfnamefont {V.}~\bibnamefont {Dunjko}}, \bibinfo {author} {\bibfnamefont {D.~J.}\ \bibnamefont {Egger}}, \bibinfo {author} {\bibfnamefont {B.~G.}\ \bibnamefont {Elmegreen}}, \emph {et~al.},\ }\href {https://doi.org/10.1038/s42254-024-00770-9} {\bibfield  {journal} {\bibinfo  {journal} {Nat. Rev. Phys.}\ }\textbf {\bibinfo {volume} {6}},\ \bibinfo {pages} {718} (\bibinfo {year} {2024})}\BibitemShut {NoStop}%
\bibitem [{\citenamefont {Lucas}(2014)}]{Lucas2014}%
  \BibitemOpen
  \bibfield  {author} {\bibinfo {author} {\bibfnamefont {A.}~\bibnamefont {Lucas}},\ }\href {https://doi.org/10.3389/fphy.2014.00005} {\bibfield  {journal} {\bibinfo  {journal} {Front. Phys.}\ }\textbf {\bibinfo {volume} {2}},\ \bibinfo {pages} {5} (\bibinfo {year} {2014})}\BibitemShut {NoStop}%
\bibitem [{\citenamefont {Farhi}\ \emph {et~al.}(2014)\citenamefont {Farhi}, \citenamefont {Goldstone},\ and\ \citenamefont {Gutmann}}]{Farhi2014QAOA}%
  \BibitemOpen
  \bibfield  {author} {\bibinfo {author} {\bibfnamefont {E.}~\bibnamefont {Farhi}}, \bibinfo {author} {\bibfnamefont {J.}~\bibnamefont {Goldstone}},\ and\ \bibinfo {author} {\bibfnamefont {S.}~\bibnamefont {Gutmann}},\ }\href@noop {} {\bibinfo {title} {A quantum approximate optimization algorithm}} (\bibinfo {year} {2014}),\ \Eprint {https://arxiv.org/abs/1411.4028} {arXiv:1411.4028} \BibitemShut {NoStop}%
\bibitem [{\citenamefont {Hadfield}\ \emph {et~al.}(2019)\citenamefont {Hadfield}, \citenamefont {Wang}, \citenamefont {O'Gorman}, \citenamefont {Rieffel}, \citenamefont {Venturelli},\ and\ \citenamefont {Biswas}}]{Hadfield2019AOA}%
  \BibitemOpen
  \bibfield  {author} {\bibinfo {author} {\bibfnamefont {S.}~\bibnamefont {Hadfield}}, \bibinfo {author} {\bibfnamefont {Z.}~\bibnamefont {Wang}}, \bibinfo {author} {\bibfnamefont {B.}~\bibnamefont {O'Gorman}}, \bibinfo {author} {\bibfnamefont {E.~G.}\ \bibnamefont {Rieffel}}, \bibinfo {author} {\bibfnamefont {D.}~\bibnamefont {Venturelli}},\ and\ \bibinfo {author} {\bibfnamefont {R.}~\bibnamefont {Biswas}},\ }\href {https://doi.org/10.3390/a12020034} {\bibfield  {journal} {\bibinfo  {journal} {Algorithms}\ }\textbf {\bibinfo {volume} {12}},\ \bibinfo {pages} {34} (\bibinfo {year} {2019})}\BibitemShut {NoStop}%
\bibitem [{\citenamefont {Onah}\ \emph {et~al.}(2025{\natexlab{a}})\citenamefont {Onah}, \citenamefont {Firt},\ and\ \citenamefont {Michielsen}}]{onahce}%
  \BibitemOpen
  \bibfield  {author} {\bibinfo {author} {\bibfnamefont {C.}~\bibnamefont {Onah}}, \bibinfo {author} {\bibfnamefont {R.}~\bibnamefont {Firt}},\ and\ \bibinfo {author} {\bibfnamefont {K.}~\bibnamefont {Michielsen}},\ }\href@noop {} {\bibinfo {title} {Empirical quantum advantage in constrained optimization from encoded unitary designs}} (\bibinfo {year} {2025}{\natexlab{a}}),\ \Eprint {https://arxiv.org/abs/2511.14296} {arXiv:2511.14296 [cs.ET]} \BibitemShut {NoStop}%
\bibitem [{\citenamefont {Wang}\ \emph {et~al.}(2020)\citenamefont {Wang}, \citenamefont {Rubin}, \citenamefont {Dominy},\ and\ \citenamefont {Rieffel}}]{Wang2020XYMixers}%
  \BibitemOpen
  \bibfield  {author} {\bibinfo {author} {\bibfnamefont {Z.}~\bibnamefont {Wang}}, \bibinfo {author} {\bibfnamefont {N.~C.}\ \bibnamefont {Rubin}}, \bibinfo {author} {\bibfnamefont {J.~M.}\ \bibnamefont {Dominy}},\ and\ \bibinfo {author} {\bibfnamefont {E.~G.}\ \bibnamefont {Rieffel}},\ }\href {https://doi.org/10.1103/PhysRevA.101.012320} {\bibfield  {journal} {\bibinfo  {journal} {Phys. Rev. A}\ }\textbf {\bibinfo {volume} {101}},\ \bibinfo {pages} {012320} (\bibinfo {year} {2020})}\BibitemShut {NoStop}%
\bibitem [{\citenamefont {Hadfield}\ \emph {et~al.}(2023)\citenamefont {Hadfield}, \citenamefont {Hogg},\ and\ \citenamefont {Rieffel}}]{Hadfield_2022}%
  \BibitemOpen
  \bibfield  {author} {\bibinfo {author} {\bibfnamefont {S.}~\bibnamefont {Hadfield}}, \bibinfo {author} {\bibfnamefont {T.}~\bibnamefont {Hogg}},\ and\ \bibinfo {author} {\bibfnamefont {E.~G.}\ \bibnamefont {Rieffel}},\ }\href {https://doi.org/10.1088/2058-9565/aca3ce} {\bibfield  {journal} {\bibinfo  {journal} {Quantum Sci. Technol.}\ }\textbf {\bibinfo {volume} {8}},\ \bibinfo {pages} {015017} (\bibinfo {year} {2023})}\BibitemShut {NoStop}%
\bibitem [{\citenamefont {Fuchs}\ and\ \citenamefont {Bassa}(2022)}]{Fuchs2022ConstrainedMixers}%
  \BibitemOpen
  \bibfield  {author} {\bibinfo {author} {\bibfnamefont {F.~G.}\ \bibnamefont {Fuchs}}\ and\ \bibinfo {author} {\bibfnamefont {R.~P.}\ \bibnamefont {Bassa}},\ }\href {https://doi.org/10.3390/a15060202} {\bibfield  {journal} {\bibinfo  {journal} {Algorithms}\ }\textbf {\bibinfo {volume} {15}},\ \bibinfo {pages} {202} (\bibinfo {year} {2022})}\BibitemShut {NoStop}%
\bibitem [{\citenamefont {Sawaya}\ \emph {et~al.}(2023)\citenamefont {Sawaya}, \citenamefont {Schmitz},\ and\ \citenamefont {Hadfield}}]{sawaya2023encoding}%
  \BibitemOpen
  \bibfield  {author} {\bibinfo {author} {\bibfnamefont {N.~P.}\ \bibnamefont {Sawaya}}, \bibinfo {author} {\bibfnamefont {A.~T.}\ \bibnamefont {Schmitz}},\ and\ \bibinfo {author} {\bibfnamefont {S.}~\bibnamefont {Hadfield}},\ }\href {https://doi.org/10.22331/q-2023-09-14-1111} {\bibfield  {journal} {\bibinfo  {journal} {Quantum}\ }\textbf {\bibinfo {volume} {7}},\ \bibinfo {pages} {1111} (\bibinfo {year} {2023})}\BibitemShut {NoStop}%
\bibitem [{\citenamefont {Onah}\ and\ \citenamefont {Michielsen}(2025)}]{onahfund}%
  \BibitemOpen
  \bibfield  {author} {\bibinfo {author} {\bibfnamefont {C.}~\bibnamefont {Onah}}\ and\ \bibinfo {author} {\bibfnamefont {K.}~\bibnamefont {Michielsen}},\ }\href@noop {} {\bibinfo {title} {Fundamental limitations of {QAOA} on constrained problems and a route to exponential enhancement}} (\bibinfo {year} {2025}),\ \Eprint {https://arxiv.org/abs/2511.17259} {arXiv:2511.17259 [quant-ph]} \BibitemShut {NoStop}%
\bibitem [{\citenamefont {Miller}\ \emph {et~al.}(1960)\citenamefont {Miller}, \citenamefont {Tucker},\ and\ \citenamefont {Zemlin}}]{MillerTuckerZemlin1960}%
  \BibitemOpen
  \bibfield  {author} {\bibinfo {author} {\bibfnamefont {C.~E.}\ \bibnamefont {Miller}}, \bibinfo {author} {\bibfnamefont {A.~W.}\ \bibnamefont {Tucker}},\ and\ \bibinfo {author} {\bibfnamefont {R.~A.}\ \bibnamefont {Zemlin}},\ }\href {https://doi.org/10.1145/321043.321046} {\bibfield  {journal} {\bibinfo  {journal} {J. ACM}\ }\textbf {\bibinfo {volume} {7}},\ \bibinfo {pages} {326} (\bibinfo {year} {1960})}\BibitemShut {NoStop}%
\bibitem [{\citenamefont {Diaconis}\ and\ \citenamefont {Sturmfels}(1998)}]{DiaconisSturmfels1998}%
  \BibitemOpen
  \bibfield  {author} {\bibinfo {author} {\bibfnamefont {P.}~\bibnamefont {Diaconis}}\ and\ \bibinfo {author} {\bibfnamefont {B.}~\bibnamefont {Sturmfels}},\ }\href {https://doi.org/10.1214/aos/1030563990} {\bibfield  {journal} {\bibinfo  {journal} {Ann. Stat.}\ }\textbf {\bibinfo {volume} {26}},\ \bibinfo {pages} {363} (\bibinfo {year} {1998})}\BibitemShut {NoStop}%
\bibitem [{\citenamefont {De~Loera}\ and\ \citenamefont {Onn}(2004)}]{DeLoeraOnn2004}%
  \BibitemOpen
  \bibfield  {author} {\bibinfo {author} {\bibfnamefont {J.~A.}\ \bibnamefont {De~Loera}}\ and\ \bibinfo {author} {\bibfnamefont {S.}~\bibnamefont {Onn}},\ }\href {https://doi.org/10.1137/S0097539702403803} {\bibfield  {journal} {\bibinfo  {journal} {SIAM J. Comput.}\ }\textbf {\bibinfo {volume} {33}},\ \bibinfo {pages} {819} (\bibinfo {year} {2004})}\BibitemShut {NoStop}%
\bibitem [{\citenamefont {De~Loera}\ and\ \citenamefont {Onn}(2006)}]{DeLoeraOnn2006}%
  \BibitemOpen
  \bibfield  {author} {\bibinfo {author} {\bibfnamefont {J.~A.}\ \bibnamefont {De~Loera}}\ and\ \bibinfo {author} {\bibfnamefont {S.}~\bibnamefont {Onn}},\ }\href {https://doi.org/10.1137/040610623} {\bibfield  {journal} {\bibinfo  {journal} {SIAM J. Optim.}\ }\textbf {\bibinfo {volume} {17}},\ \bibinfo {pages} {806} (\bibinfo {year} {2006})}\BibitemShut {NoStop}%
\bibitem [{\citenamefont {Slavkovi{\'c}}\ \emph {et~al.}(2015)\citenamefont {Slavkovi{\'c}}, \citenamefont {Zhu},\ and\ \citenamefont {Petrovi{\'c}}}]{SlavkovicZhuPetrovic2015}%
  \BibitemOpen
  \bibfield  {author} {\bibinfo {author} {\bibfnamefont {A.~B.}\ \bibnamefont {Slavkovi{\'c}}}, \bibinfo {author} {\bibfnamefont {X.}~\bibnamefont {Zhu}},\ and\ \bibinfo {author} {\bibfnamefont {S.}~\bibnamefont {Petrovi{\'c}}},\ }\href {https://doi.org/10.1007/s10463-014-0471-z} {\bibfield  {journal} {\bibinfo  {journal} {Ann. Inst. Stat. Math.}\ }\textbf {\bibinfo {volume} {67}},\ \bibinfo {pages} {621} (\bibinfo {year} {2015})}\BibitemShut {NoStop}%
\bibitem [{\citenamefont {Hen}\ and\ \citenamefont {Spedalieri}(2016)}]{HenSpedalieri2016}%
  \BibitemOpen
  \bibfield  {author} {\bibinfo {author} {\bibfnamefont {I.}~\bibnamefont {Hen}}\ and\ \bibinfo {author} {\bibfnamefont {F.~M.}\ \bibnamefont {Spedalieri}},\ }\href {https://doi.org/10.1103/PhysRevApplied.5.034007} {\bibfield  {journal} {\bibinfo  {journal} {Phys. Rev. Applied}\ }\textbf {\bibinfo {volume} {5}},\ \bibinfo {pages} {034007} (\bibinfo {year} {2016})}\BibitemShut {NoStop}%
\bibitem [{\citenamefont {Hen}\ and\ \citenamefont {Sarandy}(2016)}]{HenSarandy2016}%
  \BibitemOpen
  \bibfield  {author} {\bibinfo {author} {\bibfnamefont {I.}~\bibnamefont {Hen}}\ and\ \bibinfo {author} {\bibfnamefont {M.~S.}\ \bibnamefont {Sarandy}},\ }\href {https://doi.org/10.1103/PhysRevA.93.062312} {\bibfield  {journal} {\bibinfo  {journal} {Phys. Rev. A}\ }\textbf {\bibinfo {volume} {93}},\ \bibinfo {pages} {062312} (\bibinfo {year} {2016})}\BibitemShut {NoStop}%
\bibitem [{\citenamefont {MacWilliams}\ and\ \citenamefont {Sloane}(1977)}]{MacWilliamsSloane1977}%
  \BibitemOpen
  \bibfield  {author} {\bibinfo {author} {\bibfnamefont {F.~J.}\ \bibnamefont {MacWilliams}}\ and\ \bibinfo {author} {\bibfnamefont {N.~J.~A.}\ \bibnamefont {Sloane}},\ }\href@noop {} {\emph {\bibinfo {title} {The theory of error-correcting codes}}}\ (\bibinfo  {publisher} {North-Holland},\ \bibinfo {address} {Amsterdam},\ \bibinfo {year} {1977})\BibitemShut {NoStop}%
\bibitem [{\citenamefont {Delsarte}\ and\ \citenamefont {Levenshtein}(1998)}]{DelsarteLevenshtein1998}%
  \BibitemOpen
  \bibfield  {author} {\bibinfo {author} {\bibfnamefont {P.}~\bibnamefont {Delsarte}}\ and\ \bibinfo {author} {\bibfnamefont {V.~I.}\ \bibnamefont {Levenshtein}},\ }\href {https://doi.org/10.1109/18.720545} {\bibfield  {journal} {\bibinfo  {journal} {IEEE Trans. Inf. Theory}\ }\textbf {\bibinfo {volume} {44}},\ \bibinfo {pages} {2477} (\bibinfo {year} {1998})}\BibitemShut {NoStop}%
\bibitem [{\citenamefont {Moore}\ and\ \citenamefont {Russell}(2002)}]{MooreRussell2002}%
  \BibitemOpen
  \bibfield  {author} {\bibinfo {author} {\bibfnamefont {C.}~\bibnamefont {Moore}}\ and\ \bibinfo {author} {\bibfnamefont {A.}~\bibnamefont {Russell}},\ }in\ \href {https://doi.org/10.1007/3-540-45726-7_14} {\emph {\bibinfo {booktitle} {Randomization and Approximation Techniques in Computer Science: 6th International Workshop, RANDOM 2002, Cambridge, MA, USA, September 13--15, 2002, Proceedings}}},\ \bibinfo {series} {Lecture Notes in Computer Science}, Vol.\ \bibinfo {volume} {2483},\ \bibinfo {editor} {edited by\ \bibinfo {editor} {\bibfnamefont {J.~D.~P.}\ \bibnamefont {Rolim}}\ and\ \bibinfo {editor} {\bibfnamefont {S.}~\bibnamefont {Vadhan}}}\ (\bibinfo  {publisher} {Springer},\ \bibinfo {address} {Berlin, Heidelberg},\ \bibinfo {year} {2002})\ pp.\ \bibinfo {pages} {164--178}\BibitemShut {NoStop}%
\bibitem [{\citenamefont {Childs}\ and\ \citenamefont {Goldstone}(2004)}]{Childs2004SpatialSearch}%
  \BibitemOpen
  \bibfield  {author} {\bibinfo {author} {\bibfnamefont {A.~M.}\ \bibnamefont {Childs}}\ and\ \bibinfo {author} {\bibfnamefont {J.}~\bibnamefont {Goldstone}},\ }\href {https://doi.org/10.1103/PhysRevA.70.022314} {\bibfield  {journal} {\bibinfo  {journal} {Phys. Rev. A}\ }\textbf {\bibinfo {volume} {70}},\ \bibinfo {pages} {022314} (\bibinfo {year} {2004})}\BibitemShut {NoStop}%
\bibitem [{\citenamefont {Childs}(2010)}]{Childs2008WalkCorrespondence}%
  \BibitemOpen
  \bibfield  {author} {\bibinfo {author} {\bibfnamefont {A.~M.}\ \bibnamefont {Childs}},\ }\href {https://doi.org/10.1007/s00220-009-0930-1} {\bibfield  {journal} {\bibinfo  {journal} {Commun. Math. Phys.}\ }\textbf {\bibinfo {volume} {294}},\ \bibinfo {pages} {581} (\bibinfo {year} {2010})}\BibitemShut {NoStop}%
\bibitem [{\citenamefont {Krovi}\ and\ \citenamefont {Brun}(2007)}]{KroviBrun2007Quotient}%
  \BibitemOpen
  \bibfield  {author} {\bibinfo {author} {\bibfnamefont {H.}~\bibnamefont {Krovi}}\ and\ \bibinfo {author} {\bibfnamefont {T.~A.}\ \bibnamefont {Brun}},\ }\href {https://doi.org/10.1103/PhysRevA.75.062332} {\bibfield  {journal} {\bibinfo  {journal} {Phys. Rev. A}\ }\textbf {\bibinfo {volume} {75}},\ \bibinfo {pages} {062332} (\bibinfo {year} {2007})}\BibitemShut {NoStop}%
\bibitem [{\citenamefont {Farhi}\ \emph {et~al.}(2020{\natexlab{a}})\citenamefont {Farhi}, \citenamefont {Gamarnik},\ and\ \citenamefont {Gutmann}}]{FarhiGamarnikGutmannTypical}%
  \BibitemOpen
  \bibfield  {author} {\bibinfo {author} {\bibfnamefont {E.}~\bibnamefont {Farhi}}, \bibinfo {author} {\bibfnamefont {D.}~\bibnamefont {Gamarnik}},\ and\ \bibinfo {author} {\bibfnamefont {S.}~\bibnamefont {Gutmann}},\ }\href@noop {} {\bibinfo {title} {The quantum approximate optimization algorithm needs to see the whole graph: A typical case}} (\bibinfo {year} {2020}{\natexlab{a}}),\ \Eprint {https://arxiv.org/abs/2004.09002} {arXiv:2004.09002} \BibitemShut {NoStop}%
\bibitem [{\citenamefont {Farhi}\ \emph {et~al.}(2020{\natexlab{b}})\citenamefont {Farhi}, \citenamefont {Gamarnik},\ and\ \citenamefont {Gutmann}}]{FarhiGamarnikGutmannWorstCase}%
  \BibitemOpen
  \bibfield  {author} {\bibinfo {author} {\bibfnamefont {E.}~\bibnamefont {Farhi}}, \bibinfo {author} {\bibfnamefont {D.}~\bibnamefont {Gamarnik}},\ and\ \bibinfo {author} {\bibfnamefont {S.}~\bibnamefont {Gutmann}},\ }\href@noop {} {\bibinfo {title} {The quantum approximate optimization algorithm needs to see the whole graph: Worst case examples}} (\bibinfo {year} {2020}{\natexlab{b}}),\ \Eprint {https://arxiv.org/abs/2005.08747} {arXiv:2005.08747} \BibitemShut {NoStop}%
\bibitem [{\citenamefont {Basso}\ \emph {et~al.}(2022)\citenamefont {Basso}, \citenamefont {Farhi}, \citenamefont {Marwaha}, \citenamefont {Villalonga},\ and\ \citenamefont {Zhou}}]{BassoFarhiMarwahaVillalongaZhou2022}%
  \BibitemOpen
  \bibfield  {author} {\bibinfo {author} {\bibfnamefont {J.}~\bibnamefont {Basso}}, \bibinfo {author} {\bibfnamefont {E.}~\bibnamefont {Farhi}}, \bibinfo {author} {\bibfnamefont {K.}~\bibnamefont {Marwaha}}, \bibinfo {author} {\bibfnamefont {B.}~\bibnamefont {Villalonga}},\ and\ \bibinfo {author} {\bibfnamefont {L.}~\bibnamefont {Zhou}},\ }in\ \href {https://doi.org/10.4230/LIPIcs.TQC.2022.7} {\emph {\bibinfo {booktitle} {17th Conference on the Theory of Quantum Computation, Communication and Cryptography (TQC 2022)}}},\ \bibinfo {series} {Leibniz International Proceedings in Informatics}, Vol.\ \bibinfo {volume} {232}\ (\bibinfo {year} {2022})\ pp.\ \bibinfo {pages} {7:1--7:21}\BibitemShut {NoStop}%
\bibitem [{\citenamefont {Anshu}\ and\ \citenamefont {Metger}(2023)}]{AnshuMetger2023}%
  \BibitemOpen
  \bibfield  {author} {\bibinfo {author} {\bibfnamefont {A.}~\bibnamefont {Anshu}}\ and\ \bibinfo {author} {\bibfnamefont {T.}~\bibnamefont {Metger}},\ }\href {https://doi.org/10.22331/q-2023-05-11-999} {\bibfield  {journal} {\bibinfo  {journal} {Quantum}\ }\textbf {\bibinfo {volume} {7}},\ \bibinfo {pages} {999} (\bibinfo {year} {2023})}\BibitemShut {NoStop}%
\bibitem [{\citenamefont {Akshay}\ \emph {et~al.}(2021)\citenamefont {Akshay}, \citenamefont {Rabinovich}, \citenamefont {Campos},\ and\ \citenamefont {Biamonte}}]{AkshayRabinovichCamposBiamonte2021}%
  \BibitemOpen
  \bibfield  {author} {\bibinfo {author} {\bibfnamefont {V.}~\bibnamefont {Akshay}}, \bibinfo {author} {\bibfnamefont {D.}~\bibnamefont {Rabinovich}}, \bibinfo {author} {\bibfnamefont {E.}~\bibnamefont {Campos}},\ and\ \bibinfo {author} {\bibfnamefont {J.}~\bibnamefont {Biamonte}},\ }\href {https://doi.org/10.1103/PhysRevA.104.L010401} {\bibfield  {journal} {\bibinfo  {journal} {Phys. Rev. A}\ }\textbf {\bibinfo {volume} {104}},\ \bibinfo {pages} {L010401} (\bibinfo {year} {2021})}\BibitemShut {NoStop}%
\bibitem [{\citenamefont {Wurtz}\ and\ \citenamefont {Love}(2021)}]{WurtzLove2021}%
  \BibitemOpen
  \bibfield  {author} {\bibinfo {author} {\bibfnamefont {J.}~\bibnamefont {Wurtz}}\ and\ \bibinfo {author} {\bibfnamefont {P.~J.}\ \bibnamefont {Love}},\ }\href {https://doi.org/10.1103/PhysRevA.103.042612} {\bibfield  {journal} {\bibinfo  {journal} {Phys. Rev. A}\ }\textbf {\bibinfo {volume} {103}},\ \bibinfo {pages} {042612} (\bibinfo {year} {2021})}\BibitemShut {NoStop}%
\bibitem [{\citenamefont {P{\'e}rez-Salinas}\ \emph {et~al.}(2024)\citenamefont {P{\'e}rez-Salinas}, \citenamefont {Wang},\ and\ \citenamefont {Bonet-Monroig}}]{PerezSalinasWangBonetMonroig2024}%
  \BibitemOpen
  \bibfield  {author} {\bibinfo {author} {\bibfnamefont {A.}~\bibnamefont {P{\'e}rez-Salinas}}, \bibinfo {author} {\bibfnamefont {H.}~\bibnamefont {Wang}},\ and\ \bibinfo {author} {\bibfnamefont {X.}~\bibnamefont {Bonet-Monroig}},\ }\href {https://doi.org/10.1038/s41534-024-00819-8} {\bibfield  {journal} {\bibinfo  {journal} {npj Quantum Inf.}\ }\textbf {\bibinfo {volume} {10}},\ \bibinfo {pages} {27} (\bibinfo {year} {2024})}\BibitemShut {NoStop}%
\bibitem [{\citenamefont {Onah}\ and\ \citenamefont {Michielsen}(2026{\natexlab{a}})}]{onahfinite}%
  \BibitemOpen
  \bibfield  {author} {\bibinfo {author} {\bibfnamefont {C.}~\bibnamefont {Onah}}\ and\ \bibinfo {author} {\bibfnamefont {K.}~\bibnamefont {Michielsen}},\ }\href@noop {} {\bibinfo {title} {Finite-depth, finite-shot guarantees for constrained quantum optimization via {Fej\'er} filtering}} (\bibinfo {year} {2026}{\natexlab{a}}),\ \Eprint {https://arxiv.org/abs/2603.01809} {arXiv:2603.01809 [quant-ph]} \BibitemShut {NoStop}%
\bibitem [{\citenamefont {Li}\ \emph {et~al.}(2021)\citenamefont {Li}, \citenamefont {Ding},\ and\ \citenamefont {Xie}}]{Li2021CoDesign}%
  \BibitemOpen
  \bibfield  {author} {\bibinfo {author} {\bibfnamefont {G.}~\bibnamefont {Li}}, \bibinfo {author} {\bibfnamefont {Y.}~\bibnamefont {Ding}},\ and\ \bibinfo {author} {\bibfnamefont {Y.}~\bibnamefont {Xie}},\ }in\ \href {https://doi.org/10.1145/3477206.3477464} {\emph {\bibinfo {booktitle} {ICCAD '21: IEEE/ACM International Conference on Computer-Aided Design}}}\ (\bibinfo {year} {2021})\BibitemShut {NoStop}%
\bibitem [{\citenamefont {Tsvelikhovskiy}\ \emph {et~al.}(2023)\citenamefont {Tsvelikhovskiy}, \citenamefont {Safro},\ and\ \citenamefont {Alexeev}}]{tsvelikhovskiy2024symmetries}%
  \BibitemOpen
  \bibfield  {author} {\bibinfo {author} {\bibfnamefont {B.}~\bibnamefont {Tsvelikhovskiy}}, \bibinfo {author} {\bibfnamefont {I.}~\bibnamefont {Safro}},\ and\ \bibinfo {author} {\bibfnamefont {Y.}~\bibnamefont {Alexeev}},\ }\href@noop {} {\bibinfo {title} {Symmetries and dimension reduction in quantum approximate optimization algorithm}} (\bibinfo {year} {2023}),\ \Eprint {https://arxiv.org/abs/2309.13787} {arXiv:2309.13787 [quant-ph]} \BibitemShut {NoStop}%
\bibitem [{\citenamefont {Onah}\ and\ \citenamefont {Michielsen}(2026{\natexlab{b}})}]{onah2026cvrp}%
  \BibitemOpen
  \bibfield  {author} {\bibinfo {author} {\bibfnamefont {C.}~\bibnamefont {Onah}}\ and\ \bibinfo {author} {\bibfnamefont {K.}~\bibnamefont {Michielsen}},\ }\href@noop {} {\bibinfo {title} {Optimal, qubit-efficient quantum vehicle routing via colored-permutations}} (\bibinfo {year} {2026}{\natexlab{b}}),\ \Eprint {https://arxiv.org/abs/2604.04570} {arXiv:2604.04570 [quant-ph]} \BibitemShut {NoStop}%
\bibitem [{\citenamefont {Toth}\ and\ \citenamefont {Vigo}(2014)}]{TothVigo2014VRP}%
  \BibitemOpen
  \bibinfo {editor} {\bibfnamefont {P.}~\bibnamefont {Toth}}\ and\ \bibinfo {editor} {\bibfnamefont {D.}~\bibnamefont {Vigo}},\ eds.,\ \href {https://doi.org/10.1137/1.9781611973587} {\emph {\bibinfo {title} {Vehicle routing: Problems, methods, and applications}}},\ \bibinfo {edition} {2nd}\ ed.\ (\bibinfo  {publisher} {SIAM},\ \bibinfo {address} {Philadelphia},\ \bibinfo {year} {2014})\BibitemShut {NoStop}%
\bibitem [{\citenamefont {Cook}\ \emph {et~al.}(2020)\citenamefont {Cook}, \citenamefont {Eidenbenz},\ and\ \citenamefont {B{\"a}rtschi}}]{Cook2019MaxKVertexCover}%
  \BibitemOpen
  \bibfield  {author} {\bibinfo {author} {\bibfnamefont {J.}~\bibnamefont {Cook}}, \bibinfo {author} {\bibfnamefont {S.}~\bibnamefont {Eidenbenz}},\ and\ \bibinfo {author} {\bibfnamefont {A.}~\bibnamefont {B{\"a}rtschi}},\ }in\ \href {https://doi.org/10.1109/QCE49297.2020.00021} {\emph {\bibinfo {booktitle} {2020 IEEE International Conference on Quantum Computing and Engineering (QCE)}}}\ (\bibinfo  {publisher} {IEEE},\ \bibinfo {year} {2020})\ pp.\ \bibinfo {pages} {83--92}\BibitemShut {NoStop}%
\bibitem [{\citenamefont {B{\"a}rtschi}\ and\ \citenamefont {Eidenbenz}(2019)}]{B_rtschi_2019}%
  \BibitemOpen
  \bibfield  {author} {\bibinfo {author} {\bibfnamefont {A.}~\bibnamefont {B{\"a}rtschi}}\ and\ \bibinfo {author} {\bibfnamefont {S.}~\bibnamefont {Eidenbenz}},\ }in\ \href {https://doi.org/10.1007/978-3-030-25027-0_9} {\emph {\bibinfo {booktitle} {Fundamentals of Computation Theory}}}\ (\bibinfo  {publisher} {Springer International Publishing},\ \bibinfo {address} {Cham},\ \bibinfo {year} {2019})\ pp.\ \bibinfo {pages} {126--139}\BibitemShut {NoStop}%
\bibitem [{\citenamefont {Kordonowy}\ and\ \citenamefont {Leipold}(2026)}]{KordonowyLeipold2026XYLie}%
  \BibitemOpen
  \bibfield  {author} {\bibinfo {author} {\bibfnamefont {S.}~\bibnamefont {Kordonowy}}\ and\ \bibinfo {author} {\bibfnamefont {H.}~\bibnamefont {Leipold}},\ }\href {https://doi.org/10.1038/s41534-026-01192-4} {\bibfield  {journal} {\bibinfo  {journal} {npj Quantum Inf.}\ }\textbf {\bibinfo {volume} {12}},\ \bibinfo {pages} {61} (\bibinfo {year} {2026})}\BibitemShut {NoStop}%
\bibitem [{\citenamefont {Leipold}\ \emph {et~al.}(2026)\citenamefont {Leipold}, \citenamefont {Spedalieri}, \citenamefont {Hadfield},\ and\ \citenamefont {Rieffel}}]{leipold2026imposing}%
  \BibitemOpen
  \bibfield  {author} {\bibinfo {author} {\bibfnamefont {H.}~\bibnamefont {Leipold}}, \bibinfo {author} {\bibfnamefont {F.}~\bibnamefont {Spedalieri}}, \bibinfo {author} {\bibfnamefont {S.}~\bibnamefont {Hadfield}},\ and\ \bibinfo {author} {\bibfnamefont {E.~G.}\ \bibnamefont {Rieffel}},\ }\bibfield  {journal} {\bibinfo  {journal} {ACM Trans. Quantum Comput.}\ }\href {https://doi.org/10.1145/3821414} {10.1145/3821414} (\bibinfo {year} {2026})\BibitemShut {NoStop}%
\bibitem [{\citenamefont {He}\ \emph {et~al.}(2023)\citenamefont {He}, \citenamefont {Shaydulin}, \citenamefont {Chakrabarti}, \citenamefont {Herman}, \citenamefont {Li}, \citenamefont {Sun},\ and\ \citenamefont {Pistoia}}]{He2023A}%
  \BibitemOpen
  \bibfield  {author} {\bibinfo {author} {\bibfnamefont {Z.}~\bibnamefont {He}}, \bibinfo {author} {\bibfnamefont {R.}~\bibnamefont {Shaydulin}}, \bibinfo {author} {\bibfnamefont {S.}~\bibnamefont {Chakrabarti}}, \bibinfo {author} {\bibfnamefont {D.}~\bibnamefont {Herman}}, \bibinfo {author} {\bibfnamefont {C.}~\bibnamefont {Li}}, \bibinfo {author} {\bibfnamefont {Y.}~\bibnamefont {Sun}},\ and\ \bibinfo {author} {\bibfnamefont {M.}~\bibnamefont {Pistoia}},\ }\bibfield  {journal} {\bibinfo  {journal} {npj Quantum Inf.}\ }\textbf {\bibinfo {volume} {9}},\ \href {https://doi.org/10.1038/s41534-023-00787-5} {10.1038/s41534-023-00787-5} (\bibinfo {year} {2023})\BibitemShut {NoStop}%
\bibitem [{\citenamefont {Awasthi}\ \emph {et~al.}(2026)\citenamefont {Awasthi}, \citenamefont {Hess}, \citenamefont {Lomadze}, \citenamefont {B{\"a}r},\ and\ \citenamefont {Biefel}}]{awasthi2026constraintpreservingxymixerstrotterized}%
  \BibitemOpen
  \bibfield  {author} {\bibinfo {author} {\bibfnamefont {A.}~\bibnamefont {Awasthi}}, \bibinfo {author} {\bibfnamefont {M.}~\bibnamefont {Hess}}, \bibinfo {author} {\bibfnamefont {S.}~\bibnamefont {Lomadze}}, \bibinfo {author} {\bibfnamefont {F.}~\bibnamefont {B{\"a}r}},\ and\ \bibinfo {author} {\bibfnamefont {C.}~\bibnamefont {Biefel}},\ }\href@noop {} {\bibinfo {title} {Constraint preserving {$XY$}-mixers under {Trotterized} adiabatic evolution}} (\bibinfo {year} {2026}),\ \Eprint {https://arxiv.org/abs/2605.02465} {arXiv:2605.02465 [quant-ph]} \BibitemShut {NoStop}%
\bibitem [{\citenamefont {Ruan}\ \emph {et~al.}(2025)\citenamefont {Ruan}, \citenamefont {Chen}, \citenamefont {Li}, \citenamefont {Yang}, \citenamefont {Yuan}, \citenamefont {Xue}, \citenamefont {Li},\ and\ \citenamefont {Liu}}]{Ruan2025XYCD}%
  \BibitemOpen
  \bibfield  {author} {\bibinfo {author} {\bibfnamefont {Y.}~\bibnamefont {Ruan}}, \bibinfo {author} {\bibfnamefont {P.}~\bibnamefont {Chen}}, \bibinfo {author} {\bibfnamefont {Q.}~\bibnamefont {Li}}, \bibinfo {author} {\bibfnamefont {L.}~\bibnamefont {Yang}}, \bibinfo {author} {\bibfnamefont {Z.}~\bibnamefont {Yuan}}, \bibinfo {author} {\bibfnamefont {X.}~\bibnamefont {Xue}}, \bibinfo {author} {\bibfnamefont {X.}~\bibnamefont {Li}},\ and\ \bibinfo {author} {\bibfnamefont {Z.}~\bibnamefont {Liu}},\ }\href {https://doi.org/10.1103/PhysRevResearch.7.013243} {\bibfield  {journal} {\bibinfo  {journal} {Phys. Rev. Res.}\ }\textbf {\bibinfo {volume} {7}},\ \bibinfo {pages} {013243} (\bibinfo {year} {2025})}\BibitemShut {NoStop}%
\bibitem [{\citenamefont {Yu}\ \emph {et~al.}(2026)\citenamefont {Yu}, \citenamefont {Muleady}, \citenamefont {Wang}, \citenamefont {Schine}, \citenamefont {Gorshkov},\ and\ \citenamefont {Childs}}]{yu2026efficient}%
  \BibitemOpen
  \bibfield  {author} {\bibinfo {author} {\bibfnamefont {J.}~\bibnamefont {Yu}}, \bibinfo {author} {\bibfnamefont {S.~R.}\ \bibnamefont {Muleady}}, \bibinfo {author} {\bibfnamefont {Y.-X.}\ \bibnamefont {Wang}}, \bibinfo {author} {\bibfnamefont {N.}~\bibnamefont {Schine}}, \bibinfo {author} {\bibfnamefont {A.~V.}\ \bibnamefont {Gorshkov}},\ and\ \bibinfo {author} {\bibfnamefont {A.~M.}\ \bibnamefont {Childs}},\ }\href {https://doi.org/10.1103/9gjk-rgql} {\bibfield  {journal} {\bibinfo  {journal} {Phys. Rev. Lett.}\ }\textbf {\bibinfo {volume} {136}},\ \bibinfo {pages} {030601} (\bibinfo {year} {2026})}\BibitemShut {NoStop}%
\bibitem [{\citenamefont {Garey}\ and\ \citenamefont {Johnson}(1979)}]{GareyJohnson1979}%
  \BibitemOpen
  \bibfield  {author} {\bibinfo {author} {\bibfnamefont {M.~R.}\ \bibnamefont {Garey}}\ and\ \bibinfo {author} {\bibfnamefont {D.~S.}\ \bibnamefont {Johnson}},\ }\href@noop {} {\emph {\bibinfo {title} {Computers and intractability: A guide to the theory of {NP}-completeness}}}\ (\bibinfo  {publisher} {W.~H.~Freeman and Company},\ \bibinfo {address} {New York},\ \bibinfo {year} {1979})\BibitemShut {NoStop}%
\bibitem [{\citenamefont {Loiola}\ \emph {et~al.}(2007)\citenamefont {Loiola}, \citenamefont {de~Abreu}, \citenamefont {Boaventura-Netto}, \citenamefont {Hahn},\ and\ \citenamefont {Querido}}]{Loiola2007QAPSurvey}%
  \BibitemOpen
  \bibfield  {author} {\bibinfo {author} {\bibfnamefont {E.~M.}\ \bibnamefont {Loiola}}, \bibinfo {author} {\bibfnamefont {N.~M.~M.}\ \bibnamefont {de~Abreu}}, \bibinfo {author} {\bibfnamefont {P.~O.}\ \bibnamefont {Boaventura-Netto}}, \bibinfo {author} {\bibfnamefont {P.}~\bibnamefont {Hahn}},\ and\ \bibinfo {author} {\bibfnamefont {T.}~\bibnamefont {Querido}},\ }\href {https://doi.org/10.1016/j.ejor.2005.09.032} {\bibfield  {journal} {\bibinfo  {journal} {Eur. J. Oper. Res.}\ }\textbf {\bibinfo {volume} {176}},\ \bibinfo {pages} {657} (\bibinfo {year} {2007})}\BibitemShut {NoStop}%
\bibitem [{\citenamefont {Onah}\ \emph {et~al.}(2025{\natexlab{b}})\citenamefont {Onah}, \citenamefont {Misciasci}, \citenamefont {Othmer},\ and\ \citenamefont {Michielsen}}]{onah2025waas}%
  \BibitemOpen
  \bibfield  {author} {\bibinfo {author} {\bibfnamefont {C.}~\bibnamefont {Onah}}, \bibinfo {author} {\bibfnamefont {N.}~\bibnamefont {Misciasci}}, \bibinfo {author} {\bibfnamefont {C.}~\bibnamefont {Othmer}},\ and\ \bibinfo {author} {\bibfnamefont {K.}~\bibnamefont {Michielsen}},\ }in\ \href {https://doi.org/10.1109/QCE65121.2025.00235} {\emph {\bibinfo {booktitle} {2025 IEEE International Conference on Quantum Computing and Engineering (QCE)}}},\ Vol.~\bibinfo {volume} {01}\ (\bibinfo {year} {2025})\ pp.\ \bibinfo {pages} {2149--2160}\BibitemShut {NoStop}%
\bibitem [{\citenamefont {Sahni}\ and\ \citenamefont {Gonzalez}(1976)}]{SahniGonzalez1976GAP}%
  \BibitemOpen
  \bibfield  {author} {\bibinfo {author} {\bibfnamefont {S.}~\bibnamefont {Sahni}}\ and\ \bibinfo {author} {\bibfnamefont {T.}~\bibnamefont {Gonzalez}},\ }\href {https://doi.org/10.1145/321958.321975} {\bibfield  {journal} {\bibinfo  {journal} {J. ACM}\ }\textbf {\bibinfo {volume} {23}},\ \bibinfo {pages} {555} (\bibinfo {year} {1976})}\BibitemShut {NoStop}%
\bibitem [{\citenamefont {Karp}(1972)}]{Karp1972}%
  \BibitemOpen
  \bibfield  {author} {\bibinfo {author} {\bibfnamefont {R.~M.}\ \bibnamefont {Karp}},\ }in\ \href@noop {} {\emph {\bibinfo {booktitle} {Complexity of Computer Computations}}},\ \bibinfo {editor} {edited by\ \bibinfo {editor} {\bibfnamefont {R.~E.}\ \bibnamefont {Miller}}\ and\ \bibinfo {editor} {\bibfnamefont {J.~W.}\ \bibnamefont {Thatcher}}}\ (\bibinfo  {publisher} {Plenum},\ \bibinfo {address} {New York},\ \bibinfo {year} {1972})\ pp.\ \bibinfo {pages} {85--103}\BibitemShut {NoStop}%
\bibitem [{\citenamefont {Michielsen}\ \emph {et~al.}(2026)\citenamefont {Michielsen}, \citenamefont {Hadfield},\ and\ \citenamefont {Onah}}]{Michielsen2026GeometryInterferenceData}%
  \BibitemOpen
  \bibfield  {author} {\bibinfo {author} {\bibfnamefont {K.}~\bibnamefont {Michielsen}}, \bibinfo {author} {\bibfnamefont {S.}~\bibnamefont {Hadfield}},\ and\ \bibinfo {author} {\bibfnamefont {C.}~\bibnamefont {Onah}},\ }\href {https://doi.org/10.5281/zenodo.21302533} {\bibinfo {title} {Data for ``{Separating Geometry From Interference in Constrained Quantum Optimization}''}},\ \bibinfo {howpublished} {Zenodo dataset} (\bibinfo {year} {2026}),\ \bibinfo {note} {doi: 10.5281/zenodo.21302533}\BibitemShut {NoStop}%
\bibitem [{\citenamefont {Fran{\c{c}}a}(2018)}]{Franca2018PerfectSampling}%
  \BibitemOpen
  \bibfield  {author} {\bibinfo {author} {\bibfnamefont {D.~S.}\ \bibnamefont {Fran{\c{c}}a}},\ }\href {https://doi.org/10.26421/QIC18.5-6-1} {\bibfield  {journal} {\bibinfo  {journal} {Quantum Inf. Comput.}\ }\textbf {\bibinfo {volume} {18}},\ \bibinfo {pages} {361} (\bibinfo {year} {2018})}\BibitemShut {NoStop}%
\bibitem [{\citenamefont {Quetschlich}\ \emph {et~al.}(2024)\citenamefont {Quetschlich}, \citenamefont {Kiwit}, \citenamefont {Wolf}, \citenamefont {Riofrio}, \citenamefont {Burgholzer}, \citenamefont {Luckow},\ and\ \citenamefont {Wille}}]{Quetschlich2024ApplicationAware}%
  \BibitemOpen
  \bibfield  {author} {\bibinfo {author} {\bibfnamefont {N.}~\bibnamefont {Quetschlich}}, \bibinfo {author} {\bibfnamefont {F.~J.}\ \bibnamefont {Kiwit}}, \bibinfo {author} {\bibfnamefont {M.~A.}\ \bibnamefont {Wolf}}, \bibinfo {author} {\bibfnamefont {C.~A.}\ \bibnamefont {Riofrio}}, \bibinfo {author} {\bibfnamefont {L.}~\bibnamefont {Burgholzer}}, \bibinfo {author} {\bibfnamefont {A.}~\bibnamefont {Luckow}},\ and\ \bibinfo {author} {\bibfnamefont {R.}~\bibnamefont {Wille}},\ }in\ \href {https://doi.org/10.1109/QSW62656.2024.00028} {\emph {\bibinfo {booktitle} {2024 IEEE International Conference on Quantum Software}}}\ (\bibinfo {year} {2024})\ pp.\ \bibinfo {pages} {135--142}\BibitemShut {NoStop}%
\bibitem [{\citenamefont {Ji}\ \emph {et~al.}(2025)\citenamefont {Ji}, \citenamefont {Chen}, \citenamefont {Polian},\ and\ \citenamefont {Ban}}]{Ji2025AOQMAP}%
  \BibitemOpen
  \bibfield  {author} {\bibinfo {author} {\bibfnamefont {Y.}~\bibnamefont {Ji}}, \bibinfo {author} {\bibfnamefont {X.}~\bibnamefont {Chen}}, \bibinfo {author} {\bibfnamefont {I.}~\bibnamefont {Polian}},\ and\ \bibinfo {author} {\bibfnamefont {Y.}~\bibnamefont {Ban}},\ }\href {https://doi.org/10.1103/PhysRevApplied.23.034022} {\bibfield  {journal} {\bibinfo  {journal} {Phys. Rev. Applied}\ }\textbf {\bibinfo {volume} {23}},\ \bibinfo {pages} {034022} (\bibinfo {year} {2025})}\BibitemShut {NoStop}%
\bibitem [{\citenamefont {Zhu}\ \emph {et~al.}(2024)\citenamefont {Zhu}, \citenamefont {Zhou}, \citenamefont {Cheng}, \citenamefont {Jin}, \citenamefont {Li}, \citenamefont {Niu},\ and\ \citenamefont {Liang}}]{Zhu2024Coqa}%
  \BibitemOpen
  \bibfield  {author} {\bibinfo {author} {\bibfnamefont {Y.}~\bibnamefont {Zhu}}, \bibinfo {author} {\bibfnamefont {Y.}~\bibnamefont {Zhou}}, \bibinfo {author} {\bibfnamefont {J.}~\bibnamefont {Cheng}}, \bibinfo {author} {\bibfnamefont {Y.}~\bibnamefont {Jin}}, \bibinfo {author} {\bibfnamefont {B.}~\bibnamefont {Li}}, \bibinfo {author} {\bibfnamefont {S.}~\bibnamefont {Niu}},\ and\ \bibinfo {author} {\bibfnamefont {Z.}~\bibnamefont {Liang}},\ }\href@noop {} {\bibinfo {title} {{CoQA}: Blazing fast compiler optimizations for {QAOA}}} (\bibinfo {year} {2024}),\ \Eprint {https://arxiv.org/abs/2408.08365} {arXiv:2408.08365 [quant-ph]} \BibitemShut {NoStop}%
\bibitem [{\citenamefont {Maciejewski}\ \emph {et~al.}(2025)\citenamefont {Maciejewski}, \citenamefont {Biamonte}, \citenamefont {Hadfield},\ and\ \citenamefont {Venturelli}}]{Maciejewski_2025}%
  \BibitemOpen
  \bibfield  {author} {\bibinfo {author} {\bibfnamefont {F.~B.}\ \bibnamefont {Maciejewski}}, \bibinfo {author} {\bibfnamefont {J.}~\bibnamefont {Biamonte}}, \bibinfo {author} {\bibfnamefont {S.}~\bibnamefont {Hadfield}},\ and\ \bibinfo {author} {\bibfnamefont {D.}~\bibnamefont {Venturelli}},\ }\href {https://doi.org/10.22331/q-2025-11-06-1906} {\bibfield  {journal} {\bibinfo  {journal} {Quantum}\ }\textbf {\bibinfo {volume} {9}},\ \bibinfo {pages} {1906} (\bibinfo {year} {2025})}\BibitemShut {NoStop}%
\bibitem [{\citenamefont {Hadfield}\ \emph {et~al.}(2026)\citenamefont {Hadfield}, \citenamefont {Maciejewski},\ and\ \citenamefont {Venturelli}}]{hadfield2026ndar}%
  \BibitemOpen
  \bibfield  {author} {\bibinfo {author} {\bibfnamefont {S.}~\bibnamefont {Hadfield}}, \bibinfo {author} {\bibfnamefont {F.~B.}\ \bibnamefont {Maciejewski}},\ and\ \bibinfo {author} {\bibfnamefont {D.}~\bibnamefont {Venturelli}},\ }\href@noop {} {\bibinfo {title} {Noise-directed adaptive remapping for integer optimization: From qubits to (encoded) qudits}} (\bibinfo {year} {2026}),\ \Eprint {https://arxiv.org/abs/2606.28234} {arXiv:2606.28234 [quant-ph]} \BibitemShut {NoStop}%
\bibitem [{\citenamefont {Maciejewski}\ \emph {et~al.}(2026)\citenamefont {Maciejewski}, \citenamefont {Hadfield}, \citenamefont {Wallis}, \citenamefont {Pennington}, \citenamefont {Brandhofer}, \citenamefont {Woerner}, \citenamefont {Egger},\ and\ \citenamefont {Venturelli}}]{Maciejewski2026}%
  \BibitemOpen
  \bibfield  {author} {\bibinfo {author} {\bibfnamefont {F.}~\bibnamefont {Maciejewski}}, \bibinfo {author} {\bibfnamefont {S.}~\bibnamefont {Hadfield}}, \bibinfo {author} {\bibfnamefont {O.}~\bibnamefont {Wallis}}, \bibinfo {author} {\bibfnamefont {G.}~\bibnamefont {Pennington}}, \bibinfo {author} {\bibfnamefont {S.}~\bibnamefont {Brandhofer}}, \bibinfo {author} {\bibfnamefont {S.}~\bibnamefont {Woerner}}, \bibinfo {author} {\bibfnamefont {D.~J.}\ \bibnamefont {Egger}},\ and\ \bibinfo {author} {\bibfnamefont {D.}~\bibnamefont {Venturelli}},\ }\href@noop {} {\bibinfo {title} {Quantum approximate optimization via noise-directed adaptive warm-starting}} (\bibinfo {year} {2026}),\ \Eprint {https://arxiv.org/abs/2607.09368} {arXiv:2607.09368} \BibitemShut {NoStop}%
\bibitem [{\citenamefont {Chi}\ \emph {et~al.}(2022)\citenamefont {Chi}, \citenamefont {Huang}, \citenamefont {Zhang}, \citenamefont {Mao}, \citenamefont {Zhou}, \citenamefont {Chen}, \citenamefont {Zhai}, \citenamefont {Bao}, \citenamefont {Dai}, \citenamefont {Yuan} \emph {et~al.}}]{chi2022programmable}%
  \BibitemOpen
  \bibfield  {author} {\bibinfo {author} {\bibfnamefont {Y.}~\bibnamefont {Chi}}, \bibinfo {author} {\bibfnamefont {J.}~\bibnamefont {Huang}}, \bibinfo {author} {\bibfnamefont {Z.}~\bibnamefont {Zhang}}, \bibinfo {author} {\bibfnamefont {J.}~\bibnamefont {Mao}}, \bibinfo {author} {\bibfnamefont {Z.}~\bibnamefont {Zhou}}, \bibinfo {author} {\bibfnamefont {X.}~\bibnamefont {Chen}}, \bibinfo {author} {\bibfnamefont {C.}~\bibnamefont {Zhai}}, \bibinfo {author} {\bibfnamefont {J.}~\bibnamefont {Bao}}, \bibinfo {author} {\bibfnamefont {T.}~\bibnamefont {Dai}}, \bibinfo {author} {\bibfnamefont {H.}~\bibnamefont {Yuan}}, \emph {et~al.},\ }\href {https://doi.org/10.1038/s41467-022-28767-x} {\bibfield  {journal} {\bibinfo  {journal} {Nat. Commun.}\ }\textbf {\bibinfo {volume} {13}},\ \bibinfo {pages} {1166} (\bibinfo {year} {2022})}\BibitemShut {NoStop}%
\bibitem [{\citenamefont {Ringbauer}\ \emph {et~al.}(2022)\citenamefont {Ringbauer}, \citenamefont {Meth}, \citenamefont {Postler}, \citenamefont {Stricker}, \citenamefont {Pogorelov}, \citenamefont {Lanyon}, \citenamefont {Blatt},\ and\ \citenamefont {Monz}}]{Ringbauer2022UniversalQudit}%
  \BibitemOpen
  \bibfield  {author} {\bibinfo {author} {\bibfnamefont {M.}~\bibnamefont {Ringbauer}}, \bibinfo {author} {\bibfnamefont {M.}~\bibnamefont {Meth}}, \bibinfo {author} {\bibfnamefont {L.}~\bibnamefont {Postler}}, \bibinfo {author} {\bibfnamefont {R.}~\bibnamefont {Stricker}}, \bibinfo {author} {\bibfnamefont {I.}~\bibnamefont {Pogorelov}}, \bibinfo {author} {\bibfnamefont {B.~P.}\ \bibnamefont {Lanyon}}, \bibinfo {author} {\bibfnamefont {R.}~\bibnamefont {Blatt}},\ and\ \bibinfo {author} {\bibfnamefont {T.}~\bibnamefont {Monz}},\ }\href {https://doi.org/10.1038/s41567-022-01640-6} {\bibfield  {journal} {\bibinfo  {journal} {Nat. Phys.}\ }\textbf {\bibinfo {volume} {18}},\ \bibinfo {pages} {1053} (\bibinfo {year} {2022})}\BibitemShut {NoStop}%
\bibitem [{\citenamefont {Nguyen}\ \emph {et~al.}(2024)\citenamefont {Nguyen}, \citenamefont {Goss}, \citenamefont {Siva}, \citenamefont {Kim}, \citenamefont {Younis}, \citenamefont {Qing}, \citenamefont {Hashim}, \citenamefont {Santiago},\ and\ \citenamefont {Siddiqi}}]{nguyen2024empowering}%
  \BibitemOpen
  \bibfield  {author} {\bibinfo {author} {\bibfnamefont {L.~B.}\ \bibnamefont {Nguyen}}, \bibinfo {author} {\bibfnamefont {N.}~\bibnamefont {Goss}}, \bibinfo {author} {\bibfnamefont {K.}~\bibnamefont {Siva}}, \bibinfo {author} {\bibfnamefont {Y.}~\bibnamefont {Kim}}, \bibinfo {author} {\bibfnamefont {E.}~\bibnamefont {Younis}}, \bibinfo {author} {\bibfnamefont {B.}~\bibnamefont {Qing}}, \bibinfo {author} {\bibfnamefont {A.}~\bibnamefont {Hashim}}, \bibinfo {author} {\bibfnamefont {D.~I.}\ \bibnamefont {Santiago}},\ and\ \bibinfo {author} {\bibfnamefont {I.}~\bibnamefont {Siddiqi}},\ }\href {https://doi.org/10.1038/s41467-024-51434-2} {\bibfield  {journal} {\bibinfo  {journal} {Nat. Commun.}\ }\textbf {\bibinfo {volume} {15}},\ \bibinfo {pages} {7117} (\bibinfo {year} {2024})}\BibitemShut {NoStop}%
\bibitem [{\citenamefont {Kim}\ \emph {et~al.}(2025)\citenamefont {Kim}, \citenamefont {Roy}, \citenamefont {You}, \citenamefont {Li}, \citenamefont {Lamm}, \citenamefont {Pronitchev}, \citenamefont {Bal}, \citenamefont {Garattoni}, \citenamefont {Crisa}, \citenamefont {Bafia} \emph {et~al.}}]{kim2025ultracoherent}%
  \BibitemOpen
  \bibfield  {author} {\bibinfo {author} {\bibfnamefont {T.}~\bibnamefont {Kim}}, \bibinfo {author} {\bibfnamefont {T.}~\bibnamefont {Roy}}, \bibinfo {author} {\bibfnamefont {X.}~\bibnamefont {You}}, \bibinfo {author} {\bibfnamefont {A.~C.}\ \bibnamefont {Li}}, \bibinfo {author} {\bibfnamefont {H.}~\bibnamefont {Lamm}}, \bibinfo {author} {\bibfnamefont {O.}~\bibnamefont {Pronitchev}}, \bibinfo {author} {\bibfnamefont {M.}~\bibnamefont {Bal}}, \bibinfo {author} {\bibfnamefont {S.}~\bibnamefont {Garattoni}}, \bibinfo {author} {\bibfnamefont {F.}~\bibnamefont {Crisa}}, \bibinfo {author} {\bibfnamefont {D.}~\bibnamefont {Bafia}}, \emph {et~al.},\ }\href@noop {} {\bibinfo {title} {Ultracoherent superconducting cavity-based multiqudit platform with error-resilient control}} (\bibinfo {year} {2025}),\ \Eprint {https://arxiv.org/abs/2506.03286} {arXiv:2506.03286} \BibitemShut {NoStop}%
\bibitem [{\citenamefont {Venturelli}\ \emph {et~al.}(2025)\citenamefont {Venturelli}, \citenamefont {Gustafson}, \citenamefont {Kurkcuoglu},\ and\ \citenamefont {Zorzetti}}]{venturelli2025near}%
  \BibitemOpen
  \bibfield  {author} {\bibinfo {author} {\bibfnamefont {D.}~\bibnamefont {Venturelli}}, \bibinfo {author} {\bibfnamefont {E.}~\bibnamefont {Gustafson}}, \bibinfo {author} {\bibfnamefont {D.}~\bibnamefont {Kurkcuoglu}},\ and\ \bibinfo {author} {\bibfnamefont {S.}~\bibnamefont {Zorzetti}},\ }\href@noop {} {\bibinfo {title} {Near-term application engineering challenges in emerging superconducting qudit processors}} (\bibinfo {year} {2025}),\ \Eprint {https://arxiv.org/abs/2506.05608} {arXiv:2506.05608} \BibitemShut {NoStop}%
\bibitem [{\citenamefont {Achlioptas}\ and\ \citenamefont {Coja-Oghlan}(2008)}]{AchlioptasCojaOghlan2008}%
  \BibitemOpen
  \bibfield  {author} {\bibinfo {author} {\bibfnamefont {D.}~\bibnamefont {Achlioptas}}\ and\ \bibinfo {author} {\bibfnamefont {A.}~\bibnamefont {Coja-Oghlan}},\ }in\ \href {https://doi.org/10.1109/FOCS.2008.11} {\emph {\bibinfo {booktitle} {Proceedings of the 49th Annual IEEE Symposium on Foundations of Computer Science}}}\ (\bibinfo  {publisher} {IEEE Computer Society},\ \bibinfo {year} {2008})\ pp.\ \bibinfo {pages} {793--802}\BibitemShut {NoStop}%
\bibitem [{\citenamefont {Coja-Oghlan}\ and\ \citenamefont {Vilenchik}(2016)}]{CojaOghlanVilenchik2016}%
  \BibitemOpen
  \bibfield  {author} {\bibinfo {author} {\bibfnamefont {A.}~\bibnamefont {Coja-Oghlan}}\ and\ \bibinfo {author} {\bibfnamefont {D.}~\bibnamefont {Vilenchik}},\ }\href {https://doi.org/10.1093/imrn/rnv333} {\bibfield  {journal} {\bibinfo  {journal} {Int. Math. Res. Not.}\ }\textbf {\bibinfo {volume} {2016}},\ \bibinfo {pages} {5801} (\bibinfo {year} {2016})}\BibitemShut {NoStop}%
\bibitem [{\citenamefont {Cosmadakis}\ and\ \citenamefont {Papadimitriou}(1984)}]{CosmadakisPapadimitriou1984}%
  \BibitemOpen
  \bibfield  {author} {\bibinfo {author} {\bibfnamefont {S.~S.}\ \bibnamefont {Cosmadakis}}\ and\ \bibinfo {author} {\bibfnamefont {C.~H.}\ \bibnamefont {Papadimitriou}},\ }\href {https://doi.org/10.1137/0213007} {\bibfield  {journal} {\bibinfo  {journal} {SIAM J. Comput.}\ }\textbf {\bibinfo {volume} {13}},\ \bibinfo {pages} {99} (\bibinfo {year} {1984})}\BibitemShut {NoStop}%
\bibitem [{\citenamefont {Mnich}\ and\ \citenamefont {Wiese}(2015)}]{MnichWiese2015}%
  \BibitemOpen
  \bibfield  {author} {\bibinfo {author} {\bibfnamefont {M.}~\bibnamefont {Mnich}}\ and\ \bibinfo {author} {\bibfnamefont {A.}~\bibnamefont {Wiese}},\ }\href {https://doi.org/10.1007/s10107-014-0830-9} {\bibfield  {journal} {\bibinfo  {journal} {Math. Program.}\ }\textbf {\bibinfo {volume} {154}},\ \bibinfo {pages} {533} (\bibinfo {year} {2015})}\BibitemShut {NoStop}%
\bibitem [{\citenamefont {Knop}\ and\ \citenamefont {Kouteck{\'y}}(2018)}]{KnopKoutecky2018}%
  \BibitemOpen
  \bibfield  {author} {\bibinfo {author} {\bibfnamefont {D.}~\bibnamefont {Knop}}\ and\ \bibinfo {author} {\bibfnamefont {M.}~\bibnamefont {Kouteck{\'y}}},\ }\href {https://doi.org/10.1007/s10951-017-0550-0} {\bibfield  {journal} {\bibinfo  {journal} {J. Sched.}\ }\textbf {\bibinfo {volume} {21}},\ \bibinfo {pages} {493} (\bibinfo {year} {2018})}\BibitemShut {NoStop}%
\bibitem [{\citenamefont {Bose}\ and\ \citenamefont {Mesner}(1959)}]{BoseMesner1959}%
  \BibitemOpen
  \bibfield  {author} {\bibinfo {author} {\bibfnamefont {R.~C.}\ \bibnamefont {Bose}}\ and\ \bibinfo {author} {\bibfnamefont {D.~M.}\ \bibnamefont {Mesner}},\ }\href {https://doi.org/10.1214/aoms/1177706356} {\bibfield  {journal} {\bibinfo  {journal} {Ann. Math. Stat.}\ }\textbf {\bibinfo {volume} {30}},\ \bibinfo {pages} {21} (\bibinfo {year} {1959})}\BibitemShut {NoStop}%
\end{thebibliography}%
\end{document}